\colorlet{green}{green!50!black}
\newcommand{\@chapapp}{\relax}%
\def\checkmark{\ding{52}}
\def\xmark{\ding{55}}
\def\halfcheckmark{\ding{52}$^{\text{\bf \O}}$}
\def\Qi{\hyperref[quest:q1]{Q1}}
\def\Qii{\hyperref[quest:q2]{Q2}}
\def\Qiii{\hyperref[quest:q3]{Q3}}
\def\Qiv{\hyperref[quest:q4]{Q4}}
\def\Qv{\hyperref[quest:q5]{Q5}}
\def\Qvi{\hyperref[quest:q6]{Q6}}
\declaretheorem[name=Claim,numbered=no]{cclaim}
\newcommand{\arcs}{\textsc{Arc Restrictions for Strict Core}}
\newcommand{\ec}{\textsc{Arc in Core}}
\newcommand{\fec}{\textsc{Forbidden Arc in Core}}
\newcommand{\SR}{\textsc{Stable Roommates}}
\newcommand{\acycpart}{\textsc{Acyclic Partition}}
\newcommand{\algoSR}{\textup{SR-Improve}}
\newcommand{\algoHM}{\textup{HM-Improve}}
\newcommand{\GG}{\widetilde{G}}
\newcommand{\qq}{\widetilde{q}}
\newcommand{\QQ}{\widetilde{Q}}
\newcommand{\NN}{\widetilde{N}}
\newcommand{\HH}{\widetilde{H}}
\newcommand{\EE}{\widetilde{E}}
\newcommand{\OPT}{\textup{OPT}}
\newcommand{\leteq}{\vcentcolon =}
\begin{document}

\author{Ildik\'o Schlotter\inst{1,2} \and P\'eter Bir\'o\inst{1,3} \and Tam\'as Fleiner\inst{1,2}}

\title{
The core of housing markets from an agent's perspective:\\ Is it worth sprucing up your home?
}
\institute{
Centre for Economic and Regional Studies, Budapest, Hungary, \\
\email{\{schlotter.ildiko,biro.peter\}@krtk.hun-ren.hu}
\and
Budapest University of Technology and Economics, Budapest,  Hungary \\
\email{fleiner.tamas@vik.bme.hu}
\and
Corvinus University of Budapest, Budapest, Hungary}

\maketitle

\begin{abstract}
We study housing markets as introduced by Shapley and Scarf (1974).
We investigate the computational complexity of various questions regarding the situation of an agent~$a$ in a housing market~$H$:
we show that it is $\mathsf{NP}$-hard to find an allocation in the core of $H$ where (i) $a$ receives a certain house, 
(ii) $a$ does not receive a certain house, or (iii) $a$ receives a house other than her own.
We prove that the core of housing markets \emph{respects improvement} in the following sense: 
given an allocation in the core of~$H$ where agent~$a$ receives a house~$h$, if the value of the house owned by~$a$ increases, then 
the resulting housing market admits an allocation in its core in which $a$ receives either~$h$ or a house that $a$ prefers to~$h$;
moreover, such an allocation can be found efficiently.
We further show an analogous result in the \textsc{Stable Roommates} setting by proving that stable matchings in a one-sided market also respect improvement.  
\end{abstract}


\section{Introduction.}
\label{sec:intro}

Housing markets are a classic model in economics where agents are initially endowed with one unit of an indivisible good, called a \emph{house}, 
and agents may trade their houses according to their preferences without using monetary transfers. 
In such markets, trading results in a reallocation of houses in a way that each agent ends up with exactly one house.
Motivation for studying housing markets comes from applications such as
kidney exchange~\cite{roth-sonmez-unver-2004,biro-kidney-exchange-survey,biroetal2021} and on-campus housing~\cite{adbulkadiroglu-sonmez}.

In their seminal work Shapley and Scarf~\cite{shapley-scarf-1974} examined housing markets where agents' preferences are weak orders. 
They proved that such markets always admit a \emph{core} allocation, that is, an allocation where no coalition of agents 
can strictly improve their situation by trading only among themselves.
They also described the Top Trading Cycles (TTC) algorithm, proposed by David Gale,
and proved that the set of allocations that can be obtained through the TTC algorithm 
coincides with the set of competitive allocations; hence the TTC always produces an allocation in the core. 
When preferences are strict, the TTC produces the unique allocation in the \emph{strict core}, 
that is, an allocation where no coalition of agents can weakly improve their situation (with at least one agent strictly improving) by trading among themselves~\cite{roth-postlewaite}. 
Although the strict core has some very appealing properties from a mathematical viewpoint (above all, that for strict preferences it contains a unique allocation that is easy to compute), there are arguments why the core is a more interesting solution concept. 
First, if we allow indifference between houses to appear in the preferences, then the strict core can be empty.
Second, decision-makers in a real-world application have to deal with various constraints and optimization goals that may not be represented in the preferences. 
In kidney exchange programs such constraints can arise due to ethical issues~\cite{Chow-KE-ethics}, logistical considerations (e.g., limit on the length of the exchange cycles),
and several optimization criteria that can improve the accessibility and long term success of the programs (e.g., prioritization of highly sensitized patients, or keeping donors with blood type O for recipients with blood type O in order to avoid the accumulation of hard-to-match patients); see \cite{biroetal2021} for a survey on European practices.
Thus, having a wider range of possible solutions to choose from is often  preferable.

Bir\'o et al.~\cite{BKKV-mor} conducted computer simulations on realistic kidney exchange instances to compare solutions of maximum size or maximum weight\footnote{In practice, 
solutions in a kidney exchange program are often sought as maximum-weight matchings between patient--donor pairs in a graph where weights reflect certain optimality criteria.},
 typically used in practice, with solutions contained in the core, in the set of competitive solutions, and in the strict core of the underlying housing market.
 Intuitively, as the latter three solution concepts become increasingly demanding to be satisfied, the ``price of fairness'' (roughly speaking, the reduction in the number of transplants due to taking preferences into account) also increases. 
Their results suggest that the core is a good compromise: they found that ``core allocations for instances with 150 patient--donor pairs entail a less than 1\% reduction in the number of transplants.'' 
In contrast, regarding the strict core solutions, they observed that the likelihood of existence for instances with weak preferences is decreasing sharply as the kidney exchange pool grows, becoming almost zero for 150 patient--donor pairs.

\medskip

Although the core of housing markets has been the subject of considerable research, there are still many challenges which have not been addressed.
Our paper focuses on questions that may be raised by an agent who wants to decide whether to enter the market, and if so, on what conditions. To be able to judge their prospects correctly, agents are in need of  information about the possible allocations the market may result in. Can they improve their situation by participating in the market? If so, how much? Which are the houses that they have a chance of receiving?

Assuming that our relevant solution concept is the core, an agent~$a$ may be interested in the following questions:
\begin{enumerate}[label=$\bullet$]
    \item Q1: \label{quest:q1} Can agent~$a$ receive a house better than her own in some core allocation?
    \item Q2: \label{quest:q2} Given some house~$h$, can agent~$a$ obtain $h$ in some core allocation?
    \item Q3: \label{quest:q3} Given some house~$h$, can agent~$a$ avoid obtaining~$h$ in some core allocation?
\end{enumerate}

We believe that the above three questions are natural enough to warrant a quest for an efficient algorithm that can solve the underlying computational problems. However, their motivation is also clear from an economic point of view: 
a positive answer to question \Qi{} (or \Qii{}) clearly provides a strong incentive for agent~$a$ to participate in the market. Hence, such an algorithm can be an important tool for an authority in charge of a centralized housing market where a larger market is more desirable (as is the case, e.g., in kidney exchange programs).
Further motivation for studying questions \Qii{} and \Qiii{} stems from the fact that, in some cases, realizing a given allocation requires certain 
investments that are a function of the allocation chosen: e.g., in the context of kidney exchange, additional compatibility tests are required before carrying out the planned transplantations.
Narrowing down the set of possible donors whose kidney a given patient may obtain in a kidney exchange may allow for such tests to be performed in advance, sparing time for the patients, while keeping the costs incurred by such tests relatively low.\footnote{As an extreme example of additional expenditures necessary for realizing an allocation, consider Germany, where kidney exchange is only allowed when it involves people who have a ``personal relationship.'' However, the required personal relationship can be built for the purpose of the kidney exchange to take place---a process that takes precious time for the patients involved. See \url{https://crossover-nierenspenderliste.de/files/DIATRA_42021_26-29_Crossover_engl.pdf}.} 

In the first part of our work, we focus on the computational complexity of the above questions.
Similar questions have been extensively studied in the context of the \textsc{Stable Marriage}
and the \textsc{Stable Roommates} problems~\cite{Knuth1976,GusfieldIrving-book,Dias-2003,Fleiner-Irving-Manlove,Cseh-Manlove-SR-2016}, 
but have not yet been considered in relation to housing markets.

We also address questions that concern the possibility of an agent improving her situation  by bringing a better endowment to the market. 
Assuming that agent~$a$ ensures that the value of her house increases, will this result in an improvement for~$a$? If the answer is positive, then this provides an incentive for the agent to invest in her house in order to obtain a preferable allocation. 
It is clear that an increase in the value of $a$'s house may not always yield a \emph{strict} improvement for~$a$ 
(as a trivial example, some core allocation may assign~$a$ her top choice even before the change), 
but is it at least true that by improving her house, $a$ will not damage her own possibilities in the market?
Can we determine whether when a strict improvement for~$a$ becomes possible?

We investigate the following question: is an increase in the value of some agent~$a$'s house beneficial for~$a$ in terms of the possible core allocations? 
More precisely, we consider two slightly different versions of this question: 
\begin{enumerate}[label=$\bullet$]
    \item \label{quest:q4} Q4: Given a core allocation for the original market where $a$ obtains some house~$h$, 
    can agent~$a$ obtain a house \emph{at least as good} as~$h$ in some core allocation after an increase in the value of $a$'s house?
    \item \label{quest:q5} Q5: Given a core allocation for the original market where $a$ obtains some house~$h$, 
    can an agent~$a$ obtain a house \emph{strictly better} than~$h$  in some core allocation after an increase in the value of~$a$'s house?
\end{enumerate}

Questions \Qiv{} and \Qv{} are of crucial importance when we consider agents' incentives to choose the endowment with which they enter the market. In the context of kidney exchange: if procuring a new donor with better properties 
(e.g., a younger or healthier donor),
or registering an additional willing donor (which is possible in most European programs) 
does not necessarily benefit the patient, then this could undermine the incentive for the patient 
to find a donor with good characteristics, damaging the overall welfare in a system where inefficiency directly leads to loss of lives. 
Being able to answer the above questions is therefore paramount, and has direct consequences on agents' incentives.

\subsection{Our contribution.}
Regarding questions \Qi{}, \Qii{}, and \Qiii{} raised above, we show in Theorem~\ref{thm:arc-in-core} that each of them is computationally intractable. 
Remarkably, it is already $\mathsf{NP}$-complete to decide whether a core allocation can assign \emph{any} house to~$a$ other than her own. 
Similarly, deciding whether the core of a housing market contains 
an allocation where a given agent~$a$ obtains a certain house (or where $a$ does \emph{not} receive a certain house) is also $\mathsf{NP}$-complete.
Various generalizations of these questions can be answered efficiently in both the \textsc{Stable Marriage} and \textsc{Stable Roommates}
settings~\cite{Knuth1976,GusfieldIrving-book,Dias-2003,Fleiner-Irving-Manlove,Cseh-Manlove-SR-2016}, 
so we find these intractability results surprising.

We shall note that our complexity results do not mean that finding core allocations that fulfill some additional requirement would be impossible in practice, it only means that a polynomial-time algorithm is  highly unlikely to exist for such problems. However, this does not preclude the use of 
heuristics or robust optimization techniques such as integer programming (IP) methods for computing core allocations with additional requirements. In fact, Bir\'o et al.~\cite{BKKV-mor} developed and tested such methods by conducting simulations, and demonstrated the possibility of using the solution concept of core in practice.

\smallskip
Turning our attention to the question of how an increase in the value of a house affects its owner, 
we present Theorem~\ref{thm:core-main-RI}, our main technical result, which answers question~\Qiv{} affirmatively as follows:

\begin{center}
\begin{minipage}{15cm}%
\emph{If the core of a housing market contains an allocation where $a$ receives some house $h$, 
and the market changes in a way that some agents perceive an increased value for the house owned by $a$ 
(and nothing else changes in the market), then the resulting housing market admits an allocation in its core where 
$a$ receives either~$h$ or a house that $a$ prefers to~$h$.}
\end{minipage}
\end{center}

Using the terminology of Bir\'o et al.~\cite{BKKV-mor}, the above result shows that the core \emph{respects improvement} in the sense that the best allocation achievable for an agent~$a$ in a core allocation can only (weakly) improve for $a$ as a result of an increase in the value of $a$'s house.
We prove Theorem~\ref{thm:core-main-RI} by presenting a polynomial-time algorithm that finds an allocation as promised by the sentence highlighted above; the ideas and techniques on which this algorithm relies is our main technical contribution.
This settles an open question asked explicitly by Bir\'o et al.~\cite{BKKV-mor}.

This result has important implications for the practice of kidney exchanges. Bir\'o et al.~\cite{BKKV-mor} conducted simulations for measuring how often the property of respecting improvement is violated when using solutions of  maximum size/weight  (as done in practice) compared to using core, competitive, or strict core solutions. 
They observed a significant number of violations for 
 solutions of maximum size/weight, but none for core solutions, conjecturing the theorem that we have proved theoretically.
Therefore, Theorem~\ref{thm:core-main-RI} gives a theoretical foundation for their observation, and it implies that the usage of core solutions provides good individual incentives for recipients to bring better or more donors. In the meantime, these simulations show that the current practice of focusing only on the number of transplants and other weighted optimality criteria may not provide a compelling incentive for participants to bring valuable donors to the pool.

The significance of our positive result in Theorem~\ref{thm:core-main-RI} is especially pronounced
in view of the intractability results of Theorem~\ref{thm:arc-in-core}:
even though we cannot efficiently compute information about the possible houses an agent may obtain in some core allocation, we \emph{do know} that entering the market with a more valuable house, an agent can only improve (and never damage)  their situation. Hence, improving the value of the house with which the agent plans to enter the market is always a safe choice. 
We believe that this aspect of the core is a very strong argument for considering it as a good solution concept to be used in centralized housing markets, since it provides a motivation for agents to increase the value of their initial endowment.

\smallskip
Contrasting our positive result for question~\Qiv{}, the slightly different question~\Qv{} turns out to be significantly harder:
although one can formulate several variants of this problem depending on what exactly one considers to be a strict improvement, 
by Theorem~\ref{thm:strict-imp}
each of them leads to computational intractability
($\mathsf{NP}$-hardness or $\mathsf{coNP}$-hardness).

\smallskip
Finally, we also answer a question raised by Bir\'o et al.~\cite{BKKV-mor} regarding the property of respecting improvements 
in the context of the \SR{} problem. An instance of \SR{} contains a set of agents, each having preferences over the other agents; 
the usual task is to find a matching between the agents that is \emph{stable}, i.e., no two agents prefer each other to their partners in the matching. 
An instance of \SR{} can therefore be considered as a housing market with the additional requirement that (i) trading can only happen along cycles of length~2, and (ii) only blocking cycles of length~2 can cause instability; then stable matchings correspond exactly to core allocations. We examine the following question, which is the direct analog of question~\Qiv{} for the \SR{} model:  
\begin{enumerate}[label=$\bullet$]
    \item \label{quest:q6} Q6: In an instance of \SR{}, does increasing the value of an agent~$a$ (as manifested in the preferences of others) lead to a (weak) improvement in the situation of~$a$?
\end{enumerate}

Again, we are able to assert a positive answer, although only in a conditional form:
in Theorem~\ref{thm:SR-RI-holds} we show that 
if some stable matching assigns agent~$a$ to agent~$b$ in a \SR{} instance, and the 
value of~$a$ increases (that is, if $a$ moves upward in other agents' preferences, with everything else remaining constant), 
and the resulting instance admits a stable matching, 
then it necessarily admits a stable matching where $a$ is matched either to~$b$ or to an agent preferred by~$a$ to~$b$. 
This result is an analog of the one stated in Theorem~\ref{thm:core-main-RI} for the core of housing markets; 
however, the algorithm we propose in order to prove it uses different techniques. 

\smallskip
We remark that throughout the paper we use a model with partially ordered preferences (a generalization of weak orders). 
Although partially ordered preferences have been studied in the context of the \textsc{Stable Marriage} and \SR{} problems~\cite{Fleiner-Irving-Manlove,GPRVW2011,Drummon-Boutilier-2013, Pittel2020,Cseh-Juhos-2021}, we are not aware of any paper on housing markets featuring preferences that are expressed as partial orders.

\subsection{Related work.}
Most works relating to the core of housing markets
aim to find core allocations with some additional property that benefits global welfare,
most prominently Pareto optimality~\cite{Jaramillo-Manjunath,Alcalde-Unzu-Mollis,Aziz-deKeijzer,Plaxton-2013,Saban-Sethuraman-2013}.
Another line of research 
comes from kidney exchange where the length of trading cycles is of great importance and often plays a role in agents' preferences~\cite{cechlarova-romero-medina,cechlarova-hajdukova-2003,cechlarova-fleiner-manlove-kidney,Biro-Cechlarova-2007,cechlarove-lacko-2012}
or is bounded by some constant~\cite{abraham-blum-sandholm,biro-manlove-rizzi-kidney,biro-mcdermid-2010-3cycles,huang-3wayKE,Cechlarova-Repisky-2011}.
None of these papers deal with problems where a core allocation is required to fulfill 
some constraint regarding a given agent or set of agents---that they be trading, or that they obtain (or not obtain) a certain house. 
Although, to the best of our knowledge none of the questions \Qi{}--\Qiii{} have been studied so far, some papers have focused on finding a core allocation where the number of agents involved in trading is 
as large as possible, obtaining mostly intractability results~\cite{Biro-Cechlarova-2007,Cechlarova-Repisky-2011}.

In the context of the \textsc{Stable Marriage} and the \textsc{Stable Roommates} problems, it is known that the problem of finding a stable matching with edge restrictions, i.e., a stable matching that contains a given set of \emph{forced} edges but is disjoint from a given set of \emph{forbidden} edges, can be found in polynomial time~\cite{Dias-2003,Fleiner-Irving-Manlove}. These results strongly contrast Theorem~\ref{thm:arc-in-core}, which shows that the analogous problems in the context of house allocation are $\mathsf{NP}$-hard, even if there is only a single arc that we require to be included in (or excluded from) the desired allocation.

Questions \Qiv{} and \Qv{} can be considered as inquiries about housing markets where preferences are subject to change. Although some researchers have addressed certain dynamic models, most of these either focus on the possibility of repeated allocation~\cite{roth-postlewaite,kamijo-kawasaki,kawasaki-2015}, or consider a situation where agents may enter and leave the market at different times~\cite{Unver-2010,bloch-cantala,kurino-2014}.

The line of research that concerns questions akin to~\Qiv{} and~\Qv{} 
was initiated by Balinski and S\"onmez
in their paper on 
the property of respecting improvement in the context of college admission~\cite{balinski-sonmez}.
They proved that the student-optimal stable matching algorithm respects the improvement of students, so a better test score for a student always results in an outcome weakly preferred by the student (assuming other students' scores remain the same)---this means that the analog of question~\Qiv{} for the college admission problem (when viewed from the students' side) can always be answered affirmatively.
Hatfield et al.~\cite{hatfield-kojima-narita} contrasted the findings of Balinski and S\"onmez by showing that no stable mechanism respects the improvement of school quality.
S\"onmez and Switzer~\cite{sonmez-switzer} applied the model of \emph{matching with contracts} to the problem of cadet assignment 
in the United States Military Academy, and have proved that the cadet-optimal stable mechanism respects improvement of cadets.
Recently, Klaus and Klijn~\cite{klaus-klijn-RI} have obtained results of a similar flavor  in a school-choice model with minimal-access rights.

Roth et al.~\cite{roth-sonmez-unver-2005} deal with the property of respecting improvement in connection with kidney exchange:
they show that in a setting with dichotomous preferences and pairwise exchanges priority mechanisms are donor monotone, meaning that
a patient can only benefit from bringing an additional donor on board.

Closest to our work is the paper by
Bir\'o et al.~\cite{BKKV-mor} who focused on the classical Shapley--Scarf model and investigated 
how different solution concepts behave when the value of an agent's house increases. 
They proved that both the strict core and the set of competitive allocations satisfy the property of respecting improvements, 
although this is no longer true when the lengths of trading cycles are bounded by some constant. 
We remark that Bir\'o et al.~\cite{BKKV-mor} were not able to show that the property of respecting improvement holds for the core of housing markets. In fact, they posed questions~\Qiv{} and~\Qvi{} as open problems. We answer both of these questions affirmatively.

\subsection{Organization.}
Section~\ref{sec:prelim} contains all definitions necessary for our model.
Section~\ref{sec:arc-restrictions} deals with the decision problems associated with questions \Qi{}, \Qii{}, and \Qiii{}, and their computational complexity. 
In Section~\ref{sec:improvement-HM} we present our results on the property of respecting improvements in relation to the core of housing markets, that is, questions~\Qiv{} and~\Qv{}: Sections~\ref{sec:main-algo} and~\ref{sec:main-algo-correctness} contain our main technical result, Theorem~\ref{thm:core-main-RI}, 
while
Section~\ref{sec:strict-imp} deals with the computational complexity of the decision problem associated with question~\Qv{}.
In Section~\ref{sec:improvement-SR} we 
study the respecting improvement property in the context of \textsc{Stable Roommates}, i.e., question~\Qvi{}.
We close in Section~\ref{sec:future} with some questions for future research.

In an appendix we further provide some loosely related results: Appendix~\ref{sec:app-ttc} contains an adaptation of the TTC algorithm 
for partially ordered preferences.
Appendix~\ref{sec:app-strictcore} deals with the variants of questions~\Qi{}--\Qiii{} for the strict core in a setting where agents' preferences are weak orders.
Finally, 
Appendix~\ref{sec:app-max-core-approx}
contains an inapproximability result on the problem of maximizing the number of agents involved in trading in some core allocation.

\section{Preliminaries.}
\label{sec:prelim}

Here we describe our model, and provide all the necessary notation. 
Information about the organization of this paper can be found at the end of this section.

\subsection{Preferences as partial orders.}
In the majority of the existing literature, preferences of agents are usually considered to be either strict or, 
if the model allows for indifference, weak linear orders. Weak orders can be described as lists
containing \emph{ties}, a set of alternatives considered equally good for the agent. 
Partial orders are a generalization of weak orders that allow for two alternatives to be \emph{incomparable} for an agent. Incomparability may not be transitive, as opposed to indifference in weak orders.
Formally, an (irreflexive)\footnote{Throughout the paper we will use the term \emph{partial ordering} in the sense of an irreflexive (or strict) partial ordering.} \emph{partial ordering}~$\prec$ on a set of alternatives is an irreflexive, antisymmetric and transitive relation. 

Partially ordered preferences arise by many natural reasons; we give two examples motivated by kidney exchanges. For example, agents may be indifferent between goods that differ only slightly in quality.
Indeed, recipients might be indifferent between two organs if their expected graft survival times differ by less than one year.
However, small differences may add up to a significant contrast: an agent may be indifferent between $a$ and~$b$, and also  between $b$ and~$c$, but strictly prefer $a$ to~$c$. Such preferences result in so-called \emph{semiorders}, a special case of partial orders.

Partial preferences also emerge in multiple-criteria decision making. 
The two most important factors for estimating the quality of a kidney transplant are the HLA-matching between donor and recipient, and the age of the donor.\footnote{In fact, these are the two factors for which patients in the UK program can set acceptability thresholds \cite{biro-kidney-exchange-survey}.}
An organ is considered better than another if it is better with respect to both of these factors, leading to partial orders.

\subsection{Housing markets.}
Let $H=(N,\{\prec_a\}_{a \in N})$ be a \emph{housing market} with agent set~$N$ and with the preferences of each agent~$a \in N$ 
represented by a partial ordering~$\prec_a$ of the agents. For agents $a$, $b$, and $c$, 
we will write $a \preceq_c b$ as equivalent to $b \not\prec_c a$, and we write $a \sim_c b$ if $a \not\prec_c b$ and $b \not\prec_c a$.
We interpret $a \prec_c b$ (or~$a \preceq_c b$) as agent~$c$ \emph{preferring} (or \emph{weakly preferring}, respectively) the house owned by agent~$b$ to the house of agent~$a$.
We say that agent~$a$ finds the house of~$b$ \emph{acceptable}, if $a \preceq_a b$, 
and we denote by~$A(a)=\{b \in N: a \preceq_a b \}$ the set  of agents whose house is acceptable
for~$a$.
We define the \emph{acceptability graph} of the housing market~$H$ as the directed graph $G^H=(N,E)$ 
with $E=\{(a,b) \mid b \in A(a)\}$; we let  
$|G^H|=|N|+|E|$. Note that $(a,a) \in E$ for each~$a\in N$.
The \emph{submarket} of~$H$ on a set~$W \subseteq N$  of agents is the housing market $H_W=(W,\{ \prec_a^{|_{W}} \}_{a \in W})$ where $\prec_a^{|_{W}}$ is the partial order~$\prec_a$ restricted to~$W$;
the acceptability graph of~$H_W$ is the subgraph of~$G^H$ induced by~$W$, denoted by~$G^H[W]$.
For a set~$W$ of agents,  let $H-W$ be the submarket $H_{N \setminus W}$ obtained by \emph{deleting} $W$ from~$H$; for $W=\{a\}$ we may write simply $H-a$.

For a set $X \subseteq E$ of arcs in~$G^H$ and an agent~$a \in N$ 
we let $X(a)$ denote the set of agents~$b$ such that $(a,b) \in X$;
whenever $X(a)$ is a singleton~$\{b\}$ we will abuse notation by writing $X(a)=b$.
We also define $\delta^-_X(a)$ and $\delta^+_X(a)$ as the number of in-going and out-going arcs of~$a$ in~$X$, respectively.
For a set $W \subseteq N$ of agents, we let $X[W]$ denote the set of arcs in~$X$ that run between agents of~$W$.

We define an \emph{allocation}~$X$ in $H$ as a subset~$X \subseteq E$ of arcs in $G^H$ 
such that $\delta^-_X(a)=\delta^+_X(a)=1$ for each~$a \in N$, that is,
$X$ forms a collection of cycles in~$G^H$ containing each agent exactly once.
Then $X(a)$ denotes the agent whose house $a$ obtains according to allocation~$X$. 
If $X(a) \neq a$, then $a$ is \emph{trading} in~$X$.
For allocations~$X$ and~$X'$, we say that $a$ \emph{prefers} $X$ to~$X'$ if $X'(a) \prec_a X(a)$.

For an allocation $X$ in $H$, an arc~$(a,b) \in E$ is \emph{$X$-augmenting}, if $X(a) \prec_a b$. 
We define the \emph{envy graph}~$G^H_{X \prec}$ of~$X$ as the subgraph of~$G^H$ containing all $X$-augmenting arcs.
A \emph{blocking cycle} for~$X$ in~$H$ is a cycle in~$G^H_{X \prec}$, 
that is, a cycle~$C$ where each agent~$a$ on~$C$  prefers $C(a)$ to~$X(a)$.
An allocation~$X$ is contained in the \emph{core} of~$H$, if there does not exist a blocking cycle for it, i.e., if $G^H_{X \prec}$ is acyclic.
A \emph{weakly blocking cycle} for~$X$ is a cycle~$C$ in~$G^H$ where $X(a) \preceq_a C(a)$ for each agent~$a$ on~$C$
and $X(a) \prec_a C(a)$ for at least one agent~$a$ on~$C$. 
The \emph{strict core} of~$H$ contains allocations that do not admit weakly blocking cycles.

\section{Allocations in the core with arc restrictions.}
\label{sec:arc-restrictions}

We focus on the problem of finding an allocation in the core that fulfills certain arc constraints. 
The simplest such constraints arise when we require a given arc to be included in, 
or conversely, be avoided by
the desired allocation. 

The input of the \ec\ problem is a housing market $H=(N,\{\prec_a\}_{a \in N})$ 
and an arc $(a,b)$ in~$G^H$, 
and its task is to decide whether there exists an allocation in the core of~$H$ that contains~$(a,b)$, 
or in other words, where agent~$a$ obtains the house of agent~$b$.
Analogously, 
the \fec{} problem asks to decide if there exists an allocation 
in the core of~$H$ \emph{not} containing~$(a,b)$.

By giving a reduction from \textsc{Acyclic Partition}~\cite{BokalEtAl-2002}, we show in Theorem~\ref{thm:arc-in-core}
that both of these problems are computationally intractable, 
even if each agent has a strict ordering over the houses. 
In fact, we cannot even hope to decide for a given agent~$a$ in a housing market~$H$ whether there exists an allocation in the core of~$H$ 
where $a$ is trading; we call this problem \textsc{Agent Trading in Core}.
These results are in stark contrast to the polynomial-time solvability of the problem of finding a stable matching with forced and forbidden edges in an instance of \textsc{Stable Roommates}~\cite{Fleiner-Irving-Manlove}. 

\begin{theorem}
\label{thm:arc-in-core}
Each of the following problems is $\mathsf{NP}$-complete, even if agents' preferences are strict orders:
\begin{itemize}
\vspace{-6pt}
\item \ec{}, 
\item \fec{}, and 
\item \textsc{Agent Trading in Core}.
\end{itemize}
\end{theorem}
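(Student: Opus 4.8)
Here is the plan.

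\medskip

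\noindent First I would dispose of membership in $\mathsf{NP}$ and then give a single reduction, from \acycpart, that handles all three problems after minor local surgery. For membership, an allocation~$X$ is a polynomial‑size certificate: one checks in polynomial time that $X$ is an allocation, that it satisfies the relevant side constraint (it contains~$(a,b)$, it avoids~$(a,b)$, or $X(a)\neq a$), and that it lies in the core by testing whether the envy graph~$G^H_{X\prec}$ is acyclic, i.e.\ by a topological sort. For hardness I would reduce from \acycpart~\cite{BokalEtAl-2002}: given a digraph~$D$, decide whether $V(D)$ can be split into two parts each spanning an acyclic subdigraph (equivalently, whether $D$ has dichromatic number at most~$2$), which is $\mathsf{NP}$‑complete.

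\medskip

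\noindent The core of the reduction is to turn ``$D$ has an acyclic $2$‑partition'' into ``$H$ has a core allocation with the prescribed property''. For each vertex~$v\in V$ I would install a \emph{selection gadget}: a strongly connected piece of the acceptability graph that admits exactly two core‑consistent local configurations, read as ``$v$ goes to part~$1$'' or ``$v$ goes to part~$2$''. The enabling observation is that a strongly connected component equipped with a suitable \emph{by‑pass arc} can support several core allocations simultaneously — one that traverses the whole cycle of vertex‑agents and one that uses the by‑pass and leaves the skipped agents untraded — provided the by‑pass is placed so that every skipped agent still ends up satisfied and hence contributes no envy arc. For each arc $(u,v)\in A$ I would then add a \emph{connector gadget} wired between the gadgets of~$u$ and~$v$ whose agents, in a core allocation, get drawn into a potential blocking cycle exactly when the configurations chosen for~$u$ and~$v$ coincide; the connectors are laid out along the arcs of~$D$ so that the connector agents of a directed cycle~$C$ of~$D$ chain into a single directed cycle of the envy graph precisely when $C$ is monochromatic. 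Thus, once the gadget configurations are fixed (equivalently, a $2$‑colouring of~$V$ is chosen), the allocation can be completed to a core allocation of~$H$ if and only if no monochromatic directed cycle survives — that is, if and only if the colouring is an acyclic partition.

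\medskip

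\noindent Since every housing market has a nonempty core, one more ingredient is needed to exclude ``trivial'' core allocations: a distinguished arc $(a^\star,b^\star)$ of~$G^H$ attached so that a core allocation containing $(a^\star,b^\star)$ is forced to put every selection gadget into its ``active'' state, making it so that $H$ has a core allocation containing $(a^\star,b^\star)$ exactly when $D$ admits an acyclic partition; this settles \ec. For \textsc{Agent Trading in Core} I would let $a^\star$ find acceptable only her own house and $b^\star$'s house, so that ``$a^\star$ trades in a core allocation'' and ``a core allocation contains $(a^\star,b^\star)$'' coincide. For \fec\ I would instead attach a short auxiliary cycle so that \emph{avoiding} a prescribed arc plays the same role of switching the machinery on. All gadgets will use strict preferences only, which yields the theorem in the claimed strong form.

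\medskip

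\noindent The step I expect to be the main obstacle is the gadget design under \emph{strict} preferences. Strict orders tend to pin the core of a strongly connected component down to the unique allocation returned by the TTC algorithm, so the real work lies in (i) choosing the by‑pass arcs so that each selection gadget has \emph{exactly} the two intended core configurations and no spurious third one, and (ii) ensuring that a same‑colour connector produces a genuine blocking cycle while different‑colour connectors — and the interactions between connectors of distinct arcs, and between gadgets of distinct vertices — never create an unintended one. Establishing the equivalence in both directions, namely converting an acyclic partition into a core allocation and, conversely, extracting an acyclic partition from an arbitrary core allocation (in particular checking that a core allocation cannot ``cheat'' by mixing gadget configurations or by trading through connectors in an unplanned way), is the crux of the verification.
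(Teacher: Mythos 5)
Your overall architecture coincides with the paper's: membership in $\mathsf{NP}$ via acyclicity of the envy graph, a reduction from \acycpart{} with a two-state gadget per vertex of~$D$, blocking cycles that mirror monochromatic directed cycles of~$D$, a forcing arc $(a^\star,b^\star)$ that switches the machinery on, the identification of \textsc{Agent Trading in Core} with \ec{} by making $b^\star$'s house the only one acceptable to~$a^\star$, and a short auxiliary attachment to convert \ec{} into \fec{}. All of that matches the published proof.

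However, there is a genuine gap: the gadgets are never constructed, and the mechanism you do sketch would not work as described. You propose a ``by-pass arc'' so that one configuration leaves the skipped agents untraded yet ``satisfied and hence contributing no envy arc.'' Under the paper's definitions an untraded agent~$a$ has an envy arc to \emph{every} agent in $A(a)\setminus\{a\}$, so a skipped agent is envy-free only if her own house is her unique acceptable house --- but then she cannot participate in any connector machinery, which defeats the purpose of the gadget. The paper sidesteps this entirely: for each vertex~$v_i$ it introduces two agents $c_i$ and~$d_i$ who are each other's \emph{top} choice, so any core allocation must contain one of the arcs $(c_i,d_i)$ or $(d_i,c_i)$ (otherwise they block each other as a $2$-cycle); a spine $b_0,a_1,b_1,\dots,a_n,b_n$ anchored by the forced $2$-cycle $(a^\star,b^\star)$ then routes $b_{i-1}\to c_i\to d_i\to a_i$ or $b_{i-1}\to d_i\to c_i\to a_i$, and \emph{all} gadget agents trade in both configurations. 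The binary choice is only which of $c_i,d_i$ receives its top choice and which receives its last choice ($a_i$'s house); the latter, ``unsatisfied'' agent carries envy arcs along copies of the arcs of~$D$ (placed between the $c$-agents and between the $d$-agents), so a blocking cycle arises exactly from a monochromatic cycle of~$D$ --- no separate connector agents are needed. Without a concrete gadget achieving ``exactly two core-consistent configurations'' under strict preferences, and with the one you sketch being internally inconsistent, the reduction as proposed does not yet establish hardness.
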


\begin{proof}
It is easy to see that all of these problems are in $\mathsf{NP}$, since given an allocation $X$ for~$H$, we can check in linear time whether it admits a blocking cycle: 
taking the envy graph $G_{X \prec}^H$ of $X$, we only have to check that it is \emph{acyclic}, i.e., contains no directed cycles 
(this can be decided using, e.g., some variant of the depth-first search algorithm).

To prove the $\mathsf{NP}$-hardness of \ec, we present a polynomial-time reduction from the \acycpart\ problem: given a directed graph $D$, decide whether it is possible to partition the vertices of $D$ into two acyclic sets $V_1$ and $V_2$. Here, a set $W$ of vertices is \emph{acyclic}, if $D[W]$ is acyclic. This problem was proved to be $\mathsf{NP}$-complete by Bokal et al.~\cite{BokalEtAl-2002}.

Given our input graph~$D$ with vertex set~$V$ and arc set~$A$, we construct a housing market $H$ as follows (see Fig.~\ref{fig:arc-in-core} for an illustration). 
We denote the vertices of $D$ by $v_1, \dots, v_n$, and we define the set of agents in $H$ 
as $$N=\{a_i,b_i,c_i,d_i \mid i \in \{1, \dots, n\} \cup \{a^\star,b^\star,a_0,b_0\}.$$
The preferences of the agents' are as shown below; for each agent $a \in N$ we only list those agents whose house $a$ finds acceptable. 
Here, for 
any set $W$ of agents we let $[W]$ denote an arbitrary fixed ordering of $W$.

  \begin{equation*}
    \begin{array}{lll}
    a^\star: & b^\star; 	& \\
    b^\star: & a_0, a_1, \dots, a_n, a^\star; & \\
    a_i: & b_i,b^\star 	& \textrm{ where $i \in \{0,1, \dots, n\}$};\\
    b_i: & c_{i+1}, d_{i+1} & \textrm{ where $i \in \{0,1, \dots, n-1\}$}; \\
    b_n: & a_0; 			& \\
    c_i: & d_i, [\{c_j \mid (v_i,v_j) \in A\}], a_i 			& \textrm{ where $i \in \{1, \dots, n\}$};\\
    d_i: & c_i, [\{d_j \mid (v_i,v_j) \in A\}], a_i \qquad \qquad 			& \textrm{ where $i \in \{1, \dots, n\}$}.\\
    \end{array}
  \end{equation*}

\begin{figure}[t]
\begin{center}
\includegraphics[scale=0.9]{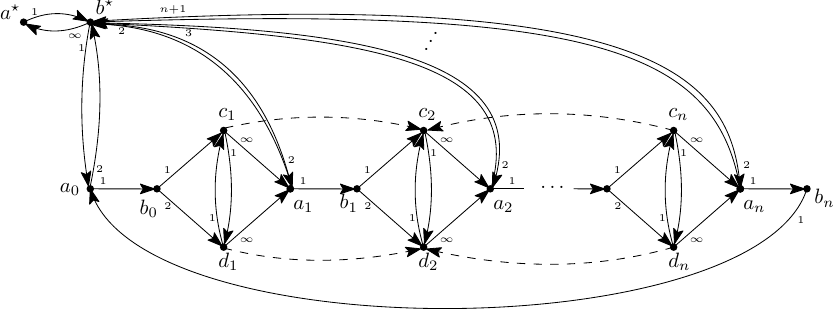}
\caption{Illustration of the housing market $H$ constructed in the $\mathsf{NP}$-hardness proof for \ec. 
Here and everywhere else we depict markets through their acceptability graphs with all loops omitted.
Preferences are indicated by numbers along the arcs;
the symbol~$\infty$ indicates the least-preferred choice of an agent. 
The example assumes that $(v_1,v_2)$ and $(v_n,v_2)$ are arcs of the directed input graph $D$, as indicated by the dashed arcs.}
\label{fig:arc-in-core}
\end{center}
\end{figure}

We finish the construction by defining our instance of \ec\ as the pair $(H,(a^\star,b^\star))$.
We claim that there exists an allocation in the core of $H$ containing $(a^\star,b^\star)$ if and only if the vertices of~$D$ can be partitioned into two acyclic sets.

``$\Rightarrow$'': Let us suppose that there exists an allocation $X$ that does not admit any blocking cycles and contains $(a^\star,b^\star)$. 

We first show that $X$ contains every arc $(a_i,b_i)$ for $i\in \{0,1,\dots, n\}$. To see this, observe that the only possible cycle in $X$ that contains $(a^\star,b^\star)$ 
is the cycle $(a^\star,b^\star)$ of length 2, because the arc $(b^\star,a^\star)$ is the only arc going into~$a^\star$. 
Hence, if for some $i \in \{0,1,\dots,n\}$ the arc $(a_i,b_i)$ is not in $X$, then the cycle $(a_i,b^\star)$ is a blocking cycle.
As a consequence, exactly one of the arcs~$(b_i,c_{i+1})$ and~$(b_i,d_{i+1})$ must be contained in $X$ for any $i \in \{0,1,\dots, n-1\}$, and similarly,
exactly one of the arcs~$(c_i,a_i)$ and~$(d_i,a_i)$ is contained in $X$ for any $i \in \{1,\dots, n\}$.

Next consider the agents $c_i$ and $d_i$ for some $i \in \{1,\dots, n\}$. 
As they are each other's top choice, it must be the case that either $(c_i,d_i)$ or $(d_i,c_i)$ is contained in $X$, 
as otherwise they both prefer to trade with each other as opposed to their allocation according to $X$, and the cycle $(c_i,d_i)$ would block $X$. 
Using the facts of the previous paragraph, we obtain that for each $v_i \in V$ exactly one of the following conditions holds:
\begin{itemize}
\item $X$ contains the arcs $(b_{i-1},c_i)$, $(c_i,d_i)$, and $(d_i,a_i)$, in which case we put $v_i$ into $V_1$;
\item $X$ contains the arcs $(b_{i-1},d_i)$, $(d_i,c_i)$, and $(c_i,a_i)$, in which case we put $v_i$ into $V_2$.
\end{itemize}

We claim that both $V_1$ and $V_2$ are acyclic in $D$. 
For a contradiction, let $C_1$ be a cycle within vertices of $V_1$ in $D$. 
Note that any arc $(v_i,v_j)$ of $C_1$ corresponds to an arc $(d_i,d_j)$ in the acceptability graph $G=G^H$ for $H$. 
Moreover, since $v_i \in V_1$, by definition we know that $d_i$ prefers $d_j$ to $X(d_i)=a_i$.
This yields that the agents $\{d_i \mid v_i \textrm{ appears on } C_1\}$ form a blocking cycle for $H$.
The same argument works to show that any cycle $C_2$ within $V_2$ corresponds to a blocking cycle formed by the agents in~$\{c_i \mid v_i \textrm{ appears on } C_2\}$,
proving the acyclicity of $V_2$.

``$\Leftarrow$'': Assume now that $V_1$ and $V_2$ are two acyclic subsets of $V$ forming a partition. 
We define an allocation $X$ to contain the cycle $(a^\star,b^\star)$, and a cycle consisting of the arcs in 
\begin{eqnarray*}
  X_{\circ} &=&\{(b_n,a_0)\} \cup \{(a_i,b_i) \mid v \in \{0,1,\dots,n\}\} \\
&&	\cup \, \{(b_{i-1},c_i),(c_i,d_i),(d_i,a_i) \mid v_i \in V_1\} \\
&&	\cup \, \{(b_{i-1},d_i),(d_i,c_i),(c_i,a_i) \mid v_i \in V_2\}.
\end{eqnarray*}
Observe that $X_{\circ}$ is indeed a cycle, and that $X$ is an allocation containing the arc $(a^\star,b^\star)$. 
We claim that the core of $H$ contains $X$. Assume for the sake of contradiction that $X$ admits a blocking cycle~$C$. 
Now, since $a^\star$, as well as each agent $a_i$, $i \in \{0,1,\dots, n\}$, is allocated its first choice by~$X$, none of these agents appears on $C$. 
This implies that neither $b^\star$, nor any of the agents $b_i$, $i \in \{0,1,\dots, n\}$, appears on $C$, 
since these agents have no in-neighbors that could possibly appear on $C$. Furthermore, every agent in $\{c_i \mid v_i \in V_1 \} \cup \{d_i \mid v_i \in V_2\}$
 is allocated its first choice by $X$. It follows that $C$ may contain only agents from $D_1=\{d_i \mid v_i \in V_1 \}$ and $C_2=\{c_i \mid v_i \in V_2 \}$.
 Observe that there is no arc in~$G$ from~$D_1$ to~$C_2$ or vice versa, hence $C$ is either contained in~$G[D_1]$ or~$G[C_2]$. 
Now, since any cycle within $G[D_1]$ or~$G[C_2]$ would correspond to a cycle in~$D$, 
the acyclicity of~$V_1$ and~$V_2$ ensures that $X$ admits no blocking cycle, proving the correctness of our reduction for the \ec\ problem.

Observe that the same reduction proves the $\mathsf{NP}$-hardness of \textsc{Agent Trading in Core}, 
since agent~$a^\star$ is trading in an allocation~$X$ for~$H$ if and only if the arc~$(a^\star,b^\star)$ is used in~$X$.

Finally, we modify the above construction to give a reduction from \acycpart\ to \fec. 
We simply add a new agent~$s^\star$ to $H$, with the house of~$s^\star$ being acceptable only for~$a^\star$ as its second choice (after~$b^\star$), and with
$s^\star$ preferring only $a^\star$ to its own house. We claim that the resulting market $H'$ together with the arc $(a^\star,s^\star)$ is a yes-instance of \fec\
if and only if $H$ with $(a^\star,b^\star)$ constitutes a yes-instance of \ec. 
To see this, it suffices to observe that any allocation for~$H'$ not containing $(a^\star,s^\star)$ is either blocked by the cycle $(a^\star,s^\star)$ of length 2, 
or contains the arc~$(a^\star,b^\star)$.
Hence, any allocation in the core of $H'$ contains  $(a^\star,b^\star)$ if and only if it does not contain $(a^\star,s^\star)$, proving the theorem.
\qed
\end{proof}

Theorem~\ref{thm:arc-in-core} shows that there is a computational gap between the strict core and the core: even though all three problems considered in Theorem~\ref{thm:arc-in-core} are $\mathsf{NP}$-complete even if agents' preferences are strict, the corresponding problems become computationally tractable for the strict core. 
This is trivial if the preferences are strict, since in that case the strict core contains a unique allocation~\cite{roth-postlewaite}.  
If preferences are weak orders (that is, if each agent orders the houses she finds acceptable in a linear order allowing ties), then the set of houses an agent can obtain in a strict core allocation can be computed in polynomial time based on the characterization of the strict core by Quint and Wako~\cite{WakoQuint}; see Appendix~\ref{sec:app-strictcore} for details. We remark that  since the characterization of  Quint and Wako crucially depends on weak orders, the same approach does not work when preferences are partial orders.

\section{The effect of improvements in housing markets.}
\label{sec:improvement-HM}

Let $H=(N,\{\prec_a\}_{a \in N})$ be a housing market containing agents $p$ and $q$. 
We consider a situation where the preferences of $q$ are modified by ``increasing the value'' of~$p$ for~$q$ without altering the preferences of~$q$ over the remaining agents.
If the preferences of~$q$ are given by a strict or weak order, then this translates to \emph{shifting} the position of~$p$ in the preference list of~$q$ towards the top. 
Formally, a housing market $H'=(N,\{\prec'_a\}_{a \in N})$ is called a $(p,q)$-\emph{improvement} of $H$, 
if $\prec_a=\prec'_a$ for any $a \in N \setminus \{q\}$, and~$\prec'_q$ is 
such that 
(i) $a \prec'_q b$ if and only if $a \prec_q b$ for each  $a,b \in N\setminus \{p\}$, 
and (ii) if $a \prec_q p$, then $a \prec'_q p$ for each $a \in N$.
We will also say that a housing market is a \emph{$p$-improvement} of~$H$, 
if it can be obtained by a sequence of $(p,q_i)$-improvements for a series $q_1, \dots, q_k$
of agents for some~$k \in \mathbb{N}$.

To examine how $p$-improvements affect the situation of~$p$ in the market, 
one may consider several solution concepts such as the core, the strict core, and so on.
We regard a solution concept as a function $\Phi$ that assigns a set of allocations to each housing market. 
Based on the preferences of~$p$, we can compare allocations in~$\Phi$.
Let $\Phi^+_p(H)$ denote the set containing the best houses $p$ can obtain in~$\Phi(H)$:
$$\Phi^+_p(H)=\{X(p) \mid X \in \Phi(H), \forall X' \in \Phi(H): X'(p) \preceq_p X(p)\}.$$
Similarly, let $\Phi^-_p(H)$ be the set containing the worst houses $p$ can obtain in~$\Phi(H)$.

Following the notation used by Bir\'o et al.~\cite{BKKV-mor}, 
we say that $\Phi$ \emph{respects improvement for the best available house} or simply \emph{satisfies the RI-best property}, 
if for any housing markets~$H$ and~$H'$ such that $H'$ is a $p$-improvement of~$H$ for some agent~$p$,
$a \preceq_p a'$ for every $a \in \Phi^+_p(H)$ and $a' \in \Phi^+_p(H')$.
Similarly, $\Phi$ \emph{respects improvement for the worst available house} or simply \emph{satisfies the RI-worst property}, 
if for any housing markets~$H$ and~$H'$ such that $H'$ is a $p$-improvement of~$H$ for some agent~$p$,
$a \preceq_p a'$ for every $a \in \Phi^-_p(H)$ and $a' \in \Phi^-_p(H')$.

Notice that the above definition does not take into account the possibility that 
a solution concept~$\Phi$ may become empty as a result of a $p$-improvement. 
To exclude such a possibility, we may require the condition that an improvement does not destroy all solutions.
We say that $\Phi$ \emph{strongly satisfies the RI-best (or RI-worst) property}, if 
besides satisfying the RI-best (or, respectively, RI-worst) property, it also guarantees that 
whenever $\Phi(H) \neq \emptyset$, then $\Phi(H') \neq \emptyset$ also holds where $H'$ is a $p$-improvement of~$H$ for some agent~$p$.

We prove that the core of housing markets strongly satisfies the RI-best property. 
In fact, Theorem~\ref{thm:core-main-RI} (proved in Section~\ref{sec:main-algo-correctness}) 
states a slightly stronger statement. 

\begin{theorem}
\label{thm:core-main-RI}
For any allocation $X$ in the core of housing market $H$ and a $p$-improvement~$H'$ of~$H$, 
there exists an allocation $X'$ in the core of $H'$ such that either $X(p)=X'(p)$ or $p$ prefers $X'$ to~$X$. 
Moreover, given $H$, $H'$ and $X$, it is possible to find such an allocation $X'$ in $O(|H|)$ time.
\end{theorem}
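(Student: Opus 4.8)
The plan is to reduce to the case of a single $(p,q)$-improvement and then re-route the given allocation along a blocking cycle. For the reduction, observe that the predicate ``$X'(p)=X(p)$ \emph{or} $X(p)\prec_p X'(p)$'' is transitive in the obvious sense: an equality step composed with any step preserves the other step, and two strict steps compose by transitivity of $\prec_p$. Hence it suffices to show that a single $(p,q)$-improvement sends a core allocation $X$ of $H$ to a core allocation $X'$ of the improved market with $X'(p)=X(p)$ or $X(p)\prec_p X'(p)$; iterating along the defining sequence $q_1,\dots,q_k$ then proves the theorem. (Achieving $O(|H|)$ total time needs more than a naive iteration; see the end.) Now fix a single $(p,q)$-improvement $H'$ and compare the envy graphs $G^H_{X\prec}$ and $G^{H'}_{X\prec}$, which differ only on arcs leaving $q$. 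If $X(q)=p$, raising $p$ in $q$'s preferences can only delete $X$-augmenting arcs out of $q$, so $G^{H'}_{X\prec}\subseteq G^H_{X\prec}$ and $X$ is already in the core of $H'$. If $X(q)\neq p$, the only arc out of $q$ whose augmenting status can change is $(q,p)$, so $G^{H'}_{X\prec}$ equals $G^H_{X\prec}$ or $G^H_{X\prec}\cup\{(q,p)\}$; since $G^H_{X\prec}$ is acyclic, $X$ fails to be in the core of $H'$ only when $(q,p)$ becomes newly $X$-augmenting \emph{and} $G^H_{X\prec}$ contains a directed $p$-to-$q$ path. Both conditions are checkable by one depth-first search, and if they do not both hold we output $X'=X$.

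\textbf{The main case.}
Otherwise $X$ admits a blocking cycle in $H'$, and every such cycle is the arc $(q,p)$ together with a $p$-to-$q$ path in $G^H_{X\prec}$; in particular $p$ leaves this cycle along an $X$-augmenting arc $(p,u)$, so $X(p)\prec_p u$ and giving $p$ the house of $u$ is a \emph{strict} improvement. The algorithm picks a blocking cycle $C=(p,u,\dots,q)$ whose $p$-to-$q$ segment is of minimum length, reroutes every agent of $C$ to its successor on $C$ (so $p$ receives $u$'s house and $q$ receives $p$'s house), and then repairs the rest: deleting the agents of $C$ from the disjoint cycles of $X$ leaves a disjoint union of directed paths through the remaining agents, and we close each such path into a cycle by adding an arc from its tail to its head. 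Call the result $X'$. By construction $X'(p)=u$, which gives the desired strict improvement, and $X'(q)=p$; by the analysis of the case $X(q)=p$ above (now applied with $X'$ in place of $X$), the latter means $G^{H'}_{X'\prec}\subseteq G^H_{X'\prec}$, so $X'$ lies in the core of $H'$ as soon as it lies in the core of $H$.

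\textbf{The main obstacle.}
Everything therefore reduces to proving that $X'$ is a genuine allocation in the core of $H$: that the arcs added while closing up the leftover paths are acceptable, and that $G^H_{X'\prec}$ is acyclic. This is the technical heart of the argument and the step I expect to be hardest. The proof should exploit that $X$ is in the core of $H$, that the $p$-to-$q$ portion of $C$ consists of $X$-augmenting arcs, and that $C$ was chosen with a shortest such portion, so that any blocking cycle for $X'$ would be forced to ``shortcut'' or re-enter $C$, contradicting minimality together with the acyclicity of $G^H_{X\prec}$.

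\textbf{Running time.}
For a general $p$-improvement, the combined envy graph $G^{H'}_{X\prec}$ differs from $G^H_{X\prec}$ only by some added arcs $(q_i,p)$ and some arcs deleted out of agents $q_i$ with $X(q_i)=p$, so any blocking cycle for $X$ in $H'$ uses an added arc and hence passes through $p$. Thus it suffices to search once (from $p$, in $G^H_{X\prec}$) for a path back to any relevant $q_i$, perform a single reroute-and-repair as above, and verify acyclicity once. Using depth-first search together with the linear-time data structures developed for our variant of TTC in Section~\ref{sec:ttc}, this runs in $O(|G^H|)=O(|H|)$ time; this procedure is the algorithm \algoHM.
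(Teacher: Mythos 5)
There is a genuine gap at the ``repair'' step, and it is not a detail that minimality of the blocking cycle can fix. After you reroute the agents of $C$, every cycle of $X$ that $C$ meets is broken into paths, and you propose to close each leftover path by adding an arc from its tail to its head. First, that arc need not exist: if $X$ contains the cycle $a\to b\to c\to a$ and $C$ passes only through $b$, you need the arc $(a,c)$, i.e.\ $c\in A(a)$, which nothing guarantees. Second, even when the arc exists, the displaced agent $a$ may strictly prefer $X(a)=b$ to her new assignment $c$, so $G^H_{X'\prec}$ is \emph{not} a subgraph of $G^H_{X\prec}$: new envy arcs appear at every agent you demote, and these can close into new blocking cycles. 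Your suggested remedy---that a blocking cycle for $X'$ would have to ``shortcut'' $C$---does not engage with this, because the new instability lives among agents \emph{off} $C$, on the damaged remnants of $X$'s cycles, where the choice of $C$ gives you no leverage. This is precisely why the paper's algorithm does not perform a single reroute: it maintains a \emph{sub-allocation} $Y$ with a set of source agents (those whose house is currently taken by nobody) and repeatedly either redirects some agent who envies a source onto that source's house (creating a new source), closes a path at a sink, or---crucially---declares a source \emph{irrelevant} when no remaining agent envies it, deleting it and handling all irrelevant agents at the end by a separate run of TTC. That iterative chain of redirections, together with the invariant that the envy graph only ever loses arcs and vertices (hence stays acyclic and bounds the number of iterations by $|E|+|N|$), is the actual content of the proof; a one-shot reroute along a shortest blocking cycle cannot replace it.

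Two smaller points. Your reduction to a single $(p,q)$-improvement is sound for correctness (the ``equal or preferred'' relation composes by transitivity of $\prec_p$), though the paper avoids it and treats all improved agents $q_1,\dots,q_k$ at once via the set $Q$ and the dummy agents $\qq$, which is also what makes the $O(|H|)$ bound clean. And your preliminary case analysis (when $X$ survives as a core allocation of $H'$) is correct and matches the paper's first step; the divergence, and the missing argument, is everything after that.
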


\begin{corollary}
\label{cor:possible-houses}
The core of housing markets strongly satisfies the RI-best property.
\end{corollary}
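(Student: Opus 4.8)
The plan is to prove Theorem~\ref{thm:core-main-RI} by an explicit algorithm that, starting from the core allocation $X$ in $H$, repairs it into a core allocation of the $p$-improved market $H'$ while only ever helping $p$. Since a general $p$-improvement is a composition of single $(p,q)$-improvements, it suffices to handle one $(p,q)$-improvement $H'$ of $H$ (after which we recurse, noting that the property ``$X'(p)=X(p)$ or $p$ prefers $X'$ to $X$'' is transitive); the linear running time then has to be argued for the whole sequence at once, so in the actual algorithm I would process all of the single improvements together rather than naively iterating, but for correctness the single-step view is the clean one.

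First I would observe what can go wrong. The only new $X$-augmenting arcs created by a $(p,q)$-improvement are arcs of the form $(q,p)$ (if $q$ now prefers $p$ to $X(q)$ but did not before); every other agent's envy graph is unchanged. Hence if $X$ is still core in $H'$ we are done with $X'=X$. Otherwise every blocking cycle for $X$ in $H'$ passes through the arc $(q,p)$, i.e.\ through $p$. The key structural fact I would establish is that in $G^{H'}_{X\prec}$ — which is $G^H_{X\prec}$ plus possibly the single arc $(q,p)$ — the only cycles go through both $q$ and $p$ via that arc, and since $G^H_{X\prec}$ is acyclic, there is a well-defined set of such cycles controlled by the reachability structure from $p$ back to $q$ in the old acyclic envy graph. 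I would then pick a blocking cycle $C$ and rotate $X$ along it: set $X_1 = X \triangle C$ (remove the arcs of $X$ on the vertices of $C$, add the arcs of $C$). Because $C$ is a cycle through $p$ and $C(p)\succ_p X(p)$ (as $(q,p)\in C$ forces $p$ to be helped along $C$... more precisely $p$'s successor on $C$ is strictly better for $p$ than $X(p)$), the allocation $X_1$ gives $p$ a strictly better house. The danger is that $X_1$ may itself be blocked in $H'$. The main obstacle, and the heart of the argument, is to show this rotation terminates: I would argue that after rotating, $p$'s allocated house has strictly improved with respect to $\prec_p$, and this can happen only finitely many times; moreover I would need to choose $C$ carefully (e.g.\ a blocking cycle that is ``innermost'' in an appropriate sense, or one maximizing how much $p$ gains) so that the new allocation $X_1$, restricted away from $p$, does not create fresh blocking cycles disjoint from $p$ — using the fact that outside $C$ nothing changed and agents on $C$ other than $p$ got a house they weakly preferred along a blocking cycle of the original core allocation. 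This monotone-progress-for-$p$ argument is what forces both correctness and a polynomial (indeed, with the right data structures, linear) bound.

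Concretely, the algorithm I would present is: maintain the current allocation, compute the envy graph, and while it contains a cycle (which must pass through $p$), find such a cycle $C$ — greedily, by walking augmenting arcs out of $p$ until we return to $p$ — and replace the current allocation by its symmetric difference with $C$. Each iteration strictly advances $p$ up $\prec_p$, so there are $O(|N|)$ iterations, and using the incremental TTC-style bookkeeping of Section~\ref{sec:ttc} (lists of undominated/augmenting arcs, updated only locally when an agent's allocated house changes) each iteration is amortized cheap, giving $O(|H|)$ overall. For the correctness of termination with a genuinely core allocation of $H'$, the crucial lemma is that when the process stops, the envy graph of the final allocation in $H'$ is acyclic — which holds because we only stop when no cycle remains — together with the invariant that $p$'s house never gets worse, which gives ``$X(p)=X'(p)$ or $p$ prefers $X'$ to $X$''. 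The delicate point I expect to spend the most effort on is verifying that a rotation along $C$ cannot introduce a blocking cycle avoiding $p$: this needs that every agent $a\in C\setminus\{p\}$ has $X_1(a)=C(a)\succeq_a X(a)$ with $X$ core, so any new block through $a$ would have been a block for $X$ in $H$ — hence the choice of $C$ and a careful case analysis of how envy arcs into and out of $C$ behave is where the real work lies.

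Finally, Corollary~\ref{cor:possible-houses} follows immediately: given any housing market $H$, the core is nonempty (Shapley--Scarf, or the TTC variant of Section~\ref{sec:ttc}), so we may take $X\in\Phi(H)$ with $X(p)\in\Phi^+_p(H)$; Theorem~\ref{thm:core-main-RI} produces $X'$ in the core of the $p$-improvement $H'$ with $X'(p)\succeq_p X(p)$, and since $\Phi^+_p(H')$ consists of $\preceq_p$-maximal achievable houses, every $a'\in\Phi^+_p(H')$ satisfies $a'\succeq_p X'(p)\succeq_p X(p)=a$ for the chosen $a\in\Phi^+_p(H)$; as $\Phi^+_p(H)$ is an equivalence class under $\sim_p$ this gives $a\preceq_p a'$ for all $a\in\Phi^+_p(H)$, $a'\in\Phi^+_p(H')$. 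Nonemptiness of the core in both markets makes this the \emph{strong} RI-best property.
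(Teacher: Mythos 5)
Your closing paragraph---deducing Corollary~\ref{cor:possible-houses} from Theorem~\ref{thm:core-main-RI} together with nonemptiness of the core via TTC---is exactly the paper's route; the paper states the corollary with no further argument for precisely this reason. Two small cautions there: for partial orders $\sim_p$ is not transitive, so $\Phi^+_p(H)$ is a set of pairwise incomparable maximal elements rather than an equivalence class, and one cannot simply chain $a'\succeq_p X'(p)\succeq_p X(p)$ because $\preceq_p$ is not transitive either. The correct deduction uses the strong form of the theorem ($X(p)=X'(p)$ or $X(p)\prec_p X'(p)$) together with transitivity of $\prec_p$ to exclude $a'\prec_p X(p)$.

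The genuine gap lies in your proposed proof of Theorem~\ref{thm:core-main-RI} itself: the update $X_1=X\triangle C$ is in general not an allocation. The agents of a blocking cycle $C$ typically lie on several distinct cycles of $X$; redirecting their outgoing arcs along $C$ breaks those $X$-cycles into paths, so that some agents end up with in-degree $0$ and others with in-degree $2$. Concretely, if $X\supseteq\{(a,b),(b,c),(c,a),(d,e),(e,d)\}$ and $C=\{(a,d),(d,a)\}$ blocks, then after your update agent $b$ receives no house while agent $a$ receives two. You must therefore decide what the displaced agents obtain, and any reassignment can make them strictly worse off than under $X$, creating envy arcs that are \emph{not} present in $G^{H'}_{X\prec}$---which destroys the monotone ``the envy graph only shrinks'' argument on which both your correctness and your termination claims rest. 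Handling exactly this difficulty is the entire content of the paper's algorithm \algoHM{}: it never commits to a full rotation, but maintains a \emph{sub-allocation} (a vertex-disjoint union of paths and cycles with controlled source and sink sets), extends it one envy arc at a time, declares agents that cannot be satisfied ``irrelevant'', reallocates those by TTC at the end, and needs Lemmas~\ref{lem:stable-pre-allocation}--\ref{lem:core} to rule out cross-envy between the two parts. Your sketch does flag the case analysis around $C$ as ``where the real work lies'', but the obstruction is not in choosing $C$ cleverly---it is that no choice of $C$ makes the symmetric difference an allocation, so the repair of the broken cycles is the actual problem to be solved.
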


By contrast, we show that the RI-worst property does not hold for the core. 

\begin{proposition}
\label{prop:core-worst-RI-fails}
The core of housing markets violates the RI-worst property, even if agents' preferences are strict orders.
\end{proposition}

\begin{figure}[t]
\begin{center}
\includegraphics[scale=1]{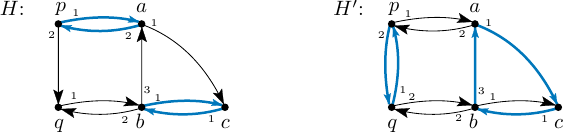}
\caption{The housing markets $H$ and $H'$ in the proof of Proposition~\ref{prop:core-worst-RI-fails}.
For both $H$ and $H'$, the allocation represented by bold (and blue) arcs yields the worst possible outcome for~$p$ 
in any core allocation of the given market. 
}
\label{fig:example-core-RI-worst}
\end{center}
\end{figure}

\begin{proof}
Let $N=\{a,b,c,p,q\}$ be the set of agents. The preferences indicated in Figure~\ref{fig:example-core-RI-worst} 
define a housing market~$H$ and a~$(p,q)$-improvement $H'$ of~$H$. 

We claim that in every allocation in the core of~$H$, agent~$p$ obtains the house of~$a$.
To see this, let~$X$ be an allocation where $(p,a) \notin X$. If agent~$a$ is not trading in~$X$, then $a$ and~$p$ form a blocking cycle;
therefore, $(b,a) \in X$. Now, if $(c,b) \notin X$, then $c$ and~$b$ form a blocking cycle for~$X$;
otherwise, $q$ and~$b$ form a blocking cycle for~$X$.  Hence, $p$ obtains her top choice in all core allocations  of~$H$.

However, it is easy to verify that the core of $H'$ contains an allocation where $p$ obtains only her second choice ($q$'s house), 
as shown in Figure~\ref{fig:example-core-RI-worst}.
\qed
\end{proof}

We remark that Corollary~\ref{cor:possible-houses} and Proposition~\ref{prop:core-worst-RI-fails} illuminate both the similarities of and the contrast between the properties of the core and the strict core.
Recall that for strict preferences, there is a unique allocation in the strict core, and in this case the results of Bir\'o et al.~\cite{BKKV-mor} show that the strict core strongly satisfies both the RI-best and the RI-worst properties. 
On the one hand, Proposition~\ref{prop:core-worst-RI-fails} hence shows a sharp difference between the core and the strict core with respect to the RI-worst property. 
On the other hand, Corollary~\ref{cor:possible-houses} is very close to the analogous findings by Bir\'o et al~\cite{BKKV-mor}: both results establish that the given solution concept (the core or the strict core) satisfies the RI-best property.
See also Table~\ref{tab:summary-RI} in Section~\ref{sec:future} for a comparison of the core and the strict core in relation to the RI-best and RI-worst properties. We remark that despite the proximity of Corollary~\ref{cor:possible-houses} with the results by Bir\'o et al.~\cite{BKKV-mor} for the strict core, they are independent in the sense that neither of them implies the other.

\medskip
We describe our algorithm for proving Theorem~\ref{thm:core-main-RI} in Section~\ref{sec:main-algo}, 
and prove its correctness in Section~\ref{sec:main-algo-correctness}.
In Section~\ref{sec:strict-imp} we look at the problem of deciding whether 
a $p$-improvement leads to a situation strictly better for~$p$.

\subsection{Description of algorithm \algoHM{}.}
\label{sec:main-algo}

Before describing our algorithm for Theorem~\ref{thm:core-main-RI}, we need some notation.

\paragraph{Sub-allocations and their envy graphs.}
Given a housing market $H=(N,\{\prec_a\}_{a \in N})$ and two subsets~$U$ and~$V$ of agents in $N$ with $|U|=|V|$, 
we say that a set $Y$ of arcs in $G^H=(N,E)$ is a \emph{sub-allocation} from~$U$ to~$V$ in~$H$, if 
\begin{itemize}
\item[$\bullet$] $\delta^+_Y(v)=0$  for each $v \in V$, and $\delta^+_Y(a)=1$ for each $a \in N \setminus V$; 
\item[$\bullet$] $\delta^-_Y(u)=0$ for each $u \in U$, and $\delta^-_Y(a)=1$ for each $a \in N \setminus U$.
\end{itemize}
Note that $Y$ forms a collection of mutually vertex-disjoint cycles and paths $P_1, \dots, P_k$ in~$G^H$, 
with each path~$P_i$ leading from a vertex of~$U$ to a vertex of~$V$. 
Moreover, the number of paths in this collection is $k=|U \triangle V|$, 
where $\triangle$ stands for the symmetric difference operation. 
We call $U$ the \emph{source set} of~$Y$, and $V$ its \emph{sink set}. 
See Figure~\ref{fig:sub-allocation} for an illustration.

\begin{figure}
    \centering
    \includegraphics{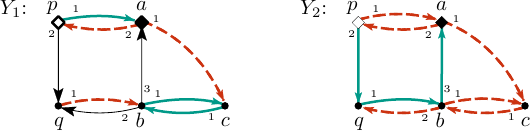}
    \caption{Illustration for the concept of sub-allocation. The arcs sets~$Y_1$ and $Y_2$, shown with bold, teal lines, are both sub-allocations from~$\{p\}$ to~$\{a\}$ in the depicted housing market  (as usual, loops are omitted).
    Source and sink vertices of~$Y$ are depicted with a white black diamond, respectively.
    For each of~$Y_1$ and~$Y_2$, we show the corresponding envy arcs (i.e., the arcs in the corresponding envy graphs) with dashed, red lines; as can be seen, $Y_1$ is stable while $Y_2$ is not.}
    \label{fig:sub-allocation}
\end{figure}

Given a sub-allocation $Y$ from $U$ to $V$ in $H$, an arc $(a,b) \in E$ is \emph{$Y$-augmenting}, 
if either $a \in V$ or~$Y(a) \prec_a b$. 
We define the \emph{envy graph} of $Y$ as $G^H_{Y \prec}=(N,E_Y)$ where $E_Y$ is the set of $Y$-augmenting arcs in $E$.
A blocking cycle for $Y$ is a cycle in~$G^H_{Y \prec}$.
We say that the sub-allocation~$Y$ is \emph{stable}, if no blocking cycle exists for~$Y$, that is, if its envy graph is acyclic.

\smallskip

We are now ready to propose an algorithm called \algoHM{} 
that given an allocation~$X$ in the core of~$H$ outputs an allocation~$X'$ as required by Theorem~\ref{thm:core-main-RI}.
Let $q_1, \dots, q_k$ denote the agents for which $H'$ can be obtained from~$H$ by a series of $(p,q_i)$-improvements, $i = 1, \dots, k$. 
Observe that we can assume w.l.o.g. that the agents $q_1, \dots, q_k$ are all distinct.

\paragraph{Algorithm \algoHM{}.}
For a pseudocode description, see Algorithm~\ref{alg:hm-improve}, and for an example demonstrating the algorithm see Example~\ref{ex:HMalgo}.

First, \algoHM{} checks whether $X$ belongs to the core of $H'$, and if so, outputs $X'=X$. 
Hence, we may assume that $X$ admits a blocking cycle in $H'$. 
Let $Q$ denote that set of only those agents among $q_1, \dots, q_k$ that in~$H'$ prefer $p$'s house to the one they obtain in allocation~$X$, 
that is, \[Q= \left\{q_i: X(q_i) \prec'_{q_i} p, 1 \leq i \leq k \right\}.\]
Observe that if an arc is an $X$-augmenting arc in~$H'$ but not in~$H$, then it must be an arc of the form $(q,p)$ where $q \in Q$.
Therefore
any cycle that blocks~$X$ in~$H'$ must contain an arc from $\{(q,p): q \in Q\}$, 
as otherwise it would block~$X$ in~$H$ as well. 

\algoHM{} proceeds by modifying the housing market: for each $q \in Q$, it adds a new agent~$\qq$ to~$H'$, 
with $\qq$ taking the place of~$p$ in the preferences of~$q$; 
the only house that agent~$\qq$ prefers to her own will be the house of~$p$ (the preferences of $p$ remain unchanged).
Let $\HH$ be the housing market obtained. 
Then the acceptability graph~$\GG$ of $\HH$ can be obtained from the acceptability graph  of~$H'$ 
by subdividing the arc $(q,p)$ for each $q \in Q$ with a new vertex corresponding to agent~$\qq$, i.e., replacing the arc~$(q,p)$ with arcs~$(q,\qq)$ and~$(\qq,p)$. For an illustration of the construction, see Figure~\ref{fig:EX-construction}.
Let~$\QQ=\{\qq:q \in Q\}$, $\NN=N \cup \QQ$, and let us denote by~$\EE$ be the set of arcs in~$\GG$.

\begin{figure}[t]
\begin{center}
\includegraphics[scale=1]{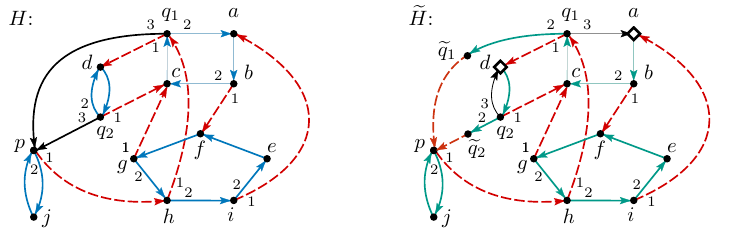}
\caption{The housing market~$H$ of Example~\ref{ex:HMalgo} and the modified housing market~$\HH$ constructed by Algorithm \algoHM{} based on the $p$-improvement of~$H$ where $q_1$ and $q_2$ change their preferences so that $q_1$ comes to prefer~$p$ to~$a$, 
 and $q_2$ comes to prefer $p$ to~$d$. We depicted the core allocation~$X$ for~$H$ using blue lines, and we depicted the corresponding sub-allocation~$Y$, as constructed by Algorithm \algoHM{} in its initialization step, using teal lines.
Sub-allocation~$Y$ has two sources, $a$ and~$d$, highlighted by diamonds, and two sinks,~$\qq_1$ and~$\qq_2$. Envy arcs for both the original allocation~$X$ in~$H$ and the sub-allocation~$Y$ in~$\HH$ are shown using red, dashed lines.
}
\label{fig:EX-construction}
\end{center}
\end{figure}

\medskip
{\bf Initialization.}
Let $Y=X \setminus \{ (q,X(q)): q \in Q \} \cup \{(q,\qq): q \in Q\}$ in $\GG$.
Observe that $Y$ is a sub-allocation in~$\HH$ with source set~$\{X(q):q \in Q\}$ and sink set~$\QQ$.
Additionally, we define a set $R$ of \emph{irrelevant} agents, initially empty.
We may think of irrelevant agents as temporarily deleted from the market. 

\medskip
{\bf Iteration.}
Next, algorithm~\algoHM{} iteratively modifies the sub-allocation~$Y$ and the set~$R$ of irrelevant agents.
It will maintain the property that $Y$ is a sub-allocation in~$\HH - R$;
we denote its envy graph by $\GG_{Y \prec}$, having vertex set $\NN \setminus R$. 
While the source set of~$Y$ changes quite freely during the iteration, the sink set always remains a subset of~$\QQ$.

At each iteration, \algoHM{} performs the following steps:
\smallskip
\begin{enumerate}[itemindent=40pt,itemsep=2pt]
\item[{\bf Step~1.}]  Let $U$ be the source set of $Y$, and $V$ its sink set. If $U=V$, then the iteration stops.
\item[{\bf Step~2.}]  Otherwise, if there exists a $Y$-augmenting arc~$(s,u)$ in~$\GG_{Y \prec}$ entering some source vertex~${u \in U}$ 
(note that $s \in \NN \setminus R$), then proceed as follows.
\begin{itemize}[itemindent=0pt,labelsep=2pt,labelwidth=6pt]
\item[{\bf (a)}] If $s \notin V$, then let $u'=Y(s)$.
The algorithm modifies $Y$ by deleting the arc~$(s,u')$ and adding the arc $(s,u)$ to~$Y$. 
Note that $Y$ thus becomes a sub-allocation from~$U \setminus \{u\} \cup \{u'\}$ to~$V$ in~$\HH - R$.
\item[{\bf (b)}] If $s \in V$, then simply add the arc $(s,u)$ to~$Y$. 
In this case $Y$ becomes a sub-allocation from~$U \setminus \{u\} $ to~$V \setminus \{s\}$ in~$\HH - R$.
\end{itemize}
\item[{\bf Step~3.}]  Otherwise, let $u$ be any vertex in~$U \setminus V$ (not entered by any arc in~$\GG_{Y \prec}$), and  
let $u'=Y(u)$. 
The algorithm adds $u$ to the set~$R$ of irrelevant agents,
and modifies $Y$ by deleting the arc~$(u,u')$.
Again, $Y$ becomes a sub-allocation from~$U \setminus \{u\} \cup \{u'\}$ to~$V$ in~$\HH - R$.
\end{enumerate}

\begin{figure}[t]
\begin{center}
\includegraphics[scale=1]{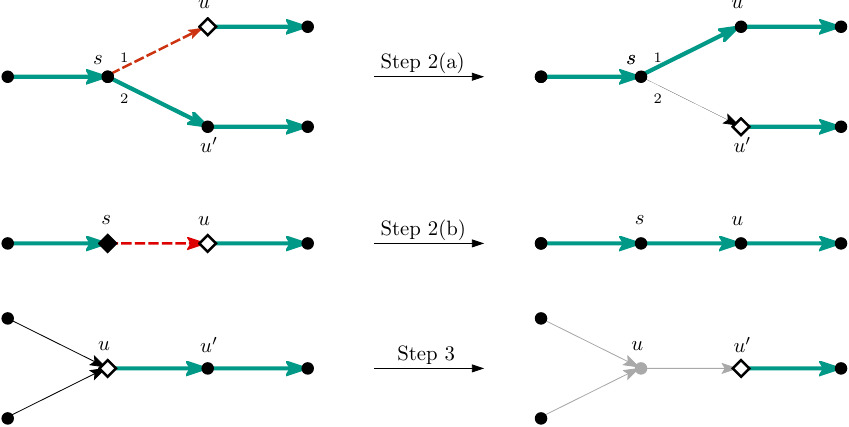}
\caption{Illustration of the possible steps performed during the iteration by \algoHM{}.
The edges of the current sub-allocation~$Y$ are depicted using bold, teal lines, while edges of the
envy graph~$\GG_{Y \prec}$ are shown by dashed, red lines. 
As in Figure~\ref{fig:sub-allocation}, source and sink vertices of~$Y$ are depicted with a white black diamond, respectively.
Vertices of~$R$ as well as all edges incident to them are shown in grey.
}
\label{fig:iteration}
\end{center}

\end{figure}

\medskip
{\bf Output.}
Let $Y$ be the sub-allocation at the end of the above iteration, $U=V$ its source \emph{and} sink set, and $R$ the set of irrelevant agents.
Note that $\QQ \setminus R \setminus U$ may contain at most one agent. 
Indeed, if~$\qq \in \QQ \setminus R \setminus U$, then $Y$ must contain the unique arc leaving~$\qq$,  namely $(\qq,p)$;
therefore, by $\delta^-_Y(p) \leq 1$, at most one such agent $\qq$ can exist. 

To construct the desired allocation~$X'$, the algorithm first 
applies the variant of the TTC algorithm that can deal with partial order preferences, described in Appendix~\ref{sec:app-ttc},  
to the submarket~$H'_{R \cap N}$ of~$H'$ when restricted to the set of irrelevant agents. 
This algorithm computes an allocation $X_R$ in the core of~$H'_{R \cap N}$.

\algoHM{} next 
deletes all agents in~$\QQ$. 
Since any agent in~$\QQ \cap U=\QQ \cap V=V$ has zero in- and outdegree in~$Y$, there is no need to modify our sub-allocation
when deleting such agents; the same applies to agents in~$\QQ \cap R$. 
By contrast, if there exists an agent~$\qq \in \QQ \setminus R \setminus U$, then 
$Y$ must contain the unique incoming and outgoing arcs of~$\qq$,
and therefore the algorithm replaces the arcs $(q,\qq)$ and $(\qq,p)$ with the arc~$(q,p)$.
This way we obtain an allocation on the submarket of~$H'$ on agents set $N \setminus R$. 

Finally, \algoHM{} outputs an allocation~$X'$ defined as
\[
X'= \left\{
\begin{array}{ll} 
X_R \cup Y  & \textrm{ if $\QQ \setminus R \setminus U=\emptyset$,} \\
X_R \cup Y \setminus \{(q,\qq), (\qq,p)\} \cup \{(q,p)\} \qquad\qquad  &  \textrm{ if $\QQ \setminus R \setminus U=\{\qq\}$.} 
\end{array}
\right.
\]

\begin{algorithm}
\caption{Algorithm \algoHM{}}\label{alg:hm-improve}
\hspace*{\algorithmicindent} \textbf{Input:} housing market~$H=(N,\prec)$, its $p$-improvement~$H'=(N,\prec')$ for some agent~$p$, and \\
\hspace*{\algorithmicindent} \phantom{\textbf{Input:}}
an allocation~$X$ in the core of~$H$. \\
\hspace*{\algorithmicindent} \textbf{Output:} an allocation~$X'$ in the core of~$H'$ such that $X(p) \prec_p X'(p)$ or $X(p)=X'(p)$.
\begin{algorithmic}[1]
\If{$X$ is in the core of~$H'$} \textbf{return}~$X$ \EndIf
\State Set $Q= \{a \in N: \prec_a \not = \prec'_a$ and $X(a) \prec'_a p\}$. 
\State Initialize housing market $\HH \leteq H$.
	\ForAll{$q \in Q$}
	\State Add new agent~$\qq$ to~$\HH$, preferring only~$p$ to her own house.
	\State Replace $p$ with $\qq$ in the preferences of~$q$ in~$\HH$.
	\EndFor
\State Set $\QQ=\{\qq : q \in Q\}$.
		\Comment{$\HH$ is now defined.}
\State Create sub-allocation $Y \leteq X \setminus \{ (q,X(q)): q \in Q \} \cup \{(q,\qq): q \in Q\}$.
\State Set $U$ and $V$ as the source and sink set of~$Y$, resp., and set $R \leteq \emptyset$.
\While{$U \neq V$}
\If{there exists an arc~$(s,u)$ in the envy graph~$\GG_{Y \prec}$ with $u \in U$}
		\If{$s \notin V$} 
		\State Set $u' \leteq Y(s)$, and update $Y \leftarrow Y \setminus \{(s,u')\} \cup \{(s,u)\}$ and $U \leftarrow U \setminus \{u\} \cup \{u'\}$.
		\Else \Comment{Case $s \in V$.}
		\State Update $Y \leftarrow Y \cup \{(s,u)\}$, $U \leftarrow U \setminus \{u\}$ and $V \leftarrow V \setminus \{s\}$.		
		\EndIf
\Else \Comment{No arc enters~$U$ in the envy graph~$\GG_{Y \prec}$.}
	\State Pick any agent~$u \in U \setminus V$, and set $u' \leteq Y(u)$. 
	\State Update $Y \leftarrow Y \setminus (u,u')$, $U \leftarrow U \setminus \{u\} \cup \{u'\}$ and $R \leftarrow R \cup \{u\}$.
\EndIf
\EndWhile
\State Compute a core allocation~$X_R$ in the submarket~$H'_{R \cap N}$.
\If{$\QQ \setminus R \setminus U=\emptyset$} set $X' \leteq X_R \cup Y$.
\Else{} set $X' \leteq X_R \cup Y \setminus \{(q,\qq), (\qq,p)\} \cup \{(q,p)\}$ where  $\QQ \setminus R \setminus U=\{\qq\}$.
\EndIf 
\State \textbf{return} the allocation $X'$.
\end{algorithmic}
\end{algorithm}

Let us now illustrate how Algorithm~\algoHM{} works on an example. 

\begin{example}
\label{ex:HMalgo}
Let us consider the housing market~$H$ shown in Figure~\ref{fig:EX-construction}, and let $X$ denote the allocation in the core of~$H$ depicted, i.e., $X$ consists of the cycles $(p,j)$, $(a,b,c,q_1)$, $(d,q_2)$, and~$(e,f,g,h,i)$. 
Consider now the $p$-improvement~$H'$ of~$H$ where both $q_1$ and~$q_2$ place~$p$ as their second favorite choice (instead of the third one). 

The algorithm starts by checking whether $X$ is in the core of~$H'$, and finds that---since arcs~$(q_1,p)$ and~$(q_2,p)$ have become $X$-augmenting arcs---allocation~$X$ admits the blocking cycle $(q_1,p,h)$. 
Thus, the algorithm proceeds with modifying the housing market by subdividing the arcs~$(q_1,p)$ and~$(q_2,p)$ with newly added agents~$\qq_1$ and~$\qq_2$. 

\edef\myindent{\the\parindent}
\edef\myparskip{\the\parskip}

\medskip
\noindent
\begin{minipage}{0.64\textwidth}
\setlength{\parskip}{\myparskip}
\setlength{\parindent}{\myindent}

In the initialization phase, algorithm \algoHM{} constructs the sub-allocation~$Y$ based on~$X$ in the modified housing market~$\HH$, as seen on Figure~\ref{fig:EX-construction}; we repeat the figure to the right here. Its source set is $U=\{a,d\}$ and its sink set is $V=\QQ=\{\qq_1,\qq_2\}$. The set of irrelevant agents is set to~$R=\emptyset$. 
Then the algorithm starts iterating Steps~1--3 with the following results: \par
\end{minipage}\qquad
\begin{minipage}{0.3\textwidth}
\includegraphics[width=\textwidth]{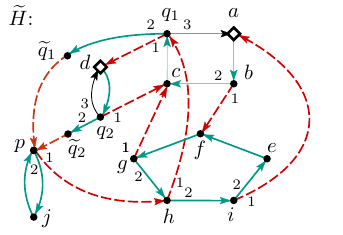}
\end{minipage}
\begin{enumerate}

\smallskip

\begin{minipage}{0.6\textwidth}
\item Considering the initial market~$\HH$ as shown in Figure~\ref{fig:EX-construction}, the algorithm finds that both sources, $a$ and~$d$, are entered by some $Y$-augmenting arc, namely by~$(q_1,d)$ and~$(i,a)$. It may choose either one of these arcs to proceed with; we consider the course of the algorithm when it starts with the arc~$(q_1,d)$: it replaces~$(q_1,\qq_1)$ with~$(q_1,d)$ in~$Y$, so the source set becomes $U=\{a,\qq_1\}$, while the sink set remains $V=\{\qq_1,\qq_2\}$. The resulting sub-allocation is depicted to the right. 
\end{minipage}\qquad
\begin{minipage}{0.3\textwidth}
\includegraphics[width=\textwidth]{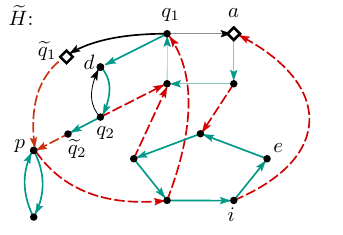}
\end{minipage}

\begin{minipage}{0.6\textwidth}
\item Next, the algorithm finds that only the source agent~$a$ (from among the source set~$U=\{a,\qq_1\}$) is entered by some $Y$-augmenting arc, namely, by~$(i,a)$. It replaces $(i,e)$ with~$(i,a)$ in~$Y$ yielding the sub-allocation shown to the right; the source set becomes $U=\{e,\qq_1\}$. 
\end{minipage}\qquad
\begin{minipage}{0.3\textwidth}
\includegraphics[width=\textwidth]{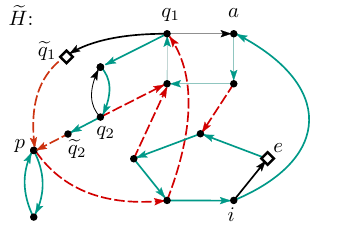}
\end{minipage} 

\begin{minipage}{0.6\textwidth}
\item Next, the algorithm finds that neither of the sources~$e$ and~$\qq_1$ is entered by an $Y$-augmenting arc; hence, it takes the unique source agent that is not a sink, namely $e$, and declares it irrelevant by setting $R=\{e\}$, and removing~$e$ from the market~$\HH$. The arcs of~$Y$ incident to~$e$ are removed from~$Y$, thus the source set becomes $U=\{f,\qq_1\}$.
\end{minipage}\qquad
\begin{minipage}{0.3\textwidth}
\includegraphics[width=\textwidth]{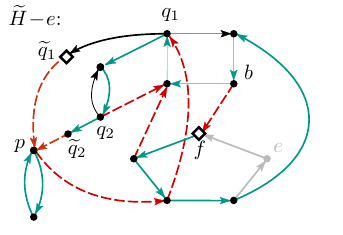}
\end{minipage} 

\begin{minipage}{0.6\textwidth}
\item Next, the algorithm finds that only the source agent~$f$ (from among the source set~$U=\{f,\qq_1\}$) is entered by some $Y$-augmenting arc, namely, by~$(b,f)$. It replaces $(b,c)$ with~$(b,f)$ in~$Y$, and the source set becomes $U=\{c,\qq_1\}$. 
\end{minipage}\qquad
\begin{minipage}{0.3\textwidth}
\includegraphics[width=\textwidth]{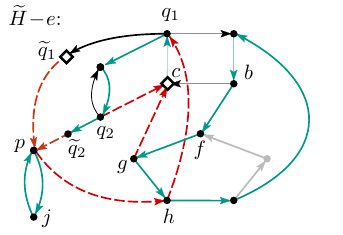}
\end{minipage} 

\begin{minipage}{0.6\textwidth}
\item Next, the algorithm finds that only the source agent~$c$ (from among the source set~$U=\{f,\qq_1\}$) is entered by some $Y$-augmenting arc, namely, the arcs~$(q_2,c)$ and~$(g,c)$. Here we consider the course of the algorithm when it chooses the arc~$(q_2,c)$: it replaces~$(q_2,\qq_2)$ with~$(q_2,c)$ in~$Y$, and the source set becomes $U=\{\qq_2,\qq_1\}$. 
\end{minipage}\qquad
\begin{minipage}{0.3\textwidth}
\includegraphics[width=\textwidth]{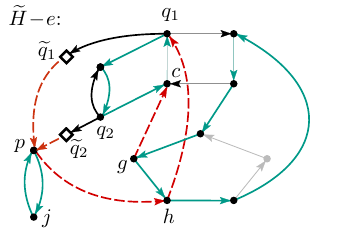}
\end{minipage} 
\end{enumerate}

At this point, the algorithm detects that the source set $U$ equals the sink set~$V=\QQ$, and stops the iteration. It computes a core allocation for the  submarket  $H'_{R \cap N}$ of irrelevant agents; since $R=\{e\}$, this allocation consists of the single arc~$(e,e)$.  Since $\QQ \setminus R \setminus U=\emptyset$, it outputs the allocation $Y \cup \{(e,e)\}$ in which agents trade along the cycles~$(p,j)$, $(q_1,c,q_2,d)$, and~$(a,b,f,g,h,i)$; see Figure~\ref{fig:EX-output}. 

\medskip

Consider now an alternative course for the algorithm when, after the fourth iteration (see the figure next to Step~4 above), in the fifth iteration step the arc~$(g,c)$ gets chosen instead of the arc~$(q_2,c)$; we omit the corresponding figures for Steps~$5^*$ and~$6^*$ in this alternative course:

\begin{enumerate}
\begin{minipage}{0.9\textwidth}
\item[$\quad$5$^*$.] The algorithm replaces~$(g,h)$ with~$(g,c)$ in~$Y$, and the source set becomes $U=\{h,\qq_1\}$. 
\end{minipage}

\begin{minipage}{0.9\textwidth}
\item[$\quad$6$^*$.] The algorithm replaces~$(p,j)$ with~$(p,h)$ in~$Y$, and the source set becomes $U=\{j,\qq_1\}$. 
\end{minipage}

\begin{minipage}{0.6\textwidth}
\item[7$^*$.] The algorithm finds that no $Y$-augmenting arc enters either of the sources, and thus removes agent~$j$, the only agent in~$U \setminus V$, together with the arc~$(j,p)$. Hence, the set of irrelevant agents is set to $R=\{e,j\}$, and 
the source set becomes $U=\{p,\qq_1\}$. 
\end{minipage}\qquad
\begin{minipage}{0.3\textwidth}
\includegraphics[width=\textwidth]{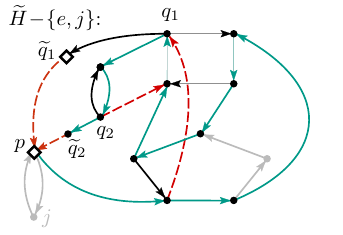}
\end{minipage} 

\begin{minipage}{0.6\textwidth}
\item[8$^*$.] The algorithm finds that only the source~$p$ is entered by some $Y$-augmenting arc, namely by the arcs~$(\qq_1,p)$ and~$(\qq_2,p)$. It chooses one of them, say~$(\qq_1,p)$. Since~$\qq_1$ is a sink, it adds~$(\qq_1,p)$ to~$Y$, removes~$p$ from the source set, and removes~$\qq_1$ from the sink set. This yields $U=\{\qq_1\}$ and $V=\{\qq_2\}$. 
\end{minipage}\qquad
\begin{minipage}{0.3\textwidth}
\includegraphics[width=\textwidth]{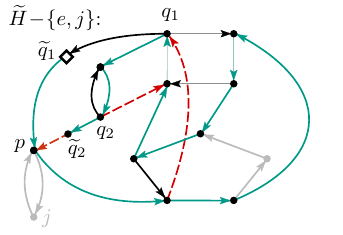}
\end{minipage} 

\begin{minipage}{0.6\textwidth}
\item[9$^*$.] The unique source~$\qq_1$ is not entered by any $Y$-augmenting arc, hence the algorithm declares it irrelevant by setting~$R=\{e,j,\qq_1\}$ and removes it form the market. The arc~$(\qq_1,p)$ is removed from~$Y$, and the source set becomes $U=\{p\}$. 
\end{minipage}\qquad
\begin{minipage}{0.3\textwidth}
\includegraphics[width=\textwidth]{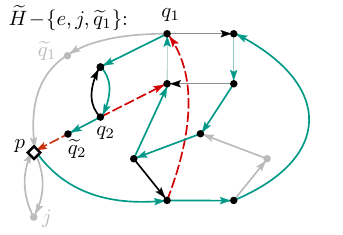}
\end{minipage} 

\begin{minipage}{0.6\textwidth}
\item[10$^*$.] The algorithm finds that the (unique) source~$p$ is entered by a unique $Y$-augmenting arc, namely~$(\qq_2,p)$. Since~$\qq_2$ is a sink (recall that $V=\{\qq_2\}$ at this point), it adds $(\qq_2,p)$ to~$Y$, removes~$p$ from the source set, and removes~$\qq_2$ from the sink set. This yields $U=V=\emptyset$. 
\end{minipage}\qquad
\begin{minipage}{0.3\textwidth}
\includegraphics[width=\textwidth]{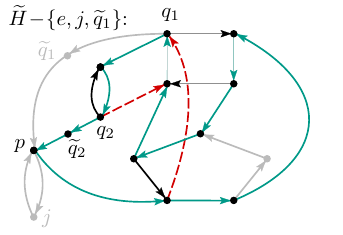}
\end{minipage} 
\end{enumerate}

At this point, the algorithm detects that the source set $U$ equals the sink set~$V$ (both empty), and stops the iteration. It computes a core allocation for the submarket $H'_{R \cap N}$ of irrelevant agents; since $R=\{e,j,\qq_1\}$, this allocation consists of the arcs~$(e,e)$ and~$(j,j)$. Since $\QQ \setminus R \setminus U=\{\qq_2\}$, it outputs the allocation $Y \setminus \{(q_2,\qq_2),(\qq_2,p) \} \cup \{(q_2,p\} \cup \{(e,e),(j,j)\}$ in which agents trade along the single cycle~$(p,h,i,a,b,f,g,c,q_1,d,q_2)$; see Figure~\ref{fig:EX-output}. 
\end{example}

\begin{figure}[b]
\begin{center}
\includegraphics[scale=1]{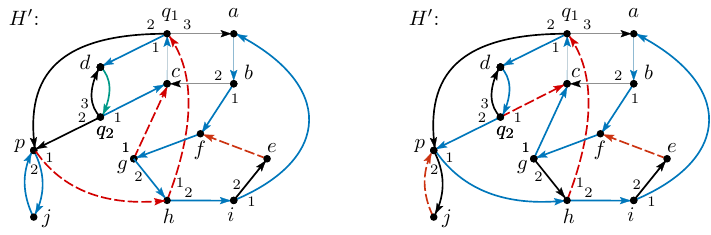}
\caption{Two allocation in the core of housing market~$H'$ of Example~\ref{ex:HMalgo}, computed by Algorithm \algoHM{}. The  figure to the left depicts the allocation obtained by Steps~1--5, while the figure to the right depicts the allocation obtained when Steps~1--4 are followed by Steps~$5^*$--$10^*$.
}
\label{fig:EX-output}
\end{center}
\end{figure}

\subsection{Correctness of algorithm \algoHM{}.}
\label{sec:main-algo-correctness}

We begin proving the correctness of algorithm \algoHM{} with the following.

\begin{lemma} 
\label{lem:stable-pre-allocation}
At each iteration, sub-allocation~$Y$ is stable in $\HH -R$.
\end{lemma}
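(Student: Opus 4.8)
The plan is to prove the invariant by induction on the number of iterations already performed. So there are two things to establish: (\textbf{base}) the sub-allocation~$Y$ produced by the initialization, together with $R=\emptyset$, is stable in~$\HH$; and (\textbf{step}) if $Y$ is stable in~$\HH-R$ at the start of an iteration, then after the update carried out in that iteration (Step~2a, Step~2b, or Step~3, or the no-op of Step~1 when $U=V$) the resulting~$Y$ is still stable in~$\HH-R$ for the updated~$R$. This covers the start of every iteration and, via the step, also the state of~$Y$ when the while-loop terminates.

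For the base case, observe first that every agent $\qq\in\QQ$ has in-degree~$0$ in the envy graph $\GG_{Y\prec}$: the only arc of~$\GG$ entering~$\qq$ is the one created by subdividing $(q,p)$, namely $(q,\qq)$, and since $Y(q)=\qq$ this arc lies in~$Y$ and hence is not augmenting. Consequently no directed cycle of $\GG_{Y\prec}$ can visit a vertex of~$\QQ$, so it suffices to show that $\GG_{Y\prec}$ restricted to~$N$ is acyclic, and I claim this restriction is a subgraph of $G^{H'}_{X\prec}\setminus\{(q,p):q\in Q\}$. Indeed, let $(a,b)$ with $a,b\in N$ be $Y$-augmenting in~$\HH$; as $a\notin\QQ$, $a$ is not a sink of~$Y$, so $Y(a)\prec^{\HH}_a b$. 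If $a\notin Q$, then $Y(a)=X(a)$ and $a$ has the same preferences in~$\HH$ as in~$H'$, so $(a,b)\in G^{H'}_{X\prec}$, and $(a,b)$ is not among the removed arcs because $a\notin Q$. If $a=q\in Q$, then $Y(q)=\qq$ and, as $(q,p)$ was subdivided, $b\neq p$; the arc being augmenting means $p\prec'_q b$, and combined with $X(q)\prec'_q p$ (which holds since $q\in Q$) transitivity of $\prec'_q$ yields $X(q)\prec'_q b$, so again $(q,b)\in G^{H'}_{X\prec}$, and $b\neq p$ shows it is not a removed arc. Now, by the observation recorded right after the definition of~$Q$ (which uses $X\in\mathrm{core}(H)$), every directed cycle of $G^{H'}_{X\prec}$ must use an arc of $\{(q,p):q\in Q\}$; hence $G^{H'}_{X\prec}\setminus\{(q,p):q\in Q\}$ is acyclic, and therefore so is $\GG_{Y\prec}$, i.e.\ $Y$ is stable in~$\HH$.

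For the inductive step I would show that each update can only delete arcs from the envy graph once we pass to the surviving vertex set $\NN\setminus R$, so acyclicity is inherited automatically. In Steps~2a and~2b the processed augmenting arc $(s,u)$ becomes the sub-allocation arc $(s,Y(s))$ and hence stops being augmenting; moreover the out-arc of~$s$ in~$Y$ is either newly created pointing to~$u$ (Step~2b, where $s$ was a sink, so previously \emph{every} arc leaving~$s$ was augmenting) or redirected from $Y(s)=u'$ to~$u$ with $u'\prec_s u$ (Step~2a), and in both cases transitivity and antisymmetry of $\prec_s$ show that the set of augmenting arcs leaving~$s$ can only shrink; no other vertex changes its $Y$-image or its sink status, so no new augmenting arc is created. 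In Step~3 the vertex~$u$ is moved into~$R$, i.e.\ deleted from $\HH-R$, which only removes~$u$ and its incident arcs; deleting $(u,Y(u))$ from~$Y$ alters no $Y$-image of a surviving vertex and creates no new sink, so again no new augmenting arc appears. Thus in every case the envy graph over $\NN\setminus R$ is a subgraph of the previous one restricted to the surviving vertices, and stability is preserved.

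The only delicate part is the base case, and within it the verification of the inclusion $\GG_{Y\prec}[N]\subseteq G^{H'}_{X\prec}\setminus\{(q,p):q\in Q\}$: one has to keep careful track of how preferences are transformed under the subdivision --- the dummy agent~$\qq$ inheriting the role that~$p$ plays in $\prec'_q$ --- and of the behaviour of~$Y$ on the agents of~$Q$. Once this inclusion is in place, the acyclicity of $\GG_{Y\prec}$ is immediate from the already-established fact that every blocking cycle of~$X$ in~$H'$ runs through an arc $(q,p)$ with $q\in Q$. The inductive step is by comparison a routine ``arcs can only disappear'' argument.
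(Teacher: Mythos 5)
Your proof is correct and follows essentially the same route as the paper's: induction on the number of iterations, with the base case reducing a hypothetical blocking cycle for~$Y$ to a blocking cycle for~$X$ (via the observation that any such cycle avoids~$\QQ$ and that the preference of each $q\in Q$ for $\qq=Y(q)$ over $X(q)$ lets transitivity transfer envy arcs), and with the inductive step observing that each of Steps~2(a), 2(b) and~3 only deletes arcs or a vertex from the envy graph. The only cosmetic difference is that you phrase the base case as an inclusion into $G^{H'}_{X\prec}\setminus\{(q,p):q\in Q\}$ and invoke the earlier observation about blocking cycles in~$H'$, whereas the paper argues directly that the cycle would block~$X$ in~$H$; these are the same argument.
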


\begin{proof}
The proof is by induction on the number $n$ of iterations performed.
For $n=0$, suppose for the sake of contradiction that $C$ is a cycle in~$\GG_{Y \prec}$. 
First note that $C$ cannot contain any agent in~$\QQ$, since the unique arc entering~$\qq$, that is, the arc~$(q,\qq)$, 
is contained in~$Y$ by definition. Hence, $C$ is also a cycle in~$H$.
Moreover, recall that initially $Y(a)=X(a)$ for each agent $a \in N \setminus Q$, and 
by the definition of~$Q$, we also know $X(q) \prec_q \qq=Y(q)$ for each $q \in Q$. 
Therefore, any arc of~$C$ is an $X$-augmenting arc as well, and thus $C$ is a blocking cycle for~$X$ in~$H$.
This contradicts our assumption that $X$ is in the core of~$H$. 
Hence, $Y$ is stable in~$\HH$ at the beginning; note that $R=\emptyset$ initially. 

For $n \geq 1$, assume that the algorithm has performed $n-1$ iterations so far. 
Let $Y$ and $R$ be as defined at the beginning of the $n$-th iteration, and 
let $Y'$ and $R'$ be the sub-allocation and the set of irrelevant agents obtained after the modifications in this iteration. 
Let also $U$ and~$V$ ($U'$ and~$V'$) denote the source and sink set of~$Y$ (of~$Y'$, respectively).
By induction, we may assume that $Y$ is stable in $\HH -R$, so $\GG_{Y \prec}$ is acyclic.
In case \algoHM{} does not stop in Step~1 but modifies $Y$ and possibly $R$, 
we distinguish between three cases:

\begin{itemize}
\item[(a)]
The algorithm modifies $Y$ in Step~2(a), by using a $Y$-augmenting arc~$(s,u)$ where $s \notin V$; then $R'=R$.
Note that $s \in $ prefers $Y'$ to~$Y$, and 
for any other agent~$a \in N \setminus R'$ we know $Y(a) =Y'(a)$.
Hence, this modification amounts to deleting all arcs~$(s,a)$ from the envy graph $\GG_{Y \prec}$ 
where $Y(s) \prec_s a \preceq_s Y'(s)$.
\item[(b)]
The algorithm modifies $Y$ in Step~2(b), by using a $Y$-augmenting arc~$(s,u)$ where $s \in V$; then $R'=R$.
First observe that $V \subseteq \QQ$, as the only way the sink set of~$Y$ can change is 
when an agent ceases to be a sink of the current sub-allocation due to the application of Step~2(b). 
Thus, $s \in V$ implies $s \in \QQ$, which means that $(s,u)$ must be the unique arc~$(s,p)$ leaving~$s$. 
Hence, adding $(s,u)$ to~$Y$ amounts to deleting the arc~$(s,u)$ from the envy graph $\GG_{Y \prec}$.
\item[(c)] 
The algorithm modifies $Y$ in Step~3, by adding an agent~$u \in U \setminus V$ 
to the set of irrelevant agents, i.e.,~$R'=R \cup \{u\}$.
Then $Y'(a)=Y(a)$ for each agent~$a \in N \setminus R'$, 
so the envy graph $\GG_{Y' \prec}$ is obtained from $\GG_{Y \prec}$ by deleting~$u$.
\end{itemize}
Since deleting some arcs or a vertex from an acyclic graph results in an 
acyclic graph, the stability of~$Y'$ is clear. 
\qed
\end{proof}

We proceed with the observation that an agent's situation in $Y$ may only improve, unless it becomes irrelevant:
this is a consequence of the fact that the algorithm only deletes arcs and agents from the envy graph $\GG_{Y \prec}$.

\begin{proposition}
\label{obs:happier-and-happier}
Let $Y_1$ and $Y_2$ be two sub-allocations computed by algorithm \algoHM{}, with $Y_1$ computed at an earlier step than $Y_2$, 
and let $a$ be an agent that is not irrelevant at the end of the iteration when $Y_2$ is computed. 
Then either $Y_1(a)=Y_2(a)$ or $a$ prefers $Y_2$ to $Y_1$.
\end{proposition}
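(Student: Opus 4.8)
The plan is to track how the sub-allocation $Y$ evolves across the iterations and to argue that, for any fixed non-irrelevant agent $a$, the house $Y(a)$ can only move upward in $a$'s preference order. First I would observe that it suffices to prove the claim for two consecutive sub-allocations $Y$ and $Y'$, say $Y$ at the start of an iteration and $Y'$ at its end, and for an agent $a$ that is not in $R'$, the set of irrelevant agents after that iteration; the general statement then follows by transitivity of $\preceq_a$ along the (finite) chain of iterations, taking care that once $a$ becomes irrelevant it is never revived, so if $a \notin R'$ at the iteration producing $Y_2$ then $a$ was non-irrelevant at every earlier iteration as well.

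Next I would go through the three cases of a single iteration, exactly as in the proof of Lemma~\ref{lem:stable-pre-allocation}. In Step~3 (case (c)), the only agent whose out-arc is removed is $u$ itself, and $u$ is added to $R'$; for every $a \in N \setminus R'$ we have $Y'(a) = Y(a)$, so the claim holds trivially. In Step~2(b) (case (b)), the agent $s$ lies in the sink set, hence $s \in \QQ$, so its unique out-arc $(s,p)$ is added; this affects only $s$, and $s \in \QQ$, so if we are proving the statement for agents in $\NN$ we note $s$ had no out-arc in $Y$ and now has one, which is consistent (and in any case $\qq$-agents are deleted at the end, so the statement is only interesting for $a \in N$, for whom $Y'(a)=Y(a)$ here). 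In Step~2(a) (case (a)), the only agent whose out-arc changes is $s$: the arc $(s, Y(s))$ is replaced by $(s, u)$, and the defining condition that $(s,u)$ is a $Y$-augmenting arc entering a source vertex guarantees either $s \in V$ (handled as in case (b)) or $Y(s) \prec_s u$, which is precisely the statement that $s$ prefers $Y'$ to $Y$. For every other agent $a \in N \setminus R'$ we again have $Y'(a) = Y(a)$. This case analysis is essentially the same bookkeeping already carried out for Lemma~\ref{lem:stable-pre-allocation}, just read off from the "$Y'(a)$ vs.\ $Y(a)$" column instead of the acyclicity column.

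The main obstacle, such as it is, is purely organisational: one must be careful that the notion of "earlier step" in the statement is compatible with the induction, i.e.\ that an agent which is non-irrelevant when $Y_2$ is computed was non-irrelevant throughout the interval $[Y_1, Y_2]$ — this is immediate because $R$ only grows. A second minor point is the distinction between the iteration variable $Y$ (which only ever gains arcs in Step~2(b) or swaps one out-arc for a better one in Step~2(a), or loses one in Step~3) and the final $\QQ$-contraction used to build $X'$; since the proposition speaks only about sub-allocations "computed by algorithm \algoHM{}", i.e.\ the ones arising during the while-loop, the contraction step need not be considered here. Once these framing issues are settled, the proof is a short induction with the three-case analysis above as its inductive step, and I would present it in roughly that length.
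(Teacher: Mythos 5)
Your proof is correct and follows essentially the same route as the paper, which justifies this proposition with a one-line remark (the algorithm only ever deletes arcs and vertices from the envy graph) implicitly resting on the same per-step case analysis already carried out in the proof of Lemma~\ref{lem:stable-pre-allocation}. Your write-up simply makes explicit the induction over iterations, the monotonicity of $R$, and the three-case bookkeeping that the paper leaves to the reader.
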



In the next two lemmas, we prove that \algoHM{} produces a core allocation.
We start by explaining why irrelevant agents may not become the cause of instability in the housing market.

\begin{lemma}
\label{lem:irrelevant-agents}
At the end of algorithm \algoHM{}, 
there does not exist an arc $(a,b) \in \EE$ 
such that $a \notin R$, $b \in R$ and $Y(a) \prec'_a b$.
\end{lemma}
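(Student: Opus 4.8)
The plan is to track the moment at which an agent becomes irrelevant and argue that no profitable arc towards it can survive to the end. First I would set up notation: suppose for contradiction that at termination there is an arc $(a,b) \in \EE$ with $a \notin R$, $b \in R$, and $Y(a) \prec'_a b$, and let $Y$ denote the final sub-allocation. Let $Y^b$ be the sub-allocation at the iteration when $b$ was added to $R$, i.e.\ the moment Step~3 was applied with the source vertex $b$ playing the role of~$u$. The key structural fact to exploit is the Step~3 precondition: when Step~3 is executed, \emph{no} $Y^b$-augmenting arc enters the current source set, and in particular no $Y^b$-augmenting arc enters~$b$. Since $b \in \QQ$ cannot hold (agents of $\QQ$ are never made irrelevant — they are only ever deleted at the output stage), $b \in N$, so the arc $(a,b)$ lies in $G^{H'}$ already, and being $Y^b$-augmenting for $b$ as a source would require $(a,b)$ to be in the envy graph at that step.

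Next I would compare the valuation of $b$ at the two moments. At the time $b$ became irrelevant, $a$ was not yet irrelevant either (if $a$ had already been in $R$, then no arc in $\EE$ from $a$ could be relevant, but more to the point $a$ is still outside $R$ at termination, and agents once placed in $R$ stay there, so $a \notin R$ throughout). Hence both $a$ and $b$ are non-irrelevant at the iteration producing~$Y^b$, and also at all later iterations up to the one just before $b$ enters~$R$. By Proposition~\ref{obs:happier-and-happier}, the situation of $a$ only improves (weakly) from $Y^b$ onwards, so $Y^b(a) \preceq'_a Y(a)$. Combined with the contradiction hypothesis $Y(a) \prec'_a b$, transitivity of $\prec'_a$ gives $Y^b(a) \prec'_a b$, so $(a,b)$ is a $Y^b$-augmenting arc. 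But then $(a,b)$ is an arc of $\GG_{Y^b \prec}$ entering the source vertex $b \in U$ at that iteration, which means \algoHM{} would have executed Step~2 rather than Step~3 — contradicting the fact that $b$ was added to $R$ via Step~3.

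The main obstacle is making the timing argument airtight: one has to be careful that $b$ is genuinely a source vertex of $Y^b$ at the moment Step~3 fires (this is exactly how Step~3 is phrased, picking $u \in U$), that $a$ is not irrelevant at that moment (needed so that $(a,b)$ actually appears as an arc of the envy graph $\GG_{Y^b \prec}$, whose vertex set is $\NN \setminus R$), and that the envy arc $(a,b)$ at the final step is an envy arc with respect to $\prec'$ — which it is, since throughout the iteration the envy graph is computed in $\HH$ whose preferences agree with $\prec'$ on $N \setminus \{q : q \in Q\}$ and we have replaced the relevant $p$-arcs by $\qq$-arcs, so for $b \in N$ and $a \notin \QQ$ the relation $Y(a) \prec'_a b$ is exactly the condition for $(a,b)$ to be $Y$-augmenting in $\HH$. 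Once these points are nailed down, the contradiction with the Step~3 precondition closes the argument; note also the degenerate possibility $Y(a) = a$ (i.e.\ $a$ is a sink) is automatically covered, since a sink vertex of $Y$ has every outgoing arc $Y$-augmenting by definition, so the same reasoning applies verbatim. \qed
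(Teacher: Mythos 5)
Your proof is correct and follows essentially the same route as the paper's: locate the iteration at which $b$ was placed into $R$, use Proposition~\ref{obs:happier-and-happier} together with transitivity of $\prec'_a$ to conclude that $(a,b)$ was already an augmenting arc entering the source vertex $b$ at that moment, contradicting the precondition for executing Step~3. (Your closing aside conflates $Y(a)=a$ with $a$ being a sink --- a sink has no outgoing arc in $Y$ at all --- but this is immaterial, since the hypothesis $Y(a)\prec'_a b$ already presupposes that $Y(a)$ is defined.)
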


\begin{proof}
Suppose for contradiction that $(a,b)$ is such an arc, 
and let $Y$ and $R$ be as defined at the end of the last iteration. 
Suppose that \algoHM{} adds $b$ to~$R$ during the $n$-th iteration, and let 
$Y_n$ be the sub-allocation at the beginning of the $n$-th iteration.
By Proposition~\ref{obs:happier-and-happier}, either $Y_n(a)=Y(a)$ or~$Y_n(a) \prec'_a Y(a)$. 
The assumption $Y(a) \prec'_a b$ yields $Y_n(a) \prec'_a b$ by the transitivity of~$\prec'_a$.
Thus, $(a,b)$ is a $Y_n$-augmenting arc entering~$b$, contradicting our assumption that 
the algorithm put $b$ into $R$ in Step~3 of the $n$-th iteration.
\qed
\end{proof}

\begin{lemma}
\label{lem:core}
The output of \algoHM{} is an allocation in the core of~$H'$.
\end{lemma}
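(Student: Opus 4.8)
The plan is to show that the output $X'$ is both an allocation in $H'$ and that its envy graph in $H'$ is acyclic. First I would verify that $X'$ is a genuine allocation, i.e., that every agent in $N$ has in- and out-degree exactly $1$ in $X'$. This splits into three groups of agents: the irrelevant agents $R \cap N$, which are covered exactly once by the core allocation $X_R$ computed on the submarket $H'_{R \cap N}$ via TTC; the agents not in $R$ and not involved in the $\QQ$-contraction, which are covered by $Y$ (here one uses that at the end of the iteration $U = V$ is the common source-and-sink set, so $Y$ restricted to $\NN \setminus R$ is a union of cycles, and after deleting the zero-degree agents in $\QQ \cap V$ and in $\QQ \cap R$ this becomes a union of cycles on $(N \setminus R) \setminus$(contracted part)); and finally the at-most-one agent $\qq \in \QQ \setminus R \setminus U$ together with $q$ and $p$, which is handled by the arc replacement $(q,\qq),(\qq,p) \rightsquigarrow (q,p)$. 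One must check that $p$ genuinely needs an incoming arc from this replacement — i.e. that $p \notin R$ and $p$'s incoming arc in $Y$ was exactly $(\qq,p)$ — and that no other agent's degrees are disturbed.

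Second, and this is where the real work lies, I would argue that $X'$ admits no blocking cycle in $H'$. Suppose $C$ were such a cycle. The idea is to locate where $C$ lives relative to the partition $N = (R \cap N) \,\dot\cup\, (N \setminus R)$. Since $X_R$ is in the core of $H'_{R\cap N}$, the cycle $C$ cannot be entirely inside $R \cap N$. So $C$ uses at least one arc $(a,b)$ with $a \notin R$. If such an arc points into $R$, i.e. $b \in R$, then because $a$ prefers $b$ to $X'(a)$ we would contradict Lemma~\ref{lem:irrelevant-agents} — but here one has to be careful to relate $X'(a)$ to $Y(a)$: for $a \notin R$ not on the contracted segment, $X'(a) = Y(a)$, and for $a = q$ on the contracted segment $X'(q) = p \succeq'_q \qq = Y(q)$, so $X'(a) \prec'_a b$ still implies $Y(a) \prec'_a b$, which is what Lemma~\ref{lem:irrelevant-agents} forbids. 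Hence $C$ avoids $R$ entirely and is a cycle in the envy graph of $X'$ restricted to $N \setminus R$. Now I would transfer $C$ to a blocking cycle for the stable sub-allocation $Y$ in $\HH - R$: any envy arc $(a,b)$ of $X'$ with both endpoints outside $R$ and outside the contracted segment is an envy arc of $Y$; for the contracted segment, an arc into $q$ that is $X'$-augmenting (beats $p$) is also $Y$-augmenting (beats $\qq$), an arc out of $q$ that beats $p$ cannot exist since $X'(q)=p$ is $q$'s relevant top, and an arc into or out of $p$ in $C$ corresponds to an arc into $\qq$ (which is only $(q,\qq) \in Y$, not augmenting) or out of $p$ — so the only way $C$ could pass through $p$ forces it through the arc $(q,\qq)$, which is impossible. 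This yields a cycle in $\GG_{Y\prec}$, contradicting Lemma~\ref{lem:stable-pre-allocation} applied at the final iteration (where $U=V$, so $Y$ is a genuine stable sub-allocation with empty path-set on $\NN \setminus R$).

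The main obstacle I anticipate is the bookkeeping around the single contracted agent $\qq \in \QQ \setminus R \setminus U$ and the agents $p$ and $q$: one must check on a case-by-case basis that an arc of a hypothetical blocking cycle $C$ incident to $p$ or $q$ in $H'$ lifts to an augmenting arc of $Y$ in $\HH - R$ incident to $p$, $q$, or $\qq$, using that $\qq$'s only out-arc is $(\qq,p)$, that $\qq$'s only in-arc is $(q,\qq) \in Y$, and that $q$ weakly prefers $p$ to $\qq$ in $H'$ (indeed they are adjacent in $q$'s preference order). A secondary subtlety is making sure the termination claim is in force — that the while-loop does indeed halt with $U = V$ — but this can be cited or deferred; here I would simply invoke that at termination $U = V$ and $Y$ is stable in $\HH - R$ by Lemma~\ref{lem:stable-pre-allocation}. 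Assembling these pieces, every case leads to a contradiction, so $\GG^{H'}_{X' \prec}$ is acyclic and $X'$ lies in the core of $H'$.
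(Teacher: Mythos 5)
Your overall architecture matches the paper's: first verify that $X'$ is an allocation, then show its envy graph is acyclic by (i) noting that $G^{H'}_{X' \prec}[R]$ is the envy graph of~$X_R$ and hence acyclic, (ii) ruling out augmenting arcs from $N \setminus R$ into~$R$ via Lemma~\ref{lem:irrelevant-agents} (your translation of $X'(a) \prec'_a b$ into $Y(a) \prec'_a b$, including the case $a \in Q$, is exactly the paper's), and (iii) embedding what remains into the acyclic graph~$\GG_{Y \prec}$. The allocation part and steps (i)--(ii) are sound.

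The gap is in step (iii), in the treatment of the arcs $(q',p)$ for $q' \in Q$. You discuss only ``the single contracted agent $\qq \in \QQ \setminus R \setminus U$'' together with its $q$ and $p$, but \emph{every} $q' \in Q$ has an arc $(q',p)$ in $G^{H'}$ that was subdivided in~$\GG$, and the critical case is $q' \in Q \setminus R$ with $\widetilde{q'} \in U$ at termination (an ``abandoned'' source): there you must decide whether $(q',p)$ can be $X'$-augmenting, and nothing in your write-up does this. Your assertion that ``an arc out of $q$ that beats $p$ cannot exist since $X'(q)=p$ is $q$'s relevant top'' is false in general (for such a $q'$ we have $X'(q') = Y(q') \neq p$, and $p$ need not top anyone's list), and the claim that ``the only way $C$ could pass through $p$ forces it through the arc $(q,\qq)$'' is wrong: arcs $(a,p)$ with $a \notin Q$ are not subdivided and lift directly, so $C$ may legitimately pass through~$p$. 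The paper closes exactly this case with its Claim: if $\widetilde{q'} \in U$, then $\widetilde{q'}$ must have become a source in Step~2(a) (Step~3 would have put $q'$ into~$R$), which forces $\qq' \prec'_{q'} Y_n(q')$ at that moment; Proposition~\ref{obs:happier-and-happier} propagates this to the final~$Y$, so $(q',p)$ is never $X'$-augmenting. (One could instead lift $(q',p)$ to the two-arc path $(q',\widetilde{q'}),(\widetilde{q'},p)$ in $\GG_{Y\prec}$ --- the first arc is augmenting because $X'(q') \prec'_{q'} p$ translates to $Y(q') \prec_{q'} \widetilde{q'}$, the second because $\widetilde{q'}$ is a sink --- but this still requires the case split over $\widetilde{q'} \in U$, $\widetilde{q'} \in R$, and $\widetilde{q'} \notin U \cup R$, which you do not carry out.) As it stands, the technical heart of the lemma is unproven.
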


\begin{proof}
Let $Y$ and $R$ be the sub-allocation and the set of irrelevant agents, respectively, at the end of algorithm \algoHM{},
and let $U$ be the source set of $Y$.
To begin, we prove it formally that the output $X'$ of~\algoHM{} is an allocation for~$H'$. 

Since \algoHM{} stops only when $U=V$, the arc set $Y$ forms a collection of mutually vertex-disjoint cycles in~$\HH-R$
that covers each agent in $\NN \setminus R \setminus U$;
agents of~$U$ have neither incoming nor outgoing arcs in~$Y$. 
As no agent outside~$\QQ$ can become a sink of~$Y$, we know~$U=V \subseteq \QQ$.

First, assume $\QQ \setminus R \setminus U=\emptyset$, that is, $\QQ \setminus R =U=V$. 
In this case, $Y$ is the union of cycles covering each agent in~$N \setminus R$ exactly once.
Hence, $Y$ is an allocation in the submarket of~$H'$ restricted to agent set~$N \setminus R$,
i.e., $H'_{N \setminus R}$. 

Second, assume $\QQ \setminus R \setminus U \neq \emptyset$.
In this case, $Y$ is the union of cycles covering each agent in~$\NN \setminus R \setminus V$ exactly once.
Let $\qq$ be an agent in $\QQ \setminus R \setminus V$. 
Since $\qq$ is not a sink of~$Y$, is not irrelevant, and has a unique outgoing arc to~$p$, 
we know $(\qq,p) \in Y$. 
As $Y$ cannot contain two arcs entering~$p$, this proves that $\QQ \setminus R \setminus V =\QQ\setminus R  \setminus U = \{ \qq \}$. 
Moreover, since the unique arc entering $\qq$ is from~$q$, we get $(q,\qq) \in Y$.
Therefore, the arc set~$Y \setminus \{(q,\qq),(\qq,p)\} \cup \{(q,p)\}$ is an allocation in $H'_{N \setminus R}$.

Consequently, as $X_R$ is an allocation on $H'_{R \cap N}$, we obtain that $X'$ is indeed an allocation in~$H'$ in both cases.

Now we prove that $X'$ is in the core of $H'$ by showing that the envy graph $G^{H'}_{X' \prec}$ of $X'$ is acyclic.
First, the subgraph $G^{H'}_{X' \prec}[R]$ is exactly the envy graph of $X_R$ in $H'_{R \cap N}$ and hence is acyclic.

\begin{cclaim}
\label{claim:augmenting}
Let $a \in N \setminus R$ and let $(a,b)$ be an $X'$-augmenting arc in $H'$. Then $(a,b)$ is $Y$-augmenting as well, i.e., $Y(a) \prec'_a b$.
\end{cclaim}

\begin{proof}[of Claim]
Suppose first that $(a,b) \notin \{(q,p):q \in Q\}$: then $(a,b)$ is an arc in $G^{\HH}$.
If $a \notin Q$ or~$Y(a) \notin \QQ$, then $Y(a)=X'(a)$  and thus the claim 
follows immediately. If $a \in Q$ and $Y(a)=\widetilde{a} \in \QQ$, then
$X'(a)=p \prec'_a b$ implies that $a$ prefers $b$ to $Y(a)=\widetilde{a}$ in $\HH$ as well, that is, $(a,b)$ is $Y$-augmenting.

Suppose now that $(a,b)=(q,p)$ for some $q \in Q$. We finish the proof of the claim by showing that $(q,p)$ 
is not $X'$-augmenting if $q \notin R$ (recall that we assumed $q=a \notin R$). 

First, if $\qq \notin U$, then necessarily $\{(q,\qq),(\qq,p)\} \subseteq Y$, and so~$(q,p) \in X'$, 
which means that $(q,p)$ is not $X'$-augmenting.

Second, if $\qq \in U$, then consider the iteration in which $\qq$ became a source for our sub-allocation, and 
let~$Y_n$ denote the sub-allocation at the end of this iteration.
Agent~$\qq$ can become a source either in Step~2(a) or in Step~3, 
since Step~2(b) always results in one agent being deleted from the source set without a replacement.
Recall that the only arc entering~$\qq$ is~$(q,\qq)$. 
If $\qq$ became the source of~$Y_n$ in Step~2(a), then we know $\qq \prec'_q Y_n(q)$. 
By Proposition~\ref{obs:happier-and-happier}, this implies $\qq \prec'_q Y(q)$.
By the construction of~$\HH$, we obtain that $q$ prefers $Y(q)=X'(q)$ to $p$ in $H'$, 
so $(q,p)$ is not $X'$-augmenting.
Finally, if agent~$\qq$ became the source of~$Y_n$ in Step~3, then this implies $q \in R$, which contradicts our assumption $a=q \notin R$.
\\ \phantom{a}\hfill $\blacksquare$
\end{proof}

Our claim implies that $G^{H'}_{X' \prec}[N \setminus R]$ is a subgraph of~$\GG_{Y \prec}$ and therefore it is acyclic
by Lemma~\ref{lem:stable-pre-allocation}. 
Hence, any cycle in~$G^{H'}_{X' \prec}$ must contain agents both in~$R$ and in~$N\setminus R$ (recall that $G^{H'}_{X' \prec}[R]$ is acyclic as well).
However, $G^{H'}_{X' \prec}$ contains no arcs from~$N \setminus R$ to~$R$, since such arcs cannot be $Y$-augmenting by Lemma~\ref{lem:irrelevant-agents}.
Thus $G^{H'}_{X' \prec}$ is acyclic and $X'$ is in the core of~$H'$.
\qed
\end{proof}

The following lemma, the last one necessary to prove Theorem~\ref{thm:core-main-RI}, 
shows that \algoHM{} runs in linear time; the proof
relies on the fact that in each iteration but the last either an agent or an arc is deleted from the envy graph, 
thus limiting the number of iterations by $|E|+|N|$.

\begin{lemma}
\label{lem:runtime}
Algorithm \algoHM{} runs in $O(|H|)$ time.
\end{lemma}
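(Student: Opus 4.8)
The plan is to bound the total number of iterations and then the cost of each iteration, plus the cost of initialization and of the two calls to the linear-time TTC variant (once on $H$ implicitly and once on $H'_{R \cap N}$ at the end). The key observation, already hinted at in the excerpt, is a monotonicity invariant: every iteration of the \textbf{while}-loop other than the terminating one (Step~1) either deletes a vertex from the envy graph $\GG_{Y \prec}$ (Step~3, which moves an agent into $R$) or strictly shrinks the arc set of $\GG_{Y \prec}$ (Steps~2(a) and 2(b), as established in the proof of Lemma~\ref{lem:stable-pre-allocation}, where each such step was shown to amount to deleting one or more envy arcs). Since the envy graph never gains vertices or arcs, and it starts with at most $|\NN| = |N| + |Q| \le 2|N|$ vertices and at most $|\EE| \le |E| + |Q|$ arcs, the loop runs at most $O(|N| + |E|) = O(|G^H|) = O(|H|)$ times.

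Next I would argue that each iteration can be implemented in amortized constant time, or at worst in total $O(|H|)$ time, using suitable data structures. The per-iteration work consists of: (i) testing whether $U = V$; (ii) searching for a $Y$-augmenting arc entering the current source set $U$; and (iii) performing the local surgery on $Y$, $U$, $V$, $R$. For (i) and (iii), maintaining $U$, $V$, $R$ as doubly linked lists with membership flags makes each update $O(1)$. The delicate point is (ii): we must repeatedly find an envy arc into $U$. Here I would reuse the Hasse-diagram machinery from Section~\ref{sec:ttc}. For each agent $s \notin R$, we can maintain the set of agents $s$ currently envies (those $b$ with $Y(s) \prec'_s b$, i.e. $b$ lying strictly above $Y(s)$ in $\prec'_s$, together with all of $N$ if $s$ is a sink), again by walking the Hasse diagram $H_s$ upward from $Y(s)$; when $Y(s)$ improves (Step~2(a)), the envy set of $s$ only shrinks, and the total work to maintain it over the whole run is $O(|H_s|)$ by the same amortization used for the $U(\cdot)$ lists in Section~\ref{sec:ttc}. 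Symmetrically, for each source $u \in U$ we keep a list of the incoming envy arcs currently present; an arc $(s,u)$ enters or leaves this list only when $u$ enters $U$ (at which point we scan $u$'s in-neighbours once) or when $(s,u)$ disappears from the envy graph. Charging each such list operation to the destruction of an envy arc or to the one-time insertion of $u$ into $U$ bounds the aggregate cost by $O(|H|)$. With this bookkeeping, finding a $Y$-augmenting arc into $U$—or certifying none exists so that Step~3 applies—is $O(1)$ amortized.

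Finally I would account for the remaining pieces. Initialization (constructing $\HH$, $\QQ$, the initial $Y$, and the initial envy graph data structures) is $O(|H|)$ exactly as the initialization of the TTC variant in Section~\ref{sec:ttc}, since $\GG$ differs from $G^{H'}$ only by subdividing the $|Q|$ arcs $(q,p)$, and $H'$ differs from $H$ only in the Hasse diagram $H_q$ of each $q \in Q$ (of total size at most $|H|$). The test at line~1 (whether $X$ is already in the core of $H'$) is a single acyclicity check on $G^{H'}_{X \prec}$, computable in $O(|G^{H'}|) = O(|H|)$ time by depth-first search, as noted in the proof of Theorem~\ref{thm:arc-in-core}. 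The call to the TTC variant on $H'_{R \cap N}$ costs $O(|H'_{R \cap N}|) = O(|H|)$ by Section~\ref{sec:ttc}. Assembling $X'$ from $X_R$, $Y$, and possibly the replacement of $(q,\qq),(\qq,p)$ by $(q,p)$ is $O(|N|)$. Summing everything gives the claimed $O(|H|)$ bound.

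The main obstacle I anticipate is step (ii): making the repeated search for an envy arc into $U$ genuinely cheap. A naive implementation rescans edges each iteration and costs $O(|H|)$ per iteration, blowing the bound up to $O(|H|^2)$; the whole point is the amortization argument that ties every list operation either to the (permanent) destruction of an envy arc or to the (one-time) promotion of a vertex into the source set $U$. Getting that charging scheme exactly right—in particular handling the sink vertices in $V \subseteq \QQ$, whose envy sets are nominally all of $\NN \setminus R$ but whose only outgoing arc is $(\qq,p)$, so they contribute at most one relevant envy arc each—is where the care lies. Everything else is routine data-structure bookkeeping of the same flavour as in the linear-time TTC implementation.
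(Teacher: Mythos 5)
Your proof follows essentially the same route as the paper's: bound the number of iterations by $|E|+|N|$ via the observation that every non-terminating iteration deletes either a vertex or an arc from the envy graph, implement each iteration in $O(1)$ amortized time, and add the $O(|H|)$ costs of initialization and of the TTC call on $H'_{R\cap N}$. The paper merely asserts that the envy-graph operations can be supported in $O(1)$ time, whereas you spell out the supporting data structures; this is additional detail, not a different argument.
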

\begin{proof}
Observe that the initialization takes $O(|E|+|N|)=O(|E|)$ time; note that $E$ contains every loop $(a,a)$ where $a \in N$, so we have $|E|\geq |N|$. 
We can maintain the envy graph $\GG_{Y \prec}$ in a way that deleting an arc from it when it ceases to be $Y$-augmenting can be done in $O(1)$ time, 
and detecting whether a given agent is entered by a $Y$-augmenting arc also takes $O(1)$ time.
Observe that there can be at most $|E|+|N|$ iterations, since at each step but the last, either an agent 
or an arc is deleted from the envy graph. Thus, the whole iteration takes $O(|E|)$ time. 
Finally, the allocation~$X_R$ for irrelevant agents by the variant of TTC described in Appendix~\ref{sec:app-ttc} can be computed in $O(|H|)$ time. 
Hence, the overall running time of our algorithm is $O(|H|)+O(|E|)=O(|H|)$.
\qed
\end{proof}

We are now ready to prove Theorem~\ref{thm:core-main-RI}.

\begin{proof}[of Theorem~\ref{thm:core-main-RI}]
Lemma~\ref{lem:runtime} shows that algorithm \algoHM{} runs in linear time, 
and by Lemma~\ref{lem:core} its output is an allocation~$X'$ in the core of~$H'$.
It remains to prove that either $X'(p)=X(p)$ or $p$ prefers~$X'$ to~$X$.
Observe that it suffices to show $p \notin R$, by Proposition~\ref{obs:happier-and-happier}.

For the sake of contradiction, assume that \algoHM{} puts~$p$ into the set of irrelevant vertices at some point, 
during an execution of Step~3.
Let $Y$ denote the sub-allocation at the beginning of this step, and let $V$ be its sink set. Clearly, $V \not = \emptyset$ 
(as in that case the source and the sink set of~$Y$ would coincide). Recall also that $V \subseteq \QQ$. 
Thus, there exists some $\qq \in V \subseteq \QQ$. However, then $(\qq,p)$ is an $Y$-augmenting arc by definition, 
entering~$p$, which contradicts our assumption that the algorithm put~$p$ into the set of irrelevant agents in Step~3 of this iteration.
\end{proof}

\subsection{Strict improvement.}
\label{sec:strict-imp}

Looking at Theorem~\ref{thm:core-main-RI} and Corollary~\ref{cor:possible-houses}, 
one may wonder whether it is possible to detect efficiently when a $p$-improvement leads to 
a situation that is strictly better for $p$. 
For a solution concept~$\Phi$ and housing markets $H$ and $H'$ such that $H'$ is a $p$-improvement of $H$ for some agent~$p$, 
one may ask the following questions:

\medskip
\begin{enumerate}
\item \textsc{Possible Strict Improvement for Best House} or \textsc{PSIB}: \\
is it true that $a \prec_p a'$ for some $a \in \Phi(H)^+_p$ and $a' \in \Phi(H')^+_p$?
\item \textsc{Necessary Strict Improvement for Best House} or \textsc{NSIB}: \\
is it true that $a \prec_p a'$ for every $a \in \Phi(H)^+_p$ and $a' \in \Phi(H')^+_p$?
\item \textsc{Possible Strict Improvement for Worst House} or \textsc{PSIW}: \\
is it true that $a \prec_p a'$ for some $a \in \Phi(H)^-_p$ and $a' \in \Phi(H')^-_p$?
\item \textsc{Necessary Strict Improvement for Worst House} or \textsc{NSIW}: \\
is it true that $a \prec_p a'$ for every $a \in \Phi(H)^-_p$ and $a' \in \Phi(H')^-_p$?
\end{enumerate}

\medskip
Focusing on the core of housing markets, it turns out that all of the above four problems are 
computationally intractable, even in the case of strict preferences.

\begin{theorem}
\label{thm:strict-imp}
With respect to the core of housing markets, 
PSIB and NSIB are $\mathsf{NP}$-hard, 
while PSIW and NSIW are $\mathsf{coNP}$-hard, 
even if agents' preferences are strict orders.
\end{theorem}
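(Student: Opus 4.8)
The natural approach is to reduce from the problems already shown hard in Theorem~\ref{thm:arc-in-core}, namely \ec{} and \fec{}. The key intuition is that a $p$-improvement can be used as a "switch": if we take a market $H_0$ where we want to know whether some core allocation gives agent~$p$ a particular house~$h$, we can embed it into a pair $(H,H')$ where $H$ is rigged so that $\Phi^+_p(H)$ is some fixed modest house, and $H'$ is a $p$-improvement in which $p$'s house becomes attractive to exactly the agents of the $\ec$ instance. Then $p$ can strictly improve in $H'$ precisely when the $\ec$ instance is a yes-instance. I would aim to reuse the gadget from the proof of Theorem~\ref{thm:arc-in-core} essentially verbatim, since there the target arc is $(a^\star,b^\star)$ and $a^\star$ has only $b^\star$ as an acceptable house, so "receiving a house other than her own" and "receiving $b^\star$" coincide — this makes $a^\star$ a clean candidate to play the role of~$p$.

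\textbf{For PSIB and NSIB ($\mathsf{NP}$-hardness).}
Start from the $\ec$ instance $(H^{\ec},(a^\star,b^\star))$ produced in Theorem~\ref{thm:arc-in-core}. Build $H$ by making $a^\star$ also accept one fresh dummy house owned by a new agent $t$, with $t$ accepting only $a^\star$; in $H$, agent $a^\star$ ranks $t$ above $b^\star$ (or arrange it so that in $H$ the best core outcome for $a^\star=p$ is $t$, not $b^\star$). Then let $H'$ be the $p$-improvement obtained by shifting $b^\star$ above $t$ in $a^\star$'s list — this is a single $(p,a^\star)$-type move, legitimate since we only raise one house. The point is: in $H$ some core allocation gives $a^\star$ the house $t$ and that is her best; in $H'$, by the analysis in Theorem~\ref{thm:arc-in-core} (the cycle $(a^\star,t)$ of length~2 still forces any core allocation avoiding $(a^\star,b^\star)$ to be blocked), a core allocation can give $a^\star$ the house $b^\star$ iff $D$ is acyclic-partitionable. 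So $\Phi^+_p(H')$ is $b^\star$ when $D$ is a yes-instance and is $t$ otherwise; since $a^\star$ strictly prefers $b^\star$ to $t$ in $H'$, we get a strict improvement iff $D \in \acycpart$. Because $a^\star$ has essentially one relevant comparison, $\Phi^+_p$ is a singleton in both markets, so PSIB and NSIB become the \emph{same} question here, and both inherit $\mathsf{NP}$-hardness. I would double-check that the dummy agent $t$ cannot create new blocking cycles in either market — it only points at $a^\star$, who always gets her top choice in the relevant allocations, so $t$ is inert.

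\textbf{For PSIW and NSIW ($\mathsf{coNP}$-hardness).}
Here I would use \fec{} instead, and work with the \emph{worst} house $p$ can get. Recall the $\fec$ construction adds agent $s^\star$ so that in $H'_{\fec}$ every core allocation either contains $(a^\star,b^\star)$ or $(a^\star,s^\star)$, and $a^\star$ prefers $b^\star$ to $s^\star$. The idea: arrange a market $H$ in which the worst core outcome for $p=a^\star$ is $s^\star$ (say by making $b^\star$ unreachable or unattractive in $H$), and $H'$ a $p$-improvement that "turns on" the $\fec$ structure so that in $H'$ there always \emph{exists} a core allocation giving $a^\star$ the house $b^\star$, but there exists one giving $s^\star$ iff the underlying $\ec$ instance is a yes-instance. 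Then $\Phi^-_p(H')$ equals $s^\star$ when $D$ is a yes-instance and equals $b^\star$ (a strict improvement over $s^\star$) when $D$ is a no-instance — i.e. a strict improvement in the worst house happens iff $D \notin \acycpart$, which is $\mathsf{coNP}$-hardness. As before, the near-triviality of $a^\star$'s preference structure collapses PSIW and NSIW into one question, so both are $\mathsf{coNP}$-hard.

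\textbf{Main obstacle.}
The delicate part is engineering the market $H$ so that $\Phi^+_p(H)$ (resp.\ $\Phi^-_p(H)$) is a \emph{fixed, known} house independent of the $\acycpart$ instance, while the passage to $H'$ is a legal $p$-improvement (only the position of $p$'s house changes, nothing else) and faithfully activates the hardness gadget. In particular I must make sure that (i) in $H$ the gadget's "combinatorial" agents $c_i,d_i$ are forced into trivial, non-blocking behavior (e.g.\ by making $p$'s house unattractive to them in $H$, so they just swap $c_i \leftrightarrow d_i$), and that raising $p$ in their lists is exactly the single-coordinate improvement that re-enables the reduction of Theorem~\ref{thm:arc-in-core}; and (ii) the auxiliary agents $t$, $s^\star$ never participate in a blocking cycle. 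Once the gadget is wired so that $p$'s attainable houses in $H$ form a controlled set and in $H'$ they are governed by $\ec$/$\fec$, the correctness arguments are direct quotations of the proof of Theorem~\ref{thm:arc-in-core}, and the four claimed hardness results follow.
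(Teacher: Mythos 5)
Your high-level plan is the right one and matches the paper's: reuse the \ec{} and \fec{} gadgets from Theorem~\ref{thm:arc-in-core}, note that strict preferences make $\Phi^+_p$ and $\Phi^-_p$ singletons so that PSIB/NSIB collapse into one question (and likewise PSIW/NSIW), and engineer a "before" market in which $p$'s best (resp.\ worst) attainable house is fixed. But your concrete construction implements the improvement step backwards. A $(p,q)$-improvement, by the paper's definition, modifies the preferences of \emph{$q$} by moving \emph{$p$'s house} upward in $q$'s list; it never touches $p$'s own preference order over other houses. Your PSIB construction "shifts $b^\star$ above $t$ in $a^\star$'s list", i.e.\ it raises the house of $b^\star$ in the eyes of $a^\star$ --- that is a $(b^\star,a^\star)$-improvement, hence a $b^\star$-improvement, and the PSIB question it generates would be about $b^\star$'s best house, not $a^\star$'s. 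So the pair $(H,H')$ you build is not a legal instance of PSIB for $p=a^\star$, and the reduction does not establish hardness of the problem as defined. The same confusion infects your PSIW sketch ("making $b^\star$ unreachable or unattractive in $H$" and re-enabling it in $H'$ again perturbs $a^\star$'s own list). Your "main obstacle" paragraph briefly gestures at the correct direction ("raising $p$ in their lists"), but that is inconsistent with the construction you actually give, and you never specify in whose lists $p$ is raised or verify legality.

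The repair is simpler than your dummy-agent machinery and is what the paper does. For PSIB/NSIB, take $\widehat{H}$ to be the \ec{}-gadget market $H$ with the single arc $(b^\star,a^\star)$ deleted: then nobody finds $a^\star$'s house acceptable in $\widehat{H}$, so $a^\star$ cannot trade in any core allocation and her best outcome is her own house; and $H$ is obtained from $\widehat{H}$ by raising $a^\star$'s house in $b^\star$'s list, i.e.\ $H$ is an $a^\star$-improvement of $\widehat{H}$. Since $a^\star$'s only acceptable house besides her own is $b^\star$'s, a strict improvement of her best house occurs iff some core allocation of $H$ contains $(a^\star,b^\star)$, which is the \acycpart{} question. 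No agent $t$ is needed. For PSIW/NSIW the distinguished agent must be $s^\star$ (not $a^\star$): delete the arc $(a^\star,s^\star)$ to get $\widehat{H}'$, in which $s^\star$'s worst core outcome is her own house, and note that $H'$ is an $s^\star$-improvement of $\widehat{H}'$; then $s^\star$'s worst house fails to strictly improve iff some core allocation of $H'$ avoids $(a^\star,s^\star)$, giving $\mathsf{coNP}$-hardness.
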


\begin{proof}
Since agents' preferences are strict orders, we get that 
PSIB and NSIB are equivalent, and similarly, PSIW and NSIW are equivalent as well, since there is a unique best and a unique worst house that an agent may obtain in a core allocation.
Therefore, we are going to present two reductions, one for PSIB and NSIB, and one for PSIW and NSIW. 
Since both reductions will be based on those presented in the proof of Theorem~\ref{thm:arc-in-core}, 
we are going to re-use the notation defined there. 

The reduction for PSIB (and NSIB) is obtained by slightly modifying the reduction from \acycpart{} to \ec{}
which, given a directed graph~$D$ constructs the housing market~$H$.
We define a housing market~$\widehat{H}$ by simply deleting the arc $(b^\star,a^\star)$ from the acceptability graph of $H$. 
Then $H$ is an $a^\star$-improvement of $\widehat{H}$. 
Clearly, as the house of $a^\star$ is not acceptable to any other agent in~$\widehat{H}$,
the best house that $a^\star$ can obtain in any allocation in the core of $\widehat{H}$ is her own. 
Moreover, the best house that $a^\star$ 
can obtain in any allocation in the core of $H$ is either the house of~$b^\star$ or her own. 
This immediately implies that $(\widehat{H},H)$ is a yes-instance of PSIB (and of NSIB) with respect to the core 
if and only if there exists an allocation in the core of $H$ that contains the arc $(a^\star,b^\star)$. 
Therefore, $(\widehat{H},H)$ is a yes-instance of PSIB and of NSIB with respect to the core 
if and only if $D$ is a yes-instance of \acycpart, finishing our proof for PSIB (and NSIB).

The reduction for PSIW (and NSIW) is obtained analogously, 
by slightly modifying the reduction from \acycpart{} to \fec{}
which, given a directed graph~$D$ constructs the housing market~$H'$.
We define a housing market~$\widehat{H}'$ by deleting the arc $(a^\star,s^\star)$ from the acceptability graph of $H'$. 
Then $H'$ is an $s^\star$-improvement of $\widehat{H}'$. 
Clearly, as the house of $s^\star$ is not acceptable to any other agent in~$\widehat{H}'$, 
the worst house that $s^\star$ can obtain in any allocation in the core of $\widehat{H}'$ is her own. 
Moreover, the worst house that $s^\star$ 
can obtain in any allocation in the core of $H'$ is either the house of~$a^\star$ or her own. 
Therefore, $(\widehat{H}',H')$ is a no-instance of PSIW (and of NSIW) with respect to the core 
if and only if there exists an allocation in the core of~$H'$ where $s^\star$ is not trading, i.e., 
that does not contain the arc $(a^\star,s^\star)$. 
So $(\widehat{H}',H')$ is a no-instance of PSIW and of NSIW with respect to the core 
if and only if $D$ is a yes-instance of \acycpart, finishing our proof for PSIW (and NSIW).
\qed
\end{proof}

\section{The effect of improvements in \textsc{Stable Roommates}.}
\label{sec:improvement-SR}

In the \textsc{Stable Roommates} problem we are given a set $N$ of agents, and a preference relation~$\prec_a$ over $N$ for each agent $a \in N$;
the task is to find a stable matching~$M$ between the agents. 
A matching is \emph{stable} if it admits no \emph{blocking pair}, that is, a pair of agents 
such that each of them is either unmatched, or prefers the other over her partner in the matching.
Notice that an input instance for \textsc{Stable Roommates} is in fact a housing market.
Viewed from this perspective, a stable matching in a housing market can be thought of as an allocation that 
(i) contains only cycles of length at most~2, and 
(ii) does not admit a blocking cycle of length at most~2.

For an instance of \SR{}, we assume mutual acceptability, that is, 
for any two agents $a$ and $b$, we assume that $a \prec_a b$ holds if and only if $b \prec_b a$ holds. 
Consequently, it will be more convenient to define the acceptability graph $G^H$ of an instance~$H$ of \SR{}
as an undirected simple graph where agents $a$ and $b$ are connected by an edge $\{a,b\}$ 
if and only if they are acceptable to each other and $a \neq b$. 
A \emph{matching} in $H$ is then 
a set of edges in $G^H$ such that no two of them share an endpoint.


Bir\'o et al.~\cite{BKKV-mor} have shown the following statements, illustrated in Examples~\ref{ex:SRwithTies-violating-RI} and~\ref{ex:SR-violating-RI-worst}.
\begin{proposition}[\cite{BKKV-mor}]
\label{obs:SRwithTies-violating-RI}
Stable matchings in the \SR{} model \begin{itemize}
\item 
violate the RI-worst property (even if agents' preferences are strict), and 
\item violate the RI-best property, if agents' preferences may include ties.
\end{itemize}
\end{proposition}

\begin{example}
\label{ex:SRwithTies-violating-RI}
Let $N=\{a,b,c,d,e,p,q\}$ be the set of agents. The preferences indicated in Figure~\ref{fig:example-SR-ties}
define two housing markets $H$ and $H'$ such that $H'$ is a $(p,q)$-improvement of $H$. 
Note that agent~$d$ is indifferent between her two possible partners.
Looking at $H$ and $H'$ in the context of
\textsc{Stable Roommates}, 
it is easy to see that the best partner that $p$ might obtain in a stable matching for $H$ is her second choice $b$, while 
in $H'$ the only stable matching assigns $a$ to $p$, which is her third choice. 
\end{example}

\begin{figure}[ht]
\begin{center}
\includegraphics[scale=1]{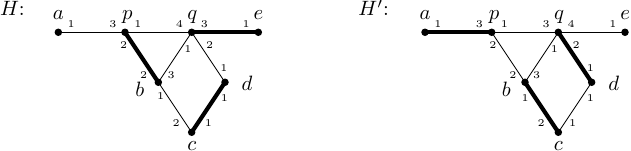}
\caption{The housing markets $H$ and $H'$ as instances of \textsc{Stable Roommates} with ties, in the Example~\ref{ex:SRwithTies-violating-RI}. 
For both $H$ and $H'$, the matching represented by bold  arcs yields the best possible partner for $p$ in any stable matching of the given market.}
\label{fig:example-SR-ties}
\end{center}
\end{figure}

\begin{example}
\label{ex:SR-violating-RI-worst}
Let $N=\{a,b,p,q\}$ be the set of agents. The preferences indicated in Figure~\ref{fig:example-SR-worst}
define two housing markets $H$ and $H'$ such that $H'$ is a $(p,q)$-improvement of $H$. 
The worst partner that $p$ might obtain in a stable matching for $H$ is her top choice $a$, while 
in $H'$ there exists a stable matching that assigns $b$ to $p$, which is her second choice. 
\end{example}

\begin{figure}[t]
\begin{center}
\includegraphics[scale=1]{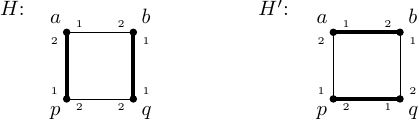}
\caption{The housing markets $H$ and $H'$ in Example~\ref{ex:SR-violating-RI-worst}. 
For both $H$ and $H'$, the matching represented by bold arcs yields the worst possible partner for $p$ in any stable matching of the given market.}
\label{fig:example-SR-worst}
\end{center}
\end{figure}

Complementing Proposition~\ref{obs:SRwithTies-violating-RI}, 
we show that 
a $(p,q)$-improvement can lead to an instance where no stable matching exists at all. 
This may happen even if preferences are strict orders;
hence, stable matchings do not strongly satisfy the RI-best property.

\begin{proposition}
\label{prop:SM-strong-RI-best-fails}
Stable matchings in the \SR{} model do not strongly satisfy the RI-best property, even if agents' preferences are strict.
\end{proposition}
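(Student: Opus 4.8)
The plan is to refute strong satisfaction of the RI-best property by a single small counterexample: I will construct \SR{} instances $H$ and $H'$ with strict preferences such that $H'$ is a $p$-improvement of $H$, the instance $H$ admits a stable matching, yet $H'$ admits none. Since ``strongly satisfies the RI-best property'' demands in particular that a $p$-improvement never turn a solvable instance into an unsolvable one, this is enough.

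First I would take $H'$ to be (essentially) the classical smallest \SR{} instance without a stable matching, on four mutually acceptable agents $N=\{p,a,b,c\}$: here $c$ is everyone's last choice and $p,a,b$ form a cyclic pattern, e.g. $p: a\succ b\succ c$, $a: b\succ p\succ c$, $b: p\succ a\succ c$, $c: p\succ a\succ b$. Nonexistence of a stable matching for $H'$ I would verify directly: only a perfect matching could be stable, since any smaller matching leaves two mutually acceptable agents unmatched and hence admits a blocking pair; and each of the three perfect matchings is destroyed by a blocking pair inside $\{p,a,b\}$ (for $\{\{p,a\},\{b,c\}\}$ the pair $\{a,b\}$ blocks, for $\{\{p,b\},\{a,c\}\}$ the pair $\{p,a\}$ blocks, for $\{\{p,c\},\{a,b\}\}$ the pair $\{p,b\}$ blocks).

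Next I would obtain $H$ from $H'$ by demoting $p$ by one position in a single agent's list, chosen so that a stable matching survives. Taking $q=b$ and setting $b$'s list in $H$ to $a\succ p\succ c$, with every other list unchanged, works: this is a legitimate $(p,b)$-improvement in the formal sense of the preliminaries, since only $b$'s list changes, the relative order of the agents other than $p$ is preserved, and the only agent ranked below $p$ by $b$ in $H$, namely $c$, is still ranked below $p$ in $H'$. In $H$ I would then exhibit the matching $M=\{\{a,b\},\{p,c\}\}$ and check it is stable: $a$ and $b$ are each other's top choice, so no pair containing both can block, and the remaining candidate blocking pairs $\{p,a\}$ and $\{p,b\}$ fail because $a$ and $b$ both prefer their $M$-partners to $p$ under the preferences of $H$.

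There is no genuine obstacle here; the whole argument is a finite check. The only point requiring care is to guarantee simultaneously that (i) the modification $H \to H'$ is a legitimate $(p,q)$-improvement and (ii) $H$ itself is solvable, and demoting $p$ in exactly one list, chosen as the partner of $p$'s would-be mate in a stable matching of $H$, achieves both. I would conclude by noting that $\Phi(H)\neq\emptyset$ while $\Phi(H')=\emptyset$, so stable matchings in the \SR{} model do not strongly satisfy the RI-best property, even when agents' preferences are strict.
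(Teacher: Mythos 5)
Your proof is correct and takes essentially the same route as the paper: both exhibit an explicit four-agent instance with strict preferences that admits a stable matching, together with a $(p,q)$-improvement of it (in your case the classical unsolvable odd-cycle instance) that admits none, which immediately refutes strong satisfaction of the RI-best property. The finite checks you describe (legitimacy of the improvement, stability of $M$ in $H$, unsolvability of $H'$) all go through.
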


\begin{proof}
Let $N=\{a,b,p,q\}$ be the set of agents. The preferences indicated in Figure~\ref{fig:example-SR-instability}
define housing markets $H$ and $H'$ where $H'$ is an $(p,q)$-improvement of $H$. 
The best partner that $p$ might obtain in a stable matching for $H$ is her second choice $a$, 
while $H'$ does not admit any stable matchings at all.
\qed
\end{proof}

\begin{figure}[t]
\begin{center}
\includegraphics[scale=1]{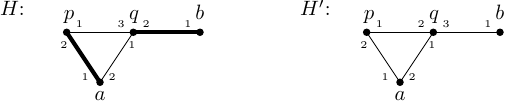}
\caption{Housing markets $H$ and $H'$ illustrating 
the proof of 
Proposition~\ref{prop:SM-strong-RI-best-fails}. 
For $H$, the bold arcs represent a stable matching, while
the instance $H'$, which is a $(p,q)$-improvement of $H$, does not admit any stable matchings.}
\label{fig:example-SR-instability}
\end{center}
\end{figure}

Contrasting Propositions~\ref{obs:SRwithTies-violating-RI} and~\ref{prop:SM-strong-RI-best-fails},
it is somewhat surprising that if agents' preferences are strict, then the RI-best property holds 
for the \textsc{Stable Roommates} setting. Thus, 
the situation of $p$ cannot deteriorate as a consequence of a $p$-improvement unless instability arises.

\begin{theorem}
\label{thm:SR-RI-holds}
Let $H=(N,\{ \prec_a \}_{a \in N})$ be a housing market where agents' preferences are strict orders.
Given a stable matching $M$ in $H$ and a $(p,q)$-improvement $H'$ of $H$ for two agents $p,q \in N$,  
either $H'$ admits no stable matchings at all, or there exists a stable matching $M'$ in $H'$
such that $M(p) \preceq_{p} M'(p)$.
Moreover, given $H$, $H'$ and $M$ it is possible to find such a matching $M'$ in polynomial time, or conclude correctly that $H'$ admits no stable matchings.
\end{theorem}

\begin{corollary}
\label{cor:SR-RI-best}
In the \SR{} model with strict preferences stable matchings satisfy the RI-best property.
\end{corollary}

Comparing our results for the core of housing markets and for stable matchings in the \SR{} model, we find that these solution concepts exhibit  similarities as well as disparities in connection to the notion of respecting improvement. First, neither  solution concept satisfies the RI-worst property (see Propositions~\ref{prop:core-worst-RI-fails} and~\ref{obs:SRwithTies-violating-RI}).
Second, both satisfy the RI-best property, as established by our main algorithmic results, Theorems~\ref{thm:core-main-RI} and~\ref{thm:SR-RI-holds}; however, the core of housing markets \emph{strongly} 
satisfies the RI-best property even if preferences are partial orders, as opposed to stable matchings in the \SR{} model, where it may happen that an improvement yields a market without any stable matchings (see Proposition~\ref{prop:SM-strong-RI-best-fails}), and the RI-best property holds only if preferences are strict, but fails if preferences can contain ties (see Proposition~\ref{obs:SRwithTies-violating-RI}).
Table~\ref{tab:summary-RI} in Section~\ref{sec:future} offers a comparison of these two models from the viewpoint of the property of respecting improvement.

\medskip

We describe our algorithm for Theorem~\ref{thm:SR-RI-holds} in Section~\ref{sec:SR-algo}, 
and prove its correctness in Section~\ref{sec:algoSR-correctness}.

\subsection{Description of algorithm \algoSR{}.}
\label{sec:SR-algo}
To prove Theorem~\ref{thm:SR-RI-holds} we are going to rely on the concept of proposal-rejection alternating sequences
introduced by Tan and Hsueh~\cite{Tan-Hsueh-1995}, originally used as a tool 
for finding a stable partition in an incremental fashion by adding agents one-by-one to a \SR{} instance.
We somewhat tailor their definition to fit our current purposes.

Let $\alpha_0 \in N$ be an agent in a housing market $H$, and let $M_0$ be a stable matching in~\hbox{$H-\alpha_0$}. 
A sequence $S$ of agents $\alpha_0,\beta_1, \alpha_1, \dots, \beta_k,\alpha_k$
is a \emph{proposal-rejection alternating sequence} starting from~$M_0$, 
if there exists a sequence of matchings $M_1, \dots, M_k$ such that for each $i \in \{1, \dots, k\}$
\begin{itemize}[leftmargin=22pt]
\item[(i)] $\beta_i$ is the agent most preferred by $\alpha_{i-1}$ among those who prefer $\alpha_{i-1}$ to their partner in $M_{i-1}$
or are unmatched in $M_{i-1}$,
\item[(ii)] $\alpha_i=M_{i-1}(\beta_i)$, and
\item[(iii)] $M_i=M_{i-1} \setminus \{\{\alpha_i,\beta_i\}\} \cup \{\{ \alpha_{i-1},\beta_i\}\}$ is a matching in $H-\alpha_i$.
\end{itemize}
We say that the sequence~$S$ \emph{starts} from~$M_0$, and that the matchings $M_1, \dots, M_k$ are \emph{induced} by~$S$.
We say that $S$ \emph{stops} at~$\alpha_k$, if there does not exist an agent fulfilling condition~(i) in the above definition for $i=k+1$, 
that is, if no agent prefers $\alpha_k$ to her current partner in~$M_k$ and no unmatched agent in~$M_k$ finds $\alpha_k$ acceptable. 
We will also allow a proposal-rejection alternating sequence to take the form  $\alpha_0,\beta_1, \alpha_1, \dots, \beta_k$, 
in case conditions~(i),~(ii), and~(iii) hold for each $i \in \{1,\dots, k-1\}$, 
and $\beta_k$ is an unmatched agent in~$M_{k-1}$ satisfying condition~(i) for $i=k$.
In this case we define the last matching induced by the sequence as $M_k=M_{k-1} \cup \{\{ \alpha_{k-1},\beta_k\}\}$, 
and we say that the sequence \emph{stops} at agent~$\beta_k$.

We summarize the most important properties of proposal-rejection alternating sequences in Lemma~\ref{lem:pr-seq-prop} 
as observed and used by Tan and Hsueh. Since the first claim of Lemma~\ref{lem:pr-seq-prop} is only implicit in the paper by Tan and Hsueh~\cite{Tan-Hsueh-1995}, we prove it for the sake of completeness. 

\begin{lemma}[\cite{Tan-Hsueh-1995}]
\label{lem:pr-seq-prop}
Let $\alpha_0,\beta_1, \alpha_1, \dots, \beta_k(,\alpha_k)$ be a proposal-rejection alternating sequence
starting from a stable matching $M_0$ and inducing the matchings $M_1, \dots, M_k$ in a housing market $H$.
Then the following hold.
\vspace{-6pt}
\begin{enumerate}
\item $M_i$ is a stable matching in $H-\alpha_i$ for each $i \in \{1, \dots, k-1(, k)\}$.
\item If $\beta_j=\alpha_i$ for some $i$ and $j$, then $H$ does not admit a stable matching; in such a case we say that sequence~$S$ \emph{has a return}. 
\item If the sequence stops at $\alpha_k$ or $\beta_k$, then $M_k$ is a stable matching in $H$. 
\item For any $i \in \{1,\dots, k-1\}$ agent $\alpha_i$ prefers $M_{i-1}(\alpha_i)$ to $M_{i+1}(\alpha_i)$.
\item For any $i \in \{1,\dots, k-1\}$ agent $\beta_i$ prefers $M_i(\beta_i)$ to $M_{i-1}(\beta_i)$.
\end{enumerate}
\end{lemma}

\begin{proof}[of the first statement of Lemma~\ref{lem:pr-seq-prop}]
We prove the statement by induction on $i$; the case $i=0$ is clear. 
Assume that $i \geq 1$ and $M_{i-1}$ is stable in $H-\alpha_{i-1}$. 
Since $M_i \triangle M_{i-1}=\{ \{\alpha_i,\beta_i\},\{ \alpha_{i-1},\beta_i\}\}$
we know that any blocking pair for $M_i$ in $H-\alpha_i$ must contain either $\beta_i$ or $\alpha_{i-1}$. 
By our choice of~$\beta_i$, it is clear that $\alpha_{i-1}$ cannot be contained in a blocking pair. 
Moreover, since $\beta_i$ prefers $\alpha_{i-1}$ to~$M_{i-1}(\beta_i)=\alpha_{i}$, 
any blocking pair for $M_i$ would also be blocking in $M_{i-1}$, a contradiction.
\qed
\end{proof}

We are now ready to describe algorithm \algoSR{}; see Algorithm~\ref{alg:sr-improve} for its pseudocode.

\paragraph{Algorithm \algoSR{}.} 

Let $H=(N,\{\prec_a\}_{a \in N})$ be a housing market containing a stable matching~$M$,
and let $H'=(N,\{\prec'_a\}_{a \in N})$ be a $(p,q)$-improvement of $H$ for two agents~$p$ and~$q$ in~$N$; 
recall that $\prec'_a=\prec_a$ unless $a=q$. 
We now propose algorithm \algoSR{} that computes a stable matching~$M'$ in~$H'$ with $M(p) \preceq_p M'(p)$, whenever $H'$ admits some stable matching.

First, \algoSR{} checks whether $M$ is stable in $H'$, and if so, returns the matching~$M'=M$.
Otherwise, $\{p,q\}$ must be  a blocking pair for~$M$ in~$H'$. 

Second, the algorithm checks whether $H'$ admits a stable matching, and if so, 
computes \emph{any} stable matching $M^\star$ in $H'$ using Irving's algorithm~\cite{Irving-SR}; 
if no stable matching exists for $H'$, algorithm \algoSR{} stops.
Now, if $M(p) \preceq'_p M^\star(p)$, then \algoSR{} returns $M'=M^\star$, otherwise proceeds as follows.

Let $\HH$ be the housing market obtained from $H'$ by deleting all agents $\{a \in N: a \preceq'_q p\}$ 
from the preference list of $q$ (and vice versa, deleting $q$ from the preference list of these agents). 
Notice that in particular this includes the deletion of $p$ as well as of $M(q)$ from the preference list of $q$ (recall that $M(q) \prec'_q p$). 

Let us define $\alpha_0=M(q)$ and $M_0=M \setminus \{q,\alpha_0\}$. 
Notice that $M_0$ is a stable matching in $\HH-\alpha_0$:
clearly, any possible blocking pair must contain $q$, but any blocking pair $\{q,a\}$ that is blocking in~$\HH$ 
would also block $H$ by $M(q) \prec_q a$. 
Observe also that $q$ is unmatched in $M_0$.

Finally, algorithm \algoSR{} builds a proposal-rejection alternating sequence~$S$ consisting of 
agents $\alpha_0, \beta_1, \alpha_1, \dots, \beta_k(, \alpha_k)$ in $\HH$ starting from  $M_0$, and inducing matchings $M_1, \dots, M_k$
until one of the following cases occurs: 
\begin{itemize}
\item[(a)] $\alpha_k=p$: 
      in this case \algoSR{} outputs $M'=M_k \cup \{\{p,q\} \}$;
\item[(b)] $S$ stops:  
      in this case \algoSR{} outputs $M'=M_k$.
\end{itemize}

\begin{algorithm}
\caption{Algorithm \algoSR{}}\label{alg:sr-improve}
\hspace*{\algorithmicindent} \textbf{Input:} housing market~$H=(N,\prec)$, its $(p,q)$-improvement~$H'=(N,\prec')$ for two agents~$p$ and~$q$,   \\
\hspace*{\algorithmicindent} \phantom{\textbf{Input:}} and a stable matching~$M$ in~$H$. \\
\hspace*{\algorithmicindent} \textbf{Output:} a stable matching~$M'$ in~$H'$ such that $M(p) \preceq_p M'(p)$ or $M(p)=M'(p)$, if $H'$ admits\\
\hspace*{\algorithmicindent} \phantom{\textbf{Output:}} some stable matching. 
\begin{algorithmic}[1]
\If{$M$ is stable in~$H'$} \textbf{return}~$M$ 
\EndIf
\If{$H'$ admits a stable matching} 
	let $M^\star$ be any stable matching in~$H'$. 
		\Statex \Comment{Use Irving's algorithm~\cite{Irving-SR}} 
\Else{ {\bf return} ``No stable matching exists for~$H'$.''}
\EndIf 
\If{$M(p) \preceq_p M^\star(p)$} {{\bf return} $M' \leteq M^\star$} 
\EndIf 
\State Create housing market $\HH$ by deleting the agents~$\{a \in N: a \preceq'_q p\}$ from $A(q)$ and vice versa.
\State Set $i \leteq 0$, $\alpha_0 \leteq M(q)$, and $M_0 \leteq M \setminus \{\alpha_0,q\}$
\Repeat \Comment{Computing a proposal-rejection sequence~$S$.}
	\State Set $i \leftarrow i+1$.
	\State Set $B_i \leteq \{ b:\alpha_{i-1} \in A(b), b \textrm{ is unmatched in }M_{i-1} \textrm{ or } M_{i-1}(b) \prec_b \alpha_{i-1}\}$.
	\If{$B_i=\emptyset$}  {{\bf return} $M' \leteq M_{i-1}$} \Comment{$S$ stops at $i-1$.}
		\State Set $\beta_i$ as the agent most preferred by $\alpha_{i-1}$ in~$B_i$.
		\If{$\beta_i$ is unmatched in~$M_{i-1}$} {{\bf return} $M' \leteq M_{i-1} \cup \{\{ \alpha_{i-1},\beta_i\}\}$} \Comment{$S$ stops at $i$.}
			\State Set $\alpha_i \leteq M_{i-1}(\beta_i)$ and $M_i \leteq M_{i-1} \cup \{\{ \alpha_{i-1},\beta_i\}\} \setminus \{\{\alpha_i,\beta_i\}\}$.
		\EndIf
	\EndIf
\Until{$ \alpha_i=p$}
{{\bf return} $M' \leteq M_i \cup \{\{ p,q\}\}$} 
\end{algorithmic}
\end{algorithm}

\subsection{Correctness of algorithm \algoSR{}.}
\label{sec:algoSR-correctness}
To show that algorithm \algoSR{} is correct, we first state the following two lemmas.

\begin{lemma}
\label{lem:algoSR-Sreturns}
The sequence $S$ cannot have a return. Furthermore, if $S$ stops, then it stops at $\beta_k$ with~$\beta_k=q$.
\end{lemma}

\begin{proof}
Recall that $M^\star$ is a stable matching in $H'$ with $M^\star(p) \prec_p M(p)$. 
Since $\{p,q\}$ is a blocking pair for $M$ in $H'$, we know $M(p) \prec_p q$, yielding $M^\star(p) \prec_p q$.
By the stability of $M^\star$, 
this implies that $q$ is matched in $M^\star$ and $p \prec'_q M^\star(q)$.
As a consequence, $M^\star$ is a stable matching not only in $H'$ but also in $\HH$, 
since deleting agents less preferred by $q$ than $M^\star(q)$ from $q$'s preference list cannot compromise the stability of $M^\star$.

By the second claim of Lemma~\ref{lem:pr-seq-prop}, we know that if $S$ has a return, then $\HH$ admits no stable matching,
contradicting the existence of $M^\star$. 
Furthermore, since $q$ is matched in $M^\star$, it must be matched in every stable matching of $\HH$, 
by the well-known fact that in  an instance of \SR{} where agents' preferences are strict 
all stable matchings contain exactly the same set of agents~\cite[Theorem~4.5.2]{GusfieldIrving-book}.
Now, if $S$ stops with the last induced matching~$M_k$, then by the third statement of Lemma~\ref{lem:pr-seq-prop} we get that $M_k$ is a stable matching in $\HH$, 
and thus $q$ must be matched in~$M_k$. Clearly, as $q$ is unmatched in~$M_0$, this can only occur if $\beta_k=q$ and $S$ stops at $q$.
\qed
\end{proof}

\begin{lemma}
\label{lem:algoSR-phaseI}
If \algoSR{} outputs a matching $M'$, then $M'$ is stable in~$H'$ and $M(p) \preceq'_p M'(p)$.
\end{lemma}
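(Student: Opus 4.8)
The plan is to proceed by a case analysis on the line at which \algoSR{} returns $M'$, checking in each case that $M'$ is a matching in $H'$, that it is stable, and that $M(p)\preceq'_p M'(p)$; only the two cases in which $M'$ is produced by the proposal--rejection sequence require real work. If the algorithm returns at its first line, then $M'=M$ is stable in $H'$ by the triggering condition and $M'(p)=M(p)$. If it returns $M'=M^\star$, then stability holds by the choice of $M^\star$ and the inequality holds by the triggering \textbf{if}-condition. By Lemma~\ref{lem:algoSR-Sreturns} the sequence $S$ never stops at some agent $\alpha_i$, so the line returning $M_{i-1}$ when $B_i=\emptyset$ is never executed and need not be considered; likewise, $H'$ does admit a stable matching in the situation we are in, so the ``no stable matching'' line is irrelevant. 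Hence we may assume $M'$ comes from the sequence $S=\alpha_0,\beta_1,\alpha_1,\dots$ built in $\HH$ starting from $M_0=M\setminus\{\alpha_0,q\}$ with $\alpha_0=M(q)$; recall that $q$ is unmatched in $M_0$ and stays unmatched as long as $S$ runs, and that $p\neq\alpha_0$ --- if $p=\alpha_0$, then $\{p,q\}\in M$, and since the improvement only raises $p$ in $q$'s list, $M$ would be stable in $H'$ and the algorithm would return at its first line.

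Consider Case~(b), where $S$ stops. By Lemma~\ref{lem:algoSR-Sreturns} it stops at $\beta_k=q$, so $M'=M_k=M_{k-1}\cup\{\{\alpha_{k-1},q\}\}$; note $\alpha_{k-1}\neq p$, since otherwise \algoSR{} would have returned in Case~(a) at step $k-1$. By the third claim of Lemma~\ref{lem:pr-seq-prop}, $M_k$ is stable in $\HH$. The acceptability graphs of $\HH$ and $H'$ differ only in that $H'$ also contains the edges $\{q,a\}$ with $a\preceq'_q p$; since $\{\alpha_{k-1},q\}$ is an edge of $\HH$ we have $p\prec'_q\alpha_{k-1}=M_k(q)$, so (preferences being strict) $q$ strictly prefers its partner to every such $a$, whence $q$ lies in no pair of $H'$ blocking $M_k$ and $M_k$ is stable in $H'$. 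Finally, $p$ is never some $\alpha_i$ in Case~(b), so $p$ stays matched along $S$ and appears, if at all, only among the $\beta_i$'s; by the fifth claim of Lemma~\ref{lem:pr-seq-prop} its partner can only improve, so $M(p)\preceq'_p M_k(p)=M'(p)$.

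Consider Case~(a), where $\alpha_k=p$ and $M'=M_k\cup\{\{p,q\}\}$. Then $p=M_{k-1}(\beta_k)$ is unmatched in $M_k$, and $q$ is also unmatched in $M_k$ (it never appeared as some $\beta_i$, as that would have triggered Case~(b)), so $M'$ is a matching; moreover $M'(p)=q$ and $M(p)\prec_p q$ because $\{p,q\}$ blocks $M$ in $H'$, so $M(p)\prec'_p M'(p)$. By the first claim of Lemma~\ref{lem:pr-seq-prop}, $M_k$ is stable in $\HH-p$. Any pair of $H'$ blocking $M'$ and avoiding $p$ is handled as in Case~(b): if it also avoids $q$ it would block $M_k$ in $\HH-p$; and if it is some $\{q,a\}$ then, since $M'(q)=p$, it can block only if $p\prec'_q a$, but then $\{q,a\}$ is an edge of $\HH-p$ with $q$ unmatched in $M_k$, so it would block $M_k$ in $\HH-p$ --- both impossible. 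The only remaining possibility is a pair $\{p,a\}$ with $a\neq q$, $q\prec_p a$, and $a$ either unmatched in $M_k$ or preferring $p$ to $M_k(a)$.

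Ruling out such a pair is, I expect, the crux of the lemma. It reduces to showing that when $p$ becomes $\alpha_k$, every agent $a$ with $q\prec_p a$ is matched in $M_k$ to a partner it weakly prefers to $p$, so that $a$ is unwilling to accept $p$. The leverage available is: the stability of $M$ gives $p\prec_a M(a)$ for every such $a$ (since $M(p)\prec_p q\prec_p a$, so $\{p,a\}$ would otherwise block $M$); the stability of $M^\star$ in $\HH$ gives $p\prec_a M^\star(a)$ for the same reason; and the behaviour of the proposal--rejection sequence is tightly constrained by Lemma~\ref{lem:pr-seq-prop} (the $\beta$-steps only improve an agent's partner, displaced agents descend their own list), together with the fact that any proposal ever directed at the unmatched agent $q$ would have made $S$ stop in Case~(b), so that before $S$ reaches $p$ the set of agents willing to accept $p$ cannot grow beyond control. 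Turning these ingredients into a clean induction over the steps of $S$ is where I expect the real effort to lie. Once the claim is established, $M'$ admits no blocking pair and is therefore a stable matching of $H'$; combined with $M'(p)=q$ and $M(p)\prec'_p q$, and with the earlier cases, this proves the lemma.
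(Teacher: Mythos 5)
Your reduction of the problem to the two sequence-produced outputs, your treatment of the trivial return lines, and all of Case~(b) are correct and essentially identical to the paper's proof (including the observation that the edges present in $H'$ but not in $\HH$ are all ranked by $q$ below $M_k(q)=\alpha_{k-1}$). The problem is Case~(a), where you stop exactly at the point where the argument has to do its work: you leave the exclusion of a blocking pair $\{p,a\}$ with $a\neq q$ as an admitted open step. The route you sketch for closing it --- showing by induction along $S$ that every $a$ with $q\prec_p a$ ends up matched in $M_k$ to someone it weakly prefers to $p$ --- is not how the argument goes and would be hard to push through: by the fourth claim of Lemma~\ref{lem:pr-seq-prop} an agent $a=\alpha_j$ with $j<k$ sees its partner \emph{deteriorate} along the sequence, so neither the stability of $M$ nor that of $M^\star$ gives you a lower bound on $M_k(a)$ for such agents.

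The missing observation concerns $p$, not $a$. Since $S$ has no return (Lemma~\ref{lem:algoSR-Sreturns}) and the algorithm halts the first time some $\alpha_i$ equals $p$, agent $p$ occurs in $S$ only as its last element $\alpha_k$; consequently $p$'s partner is untouched throughout and $M_{k-1}(p)=M_0(p)=M(p)$. Now suppose $\{p,a\}$ with $a\neq q$ blocks $M'$ in $H'$, so that $M(p)\prec'_p q\prec'_p a$ and $a$ prefers $p$ to $M'(a)$ (or is unmatched). For $a\notin\{\alpha_{k-1},\beta_k\}$ we have $M_{k-1}(a)=M_k(a)=M'(a)$, and since $M_{k-1}(p)=M(p)\prec'_p a$ the pair $\{p,a\}$ already blocks $M_{k-1}$ in $\HH-\alpha_{k-1}$, contradicting the first claim of Lemma~\ref{lem:pr-seq-prop}; the two exceptional agents are ruled out directly by the choice of $\beta_k$ in condition~(i) (e.g.\ $\beta_k$ prefers its new partner $\alpha_{k-1}$ to $p=M_{k-1}(\beta_k)$, and if $\alpha_{k-1}$ preferred $p$ to $\beta_k$ then $p$, which lies in $B_k$ because $M_{k-1}(p)=M(p)\prec_p\alpha_{k-1}$, would have been selected instead of $\beta_k$). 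This one-step reduction to the stability of $M_{k-1}$ is what closes your gap; without it the proof is incomplete.
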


\begin{proof}
First, assume that the algorithm stops when $\alpha_k=p$. Then by the first statement of Lemma~\ref{lem:pr-seq-prop}, $M_k$ is stable in $\HH-p$. 
Note also that $q$ must be unmatched in $M_k$, as $q$ can only obtain a partner in the sequence of matchings induced by $S$ if $q=\beta_k$, 
which cannot happen when $\alpha_k=p$. Therefore, $M'=M_k \cup \{\{p,q\}\}$ is indeed a matching in $H'$. 

Let us prove that $M'$ is stable in~$H'$.
Since $q$ is unmatched in~$M_k$, and $M_k$ is stable in~$\HH-p$, no agent acceptable for $q$ prefers $q$ to her partner in~$M_k$ or is left unmatched in~$M_q$. 
Hence, $q$ cannot be contained in a blocking pair for $M'$. 
Thus, any blocking pair for $M'$ must contain~$p$. 
Suppose that $\{p,a\}$ blocks $M'$ in~$H'$; then $q \prec'_p a$. 
Since $S$ cannot have a return by Lemma~\ref{lem:algoSR-Sreturns}, 
we know that $p$ is not among the agents 
$\alpha_0, \beta_1, \alpha_1, \dots, \beta_{k-1}$.
Therefore, $M_{k-1}(p)=M_0(p)=M(p)$. 
Recall that $M(p) \prec'_p q$, which implies $M_{k-1}(p) \prec'_p q$.
Since $M_{k-1}(a)=M_k(a)$ (because $a \notin \{\alpha_{k-1},\beta_k,p\}$), 
we get that $\{p,a\}$ must also block $M_{k-1}$ in $\HH-\alpha_{k-1}$, a contradiction. This shows that $M'$ is stable in $H'$. 
By $M(p) \prec'_p q=M'(p)$, the lemma follows in this case.

Second, assume that \algoSR{} outputs $M'=M_k$ after finding that the sequence $S$ stops with~$q$ being matched in $M_k$.  
By the first statement of Lemma~\ref{lem:pr-seq-prop}, we know that $M'$ is stable in $\HH$, and by the definition of $\HH$, we know that $p \prec_q M'(q)$.
Therefore, $M'$ is also stable in $H'$ (as adding agents less preferred by $q$ than $M'(q)$ to $q$'s preference list cannot compromise the stability of $M'$). 
To show that $M(p) \preceq'_p M'(p)$, it suffices to observe that $p=\alpha_i$ is not possible for any $i \in \{1,\dots,k\}$ 
(as in this case $q$ would be unmatched, as argued in the first paragraph of this proof), and hence by the fifth claim of Lemma~\ref{lem:pr-seq-prop}
the partner that $p$ receives in the matchings $M_0, M_1, \dots, M_k$ can only get better for~$p$, and thus $M(p)=M_0(p) \preceq'_p M_k(p)=M'(p)$.
\qed
\end{proof}

We can now piece together the proof of Theorem~\ref{thm:SR-RI-holds}.

\begin{proof}[of Theorem~\ref{thm:SR-RI-holds}]
From the description of \algoSR{} and Lemma~\ref{lem:algoSR-phaseI} it is immediate that 
any output the algorithm produces is correct. It remains to show that it does not fail to produce an output. 
By Lemma~\ref{lem:algoSR-Sreturns} we know that the sequence $S$ built by the algorithm cannot have a return and can only stop at $q$, 
implying that \algoSR{} will eventually produce an output.
Considering the fifth statement of Lemma~\ref{lem:pr-seq-prop}, we also know that the length of $S$ is at most~$2|E|$. 
Thus, the algorithm finishes in $O(|E|)$ time. 
\qed
\end{proof}

\subsection{A note on strongly stable matchings in \SR{}.}

Given an instance of \SR{} where preferences are not strict, strong stability is an alternative notion of stability based on the notion of weakly blocking pairs. Given a matching~$M$ in a housing market $H=(N,\{ \prec_a\}_{a \in N})$, an edge $\{a,b\}$ in the acceptability graph $G^H$ is \emph{weakly blocking}, if (i) $a$ is either unmatched or weakly prefers $b$ to~$M(a)$, and (ii) $b$ is either unmatched or weakly prefers~$a$ to $M(b)$, and (iii) if $a$ and $b$ are both matched in~$M$, then $a$ prefers $b$ to~$M(a)$, or $b$ prefers $a$ to~$M(b)$. 
If there is no weakly blocking pair for~$M$, then $M$ is \emph{strongly stable}.

Note that a strongly stable matching for~$H$ can be thought of as an allocation that (i) contains only cycles of length at most~2, and (ii) does not admit a \emph{weakly blocking} cycle of length at most~2. 
Recall that stable matchings correspond to the concept of core if we restrict allocations to pairwise exchanges;
analogously, strongly stable matchings correspond to the concept of strict core for pairwise exchanges.
Observe also that if agents' preferences are strict orders, then strong stability is equivalent with stability, or in other words, the strict core and the core coincide.
 
 \medskip
In view of Corollary~\ref{cor:SR-RI-best}, it is natural to ask whether the set of strongly stable matchings satisfies the RI-best property in the case when preferences may not be strict. The following statement answers this question in the negative. Interestingly, the result holds even in the \textsc{Stable Marriage} model, the special case of \SR{} where 
 the acceptability graph is bipartite.

\begin{proposition}
\label{prop:SSM-RI-best-fails}
    Strongly stable matchings in the \textsc{Stable Marriage} model do not satisfy the RI-best property, even if agents' preferences are weak orders.
\end{proposition}

\begin{figure}[t]
\begin{center}
\includegraphics[scale=1]{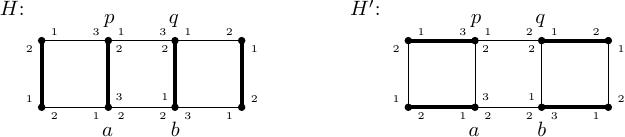}
\caption{The housing markets $H$ and $H'$ in the proof of Proposition~\ref{prop:SSM-RI-best-fails}.
For both $H$ and $H'$, the allocation represented by bold arcs yields the best possible strongly stable matchings. 
}
\label{fig:prop-SSM-RI-best}
\end{center}
\end{figure}

\begin{proof}
Consider the housing markets~$H$ and $H'$ depicted in Figure~\ref{fig:prop-SSM-RI-best}; note that 
$H'$ is a~$(p,q)$-improvement of~$H$. Note that the preferences in~$H$ are strict, but in~$H'$ agent~$q$ is indifferent between~$p$ and~$b$.

First observe that the matching $M$ shown in bold in the first part of Figure~\ref{fig:prop-SSM-RI-best} is stable in~$H$, so it is possible for~$p$ to be matched with its second choice, namely $a$, in a (strongly) stable matching in~$H$.
We claim that the best possible partner $p$ can obtain in any strongly stable matching in~$H'$ is  its third choice. To see this, first note that any matching containing $\{p,q\}$ is weakly blocked by~$\{q,b\}$ in~$H$, 
so~$p$ cannot be matched to its first choice, agent $q$, in any strongly stable matching in~$H'$. 
Second, note that any matching~$M'$ containing $\{p,a\}$ must match $q$ to its first choice (otherwise the pair~$\{p,q\}$ weakly blocks~$M'$) and hence
 $M'$ must match $b$ to its third choice (so as not to form a blocking pair with it); however, then $\{a,b\}$ is a blocking pair for~$M'$. 
Thus, $p$ cannot be matched in any strongly stable matching of~$H'$ to its second choice, agent $a$, either.

By contrast, it is easy to verify that the matching shown in bold in the second part of Figure~\ref{fig:prop-SSM-RI-best}, matching $p$ to its third choice, is
strongly stable in~$H'$.
This proves our proposition.
\qed
\end{proof}

\section{Summary and further research.}
\label{sec:future}

\begin{table}[th]
    \centering
    \begin{tabular}{ll@{\hspace{10pt}}cc@{\hspace{10pt}}cc}
& & \multicolumn{2}{c}{Housing market} & \multicolumn{2}{c}{\SR{}} \\[2pt]
& & Core & Strict core
& Core & Strict core \\
\hline \\[-6pt]
Forced edge/arc:
    &  strict pref. 
    & $\mathsf{NP}$-c  {\small (Thm.~\ref{thm:arc-in-core})}
    & $\mathsf{P}$  {\small \cite{roth-postlewaite}}
    & $\mathsf{P}$  {\small \cite{Fleiner-Irving-Manlove}}
    & $\mathsf{P}$ {\small \cite{Fleiner-Irving-Manlove}}
    \\    
&  weak pref. 
    & $\mathsf{NP}$-c  {\small (Thm.~\ref{thm:arc-in-core})}
    & $\mathsf{P}$ {\small (Thm.~\ref{thm:strictcore-restricted-inP})}
    & $\mathsf{NP}$-c {\small \cite{manlove-hard-variants}}
    & $\mathsf{P}$ {\small \cite{Kunysz16,Kunysz18}}
    \\    
&  partial order pref. 
    & $\mathsf{NP}$-c  {\small (Thm.~\ref{thm:arc-in-core})}
    & open
    & $\mathsf{NP}$-c {\small \cite{manlove-hard-variants}}
    & $\mathsf{NP}$-c {\small \cite{IMS-stacs2003}} 
    \\[8pt]
Forbidden edge/arc:
& strict pref. 
    & $\mathsf{NP}$-c  {\small (Thm.~\ref{thm:arc-in-core})}
    & $\mathsf{P}$  {\small \cite{roth-postlewaite}}
    & $\mathsf{P}$  {\small \cite{Fleiner-Irving-Manlove}}
    & $\mathsf{P}$ {\small \cite{Fleiner-Irving-Manlove}}
    \\    
&  weak pref. 
    & $\mathsf{NP}$-c  {\small (Thm.~\ref{thm:arc-in-core})}
    & $\mathsf{P}$ {\small (Thm.~\ref{thm:strictcore-restricted-inP})}
    & $\mathsf{NP}$-c {\small \cite{Cseh-Heeger-2020}}
    & $\mathsf{P}$ {\small \cite{Kunysz16,Kunysz18}}
    \\    
&  partial order pref. 
    & $\mathsf{NP}$-c  {\small (Thm.~\ref{thm:arc-in-core})}
    & open
    & $\mathsf{NP}$-c {\small \cite{Cseh-Heeger-2020}}
    & $\mathsf{NP}$-c {\small \cite{IMS-stacs2003}} 
    \\[8pt]
Agent trading:
    &  strict pref. 
    & $\mathsf{NP}$-c  {\small (Thm.~\ref{thm:arc-in-core})}
    & $\mathsf{P}$  {\small \cite{roth-postlewaite}}
    & $\mathsf{P}$ {\small \cite{GusfieldIrving-book}}
    & $\mathsf{P}$ {\small \cite{GusfieldIrving-book}}
    \\    
&  weak pref. 
    & $\mathsf{NP}$-c  {\small (Thm.~\ref{thm:arc-in-core})}
    & $\mathsf{P}$ {\small (Thm.~\ref{thm:strictcore-restricted-inP})}
    & $\mathsf{NP}$-c {\small \cite{manlove-hard-variants}}
    & $\mathsf{P}$ {\small \cite{Manlove-TR-99,Kunysz16}}
    \\    
&  partial order pref. 
    & $\mathsf{NP}$-c  {\small (Thm.~\ref{thm:arc-in-core})}
    & open
    & $\mathsf{NP}$-c {\small \cite{manlove-hard-variants}}
    & $\mathsf{NP}$-c {\small \cite{IMS-stacs2003}} 
    \\[6pt]    
    \end{tabular}
    \caption{Summary of known results on the problems of finding an allocation in the core or strict core of a housing market or a \SR{} instance that additionally (i) contains a given arc (or edge), (ii) avoids a given arc (or edge), or (iii) includes a trading cycle containing a given agent. The table classifies each of these problems either as polynomial-time solvable ($\mathsf{P}$) or $\mathsf{NP}$-complete ($\mathsf{NP}$-c for short), except for the cases whose computational complexity remains open. 
    Recall that in an instance of \SR{}, we interpret the core as the set of stable matchings, and the strict core as the set of strongly stable matchings; these two notions coincide for strict preferences.  
    }
    \label{tab:summary-algorithmic}
\end{table}

\begin{table}[th]
    \centering
    \begin{tabular}{ll@{\hspace{10pt}}cc@{\hspace{10pt}}cc}
& & \multicolumn{2}{c}{Housing market} & \multicolumn{2}{c}{\SR{}} \\[2pt]
& & Core & Strict core
& Core & Strict core \\
\hline \\[-6pt]
RI-best:
    &  strict pref. 
    & 
    \checkmark {\small (Cor.~\ref{cor:possible-houses})}
    & \checkmark  {\small \cite{BKKV-mor}}
    & \halfcheckmark 
    {\small (Cor.~\ref{cor:SR-RI-best}, Prop.~\ref{prop:SM-strong-RI-best-fails})}
    & \halfcheckmark 
    {\small (Cor.~\ref{cor:SR-RI-best}, Prop.~\ref{prop:SM-strong-RI-best-fails})}
    \\    
&  weak pref. 
    & \checkmark {\small (Cor.~\ref{cor:possible-houses})}
    & \halfcheckmark {\small \cite{BKKV-mor}}
    & \xmark {\small \cite{BKKV-mor}}
    & \xmark {\small (Prop.~\ref{prop:SSM-RI-best-fails})}
    \\    
&  partial order pref. 
    & \checkmark {\small (Cor.~\ref{cor:possible-houses})}
    & open
    & \xmark {\small \cite{BKKV-mor}}
    & \xmark 
    {\small (Prop.~\ref{prop:SSM-RI-best-fails})}
    \\[8pt]
RI-worst:
& strict pref. 
    & 
    \xmark {\small (Prop.~\ref{prop:core-worst-RI-fails})}
    & \checkmark {\small \cite{BKKV-mor}}
    & \xmark {\small \cite{BKKV-mor}}
    & \xmark {\small \cite{BKKV-mor}}
    \\
& weak pref. 
    & \xmark {\small (Prop.~\ref{prop:core-worst-RI-fails})}
    & \halfcheckmark \cite{BKKV-mor}
    & \xmark {\small \cite{BKKV-mor}}
    & \xmark {\small \cite{BKKV-mor}}
    \\
& partial order pref. 
    & \xmark {\small(Prop.~\ref{prop:core-worst-RI-fails})}
    & open
    & \xmark {\small \cite{BKKV-mor}}
    & \xmark {\small \cite{BKKV-mor}}
\\[6pt]
    \end{tabular}
    \caption[Summary of known results on the property of respecting improvement for the core and the strict core in housing markets and in the \SR{} model.]{Summary of known results on the property of respecting improvement for the core and the strict core in housing markets and in the \SR{} model.
    Symbol \checkmark{} signifies that the given solution concept strongly satisfies the given property (namely, RI-best or RI-worst), 
    while symbol~\halfcheckmark{} 
    means that the given property is satisfied, but not strongly satisfied. Symbol~\xmark{} means that the given property fails to hold.
    }
    \label{tab:summary-RI}
\end{table}

We have investigated questions about the notion of improvement in connection to the core of housing markets. 
Table~\ref{tab:summary-algorithmic} puts into context our algorithmic results on finding  allocations with various restrictions in the core of housing markets; the table presents analogous results for the strict core of housing markets, as well as for the core and strict core of \SR{} instances (when interpreted as the set of stable and strongly stable matchings, respectively).
Table~\ref{tab:summary-RI} summarizes our results regarding the property of respecting improvement both for housing markets and also for the \SR{} model; we also included the known facts about the strict core, mostly by Bir\'o et al.~\cite{BKKV-mor}. 
We remark that several questions remain open in the case when agents' preferences are partial orders; in fact, we are not aware of any result concerning the strict core of housing markets under such preferences.

Even though the property of respecting improvement is deeply connected to agents' incentives in exchange markets,
many solution concepts have not yet been studied from this aspect.
A solution concept that seems interesting from this point of view is the set of stable half-matchings (or equivalently, stable partitions) in instances of \SR{} without a stable matching. 
Although Figure~\ref{fig:example-SR-half} contains an example about stable half-matchings where 
improvement of an agents' house damages her situation, 
perhaps a more careful investigation may shed light on some interesting monotonicity properties.

\begin{figure}[th]
\begin{center}
\includegraphics[scale=1]{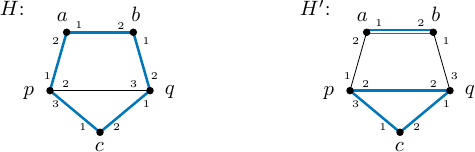}
\caption{An example where an agent's improvement has a detrimental effect on the agent's situation 
in a model where allocations are defined as half-matchings (see also \cite{Tan-1991}). 
Given a \SR{} instance with underlying graph~$(V,E)$, a \emph{half-matching} 
is a function $f:E \rightarrow \{0,\frac{1}{2},1\}$ that satisfies
$\sum_{e=\{u,v\} \in E} f(e) \leq 1$ for each agent~$v \in V$.
The figure contains housing market $H$ and its $(p,q)$-improvement~$H'$, and a unique stable half-matching for each market; 
see~\cite{Manlove2013} for the definition of stable half-matchings.
We depict half-matchings in blue, with double lines for matched edges and single bold lines for half-matched edges.
For~$H$, the half-matching~$f$ depicted leaves $p$ more satisfied than the half-matching~$f'$ depicted for~$H'$.
}
\label{fig:example-SR-half}
\end{center}
\end{figure}

%
%
%


\section*{Acknowledgments.}
Ildik\'o Schlotter is supported by the Hungarian Academy of Sciences under its Momentum Programme (LP2021-2) and its J\'anos Bolyai Research Scholarship.
The research reported in this paper and carried out by Tam\'as Fleiner at the Budapest University of Technology and Economics was supported by the “TKP2020, National Challenges Program” of the National Research Development and
Innovation Office (BME NC TKP2020 and OTKA K143858) and by the Higher Education Excellence Program of the Ministry of Human Capacities in the
frame of the Artificial Intelligence research area of the Budapest University of Technology and Economics (BME FIKP-MI/SC). P\'eter Bir\'o gratefully acknowledges
financial support from the Hungarian Scientific Research Fund, OTKA, Grant No.\ K143858, and the Hungarian Academy of Sciences, Momentum Grant No. LP2021-2.

{\tiny
\bibliographystyle{abbrv}
\bibliography{core}
}


\begin{subappendices}
\renewcommand{\thesection}{\Alph{section}}%

\section{Top Trading Cycles for partial order preferences.}
\label{sec:app-ttc}

Here we present an adaptation of the Top Trading Cycles algorithm for the case when agents' preferences are represented as partial orders; this algorithm always finds an allocation in the core of the given housing market in linear time.
We start by recalling how TTC works for strict preferences, propose a method to deal with partial orders, and finally discuss how 
the obtained algorithm can be implemented in linear time.

\paragraph{Strict preferences.}
If agents' preferences are represented by strict orders, then the TTC algorithm~\cite{shapley-scarf-1974}
produces the unique allocation in the strict core. 
TTC creates a directed graph~$D$ where each agent~$a$ points to her top choice, 
that is, to the agent owning the house most preferred by $a$.
In the graph~$D$ each agent has out-degree exactly~1, since preferences are assumed to be strict. 
Hence, $D$ contains at least one cycle, and moreover, the cycles in~$D$ do not intersect.
TTC selects all cycles in 
$D$ as part of the desired allocation, 
deletes from the market all agents trading along these cycles, and repeats the whole process 
until there are no agents left.  

\paragraph{Preferences as partial orders.}
When preferences are represented by partial orders, 
one can modify the TTC algorithm by letting each agent~$a$ in~$D$ point to her \emph{undominated} choices:
$b$ is undominated for~$a$, if there is no agent~$c$ such that $b \prec_a c$. 
Notice that 
an agent's out-degree is then \emph{at least}~1 in~$D$.
Thus, $D$ contains at least one cycle, but in case it contains more than one cycle, these may overlap. 

A simple approach is to select a set of mutually vertex-disjoint cycles in each round, 
removing the agents trading along them from the market and proceeding with the remainder in the same manner. 
It is not hard to see that this approach yields an algorithm that 
produces an allocation in the core: by the definition of undominated choices, 
any arc of a blocking cycle leaving an agent $a$ necessarily points to an agent that was already removed from the market 
at the time when a cycle containing $a$ got selected. Clearly, no cycle may consist of such ``backward'' arcs only, 
proving that the computed allocation is indeed in the core.

\paragraph{Implementation in linear time.}
Abraham et al.~\cite{ACMM-2004} describe an implementation of the TTC algorithm for strict preferences that runs in $O(|G^H|)$ time.
We extend their ideas to the case when preferences are partial orders as follows. 

For each agent $a \in N$ we assume that $a$'s preferences are given using a \emph{Hasse diagram}
which is a directed acyclic graph~$H_a$ that can be thought of as a compact representation of $\prec_a$.
The  vertex set of~$H_a$ is the set~$A(a)$ of agents whose house is acceptable for~$a$, and it contains an arc~$(b,c)$ if and only if 
$b \prec_a c$ and there is no agent~$c'$ with $b \prec_a c' \prec_a c$. 
Then the description of our housing market~$H$ has length~$\sum_{a \in A}|H_a|$ which we denote by~$|H|$. 
If preferences are weak or strict orders, then $|H|=O(|G^H|)$.

Throughout our variant of TTC, we will maintain a list $U(a)$ containing the undominated choices of $a$ 
among those that still remain in the market, as well as a subgraph~$D$ of $G^H$ spanned by all arcs~$(a,b)$ with $b \in U(a)$.
Furthermore, for each agent~$a$ in the market, we will keep a list of all occurrences of $a$ as someone's undominated choice.
Using $H_a$ we can find the undominated choices of~$a$ in $O(|H_a|)$ time, 
so initialization takes $O(|H|)$ time in total. 

Whenever an agent~$a$ is deleted from the market, we find all agents $b$ such that $a \in U(b)$, 
and we update $U(b)$ by deleting~$a$ and adding those in-neighbors of~$a$ in $H_b$ which have no out-neighbor still present in the market. 
Notice that the total time required for such deletions (and the necessary replacements) to maintain $U(b)$ is $O(|H_b|)$. 
Hence, we can efficiently find the undominated choices of each agent at any point during the algorithm, 
and thus traverse the graph~$D$ consisting of arcs~$(a,b)$ with $b \in U(a)$. 

To find a cycle in~$D$, we simply keep building a path using arcs of~$D$, until we find a cycle (perhaps a loop).
After recording this cycle and deleting its agents from the market (updating the lists $U(a)$ as described above), 
we simply proceed with the last agent on our path. 
Using the data structures described above the total running time of our variant of TTC is $O(|N|+\sum_{a \in N}|H_a|)=O(|H|)$.

\section{Arc restrictions for the strict core.}
\label{sec:app-strictcore}

In this section we investigate the variants of our questions~\Qi{}, \Qii{}, and~\Qiii{} for the strict core of housing markets. Recall that an allocation~$X$ in a housing market~$H=(N,\{\prec_a\}_{a \in N)}$ is in the \emph{strict core} of~$H$ if there is no coalition~$S$ of agents 
with an allocation~$X'$ on~$S$ such that 
(i) $X(a) \preceq_a X'(a)$ for each agent~$a \in S$, and 
(ii) $X(a) \prec_a X'(a)$ for at least one agent~$a \in S$.

Recall that if agents' preferences are strict, then the strict core contains a unique allocation, and this allocation can be efficiently computed by the TTC algorithm~\cite{roth-postlewaite}. Thus, it is trivial to decide whether an agent can obtain a given house in the unique allocation in the strict core.

When agents' preferences are weak orders, then the strict core can be empty~\cite{shapley-scarf-1974}. However, there is a polynomial-time algorithm by Quint and Wako~\cite{WakoQuint} that decides whether the strict core is empty when preferences are weak orders.
We generalize this result by giving a polynomial-time algorithm for the following problem:

\paragraph{The {\sc \arcs{}} problem in housing markets:}
given a housing market~$H$, a set~$F^+$ of \emph{forced} arcs and a set~$F^-$ of \emph{forbidden} arcs in the underlying graph~$G^H$, find an allocation in the strict core of~$H$ that contains~$F^+$ and is disjoint from~$F^-$.

\smallskip
Note that the \arcs{} problem is a generalization of the problems underlying questions~\Qi{}--\Qiii{} when interpreted in relation for the strict core: for a given agent~$a \in N$ and an arc~$(a,b)$ in~$G^H$, 
we can use an algorithm for the \arcs{} problem in order to 
decide whether 
\begin{itemize}
\item     
some allocation in the strict core contains~$(a,b)$, by setting $F^+=\{(a,b)\}$ and $F^-=\emptyset$;
\item some allocation in the strict core avoids~$(a,b)$, by setting $F^+=\emptyset$ and $F^- = \{(a,b)\}$;
\item agent~$a$ is trading in some allocation in the strict core, by setting $F^+=\emptyset$ and $F^-=\{(a,a)\}$.
\end{itemize} 

The following theorem is obtained through a straightforward modification of the algorithm by Quint and Wako for deciding the emptiness of the strict core.
\begin{theorem}
\label{thm:strictcore-restricted-inP}    
If agents' preferences are weak orders, then the \arcs{} problem can be solved in polynomial time.
\end{theorem}

\begin{proof}
Let $H=(N,\{\prec_a \}_{a \in N})$ be the housing market with underlying graph~$G^H=(N,E)$ given as our input, together with arc sets $F^+ \subseteq E$ and~$F^- \subseteq E$. 

We need the following concept from graph theory: we say that a set~$V$ of vertices in a directed graph is an \emph{absorbing set}, if (i) no arc leaves~$V$ and (ii) $V$ is strongly connected, meaning that for each vertices~$v_1,v_2 \in V$ there are paths from~$v_1$ to~$v_2$ and from~$v_2$ to~$v_1$ in the graph.
It is easy to see that two absorbing sets in a directed graph are either identical or vertex-disjoint. 
Recall that an arc~$(a,b) \in E$ is \emph{undominated}, if there exists no agent~$b'$ such that $b \prec_a b'$. Let $U(a)$ denote the set of all undominated arcs in~$G^H$ leaving some agent~$a$, and  let $U=\bigcup_{a \in N} U(a)$. Let $T$ denote the union of all absorbing sets in the subgraph~$(N,U)$ of undominated arcs within~$G^H$, and let~$G_T$ denote the subgraph of~$(N,U)$ induced by vertices of~$T$, i.e., $G_T=(T,\bigcup_{t \in T} U(t))$. Informally speaking, $G_T$ contains the ``top'' agents and their most-preferred choices. 

\medskip
Quint and Wako~\cite{WakoQuint} proved that if agents' preferences are weak orders, then
an allocation~$X$ in~$H$ is in the strict core of~$H$ if and only if
\begin{itemize}
    \item for each agent~$t \in T$, the  arc of~$X$ leaving~$t$ is undominated, and 
    \item  $X[N \setminus T]$ is an allocation in the strict core of $H_{N \setminus T}$.
\end{itemize} 

Using this characterization, it is straightforward to see that the following algorithm solves the \arcs{} problem:

\medskip
\begin{enumerate}
\setlength{\itemindent}{42pt}
\item[{\bf Step~1.\!}] Compute the set~$T$ and the subgraph~$G_T$.
\item[{\bf Step~2.\!}] 
If $F^+$ contains an arc running between~$T$ and~$N \setminus T$, then return `No'.
\item[{\bf Step~3.\!}] 
Find a set~$C \subseteq \bigcup_{t \in T}U(t)$ of arcs in~$G_T$ that is an allocation in the submarket~$H_T$, and (i) contains  all arcs of~$F^+[T]$ and (ii) is disjoint from~$F^-$. 
Return `No' if no such set~$C$ exists.
\item[{\bf Step~4.\!}] 
Use a recursive call to compute an allocation $X'$ in the strict core of the submarket~$H_{N \setminus T}$ that (i) contains all edges of~$F^+[N \setminus T]$, and (ii) is disjoint from~$F^-$.
Return `No' if no such allocation~$X'$ exists.
\item[{\bf Step~5.\!}] Return the allocation~$X=C \cup X'$.
\end{enumerate}

\medskip
From the above characterization of the strict core by Quint and Wako, the correctness of the above algorithm follows immediately. Hence, it remains to check its running time. 

Step~1 can be performed in linear time using standard algorithms on directed graphs, see e.g.,~\cite{KorteVygen-book}. Step~2 takes linear time as well. 
The task in Step~3 can be performed by computing a maximum-weight matching in the following bipartite graph~$\widehat{G}_T$: the vertex set of~$\widehat{G}_T$ is $\{t_1,t_2:t \in T\}$, and for each arc $(t,t')$ in $G_T$ with $(t,t') \notin F^-$ we add an edge $(t_1,t'_2)$ to~$\widehat{G}_T$; the weight of this edge is~$2$ if $(t,t') \in F^+[T]$, otherwise it is~$1$. Then 
allocations\footnote{By an allocation in~$G_T$, we mean an edge set in~$G_T$ that is an allocation in the submarket~$H_T$.} 
in~$G_T$  disjoint from~$F^-$
correspond bijectively to 
perfect matchings in~$\widehat{G}_T$, 
and moreover, 
allocations in~$G_T$ disjoint from~$F^-$ and containing all arcs of~$F^+[T]$
correspond bijectively to
(perfect) matchings in~$\widehat{G}_T$  of weight $|T|+|F^+[T]|$.
Since we can compute a maximum-weight matching in~$\widehat{G}_T$ in~$|T|^3$ time using the Hungarian algorithm (see e.g.,~\cite{KorteVygen-book}), 
the recursion in Step~4 yields an overall running time of~$|N|^3$.
\qed
\end{proof}

Note that Theorem~\ref{thm:strictcore-restricted-inP} is in sharp contrast with our results for the core: by Theorem~\ref{thm:arc-in-core}, 
given a housing market~$H$, 
it is $\mathsf{NP}$-hard to decide whether a given arc of~$G^H$ can be contained in some core allocation of~$H$,  even if agents' preferences are strict orders. 
Hence, in this aspect, we can perceive a computational gap between the core the strict core in housing markets with strictly or weakly ordered preferences.

We remark that we are not aware of any result that would settle the computational complexity of the \arcs{} problem in the case when agents' preferences are partial orders. 
In fact, even deciding the emptiness of the strict core seems to be a problem whose computational complexity is open.

\section{Maximizing the number of agents trading in a core allocation.}
\label{sec:app-max-core-approx}

Perhaps the most natural optimization problem related to the core of housing markets is the following: 
given a housing market $H$, find an allocation in the core of $H$ whose \emph{size}, defined as the number of trading agents, is maximal among all allocations in the core of $H$;
we call this the \textsc{Max Core} problem. 
\textsc{Max Core} is $\mathsf{NP}$-hard by a result of Cechl\'arov\'a and Repisk\'y~\cite{Cechlarova-Repisky-2011}.
In Theorem~\ref{thm:maxcore} below we show that even approximating \textsc{Max Core} is $\mathsf{NP}$-hard. 
Our result is tight in the following sense:
we prove that for any $\varepsilon>0$, approximating \textsc{Max Core} with a ratio of $|N|^{1-\varepsilon}$ 
is $\mathsf{NP}$-hard, where $|N|$ is the number of agents in the market. 
By contrast, a very simple approach yields an approximation with ratio $|N|$.

We note that Bir\'o and Cechl\'arov\'a~\cite{Biro-Cechlarova-2007} proved a similar inapproximability result, 
but since they considered a special model where agents not only care about the house they receive 
but also about the length of their exchange cycle,
their result cannot be translated to our model, and so does not imply Theorem~\ref{thm:maxcore}.
Instead, our reduction relies on ideas we use to prove Theorem~\ref{thm:arc-in-core}.

\begin{theorem}
\label{thm:maxcore}
For any constant $\varepsilon>0$, the \textsc{Max Core} problem is $\mathsf{NP}$-hard to approximate 
within a ratio of $\alpha_{\varepsilon}(N)=|N|^{1-\varepsilon}$ where $N$ is the set of agents, even if agents' preferences are strict orders.
\end{theorem}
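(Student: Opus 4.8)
The natural approach is to amplify the hardness reduction from Theorem~\ref{thm:arc-in-core} by a standard gap-introducing/product construction. Recall that in that reduction, starting from a \acycpart{} instance $D=(V,A)$, we built a market $H$ in which a core allocation either (i) contains the length-2 cycle $(a^\star,b^\star)$ together with the long "spine" cycle $X_\circ$ (which forces essentially all agents to trade) when $D$ admits an acyclic 2-partition, or (ii) must leave $a^\star$, $b^\star$, and all the $a_i,b_i$ untraded when no such partition exists. So a yes-instance has a core allocation with almost all agents trading, while a no-instance forces a linear number of agents to be non-trading. To turn this constant-factor gap into a polynomial one, I would take $t$ disjoint copies of the no-instance gadget and "chain" them, or more simply pad: create one module whose large cycle of size $\Theta(m)$ (for a parameter $m$ polynomial in $|V|$) can be activated in the core if and only if $D$ has an acyclic 2-partition, and such that in a no-instance no core allocation traces any cycle of length more than a constant (independent of $m$). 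Then a yes-instance has $\OPT \ge \Theta(m)$ while a no-instance has $\OPT = O(1)$, and choosing $m = |N|^{c}$ appropriately (so that $|N|$ is polynomial in $m$ and $|V|$) gives that any $\alpha_\varepsilon(N) = |N|^{1-\varepsilon}$-approximation distinguishes the two cases, hence is $\mathsf{NP}$-hard.

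Concretely, the key steps are: (1) Re-examine the market $H$ from Theorem~\ref{thm:arc-in-core} and observe that the "spine" of agents $a_0,b_0,a_1,b_1,\dots,a_n,b_n$ (plus $a^\star,b^\star$) either all trade together (yes-case) or are all forced to be non-trading (no-case); the remaining $c_i,d_i$ agents contribute only $O(n)$ to the size regardless. (2) Replace the single spine with a long chain built from $m$ "blocks," each block being a copy of the $a_i,b_i$-type construction, wired so that activating the whole chain requires resolving the acyclicity constraint coming from $D$ in every block simultaneously — so the chain can be fully traded in a core allocation iff $D$ has an acyclic 2-partition. (3) Verify that in a no-instance, any core allocation breaks every block, so no long cycle survives and $\OPT$ stays bounded by a constant (or at worst $O(n)$, which we then also dwarf by choosing $m$ large). (4) Set the number of agents: with $m$ blocks the market has $|N| = \Theta(m + n)$ agents (take $m \ge n^{2/\varepsilon}$, say), so in the yes-case $\OPT \ge \Omega(m) \ge |N|^{1-\varepsilon}$ while in the no-case $\OPT = O(n) = |N|^{o(1)}$; hence an $\alpha_\varepsilon(N)$-approximation would decide \acycpart. (5) Conclude $\mathsf{NP}$-hardness, noting the preferences used are strict orders throughout.

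The main obstacle I anticipate is step (2)–(3): designing the chained gadget so that (a) a single "global" combinatorial choice (the acyclic 2-partition of $D$) controls whether the entire length-$\Theta(m)$ cycle can be part of a core allocation, while (b) a failure of that choice genuinely blocks \emph{every} long cycle in \emph{every} core allocation, not just some of them. The subtlety is that the adversary picking the allocation may try to activate only part of the chain; I need the blocking-cycle structure (analogous to the $(a_i,b^\star)$ blocking cycles in the original proof) to force an all-or-nothing behavior, so that partial activation always creates a blocking cycle. Once the all-or-nothing property is locked down, the gap amplification and the arithmetic for $|N|^{1-\varepsilon}$ are routine. A cleaner alternative, if the direct chaining is awkward, is to reuse the existing $H$ almost verbatim but attach $m$ "reward" agents who can only trade (along short cycles) when $a^\star$ trades with $b^\star$ — i.e., their trade is contingent on the yes-instance configuration — which immediately yields $\OPT \ge m$ in the yes-case and $\OPT \le$ (the original constant) in the no-case; this avoids redesigning the core of the gadget and is likely the route taken in the paper.
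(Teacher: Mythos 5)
Your fallback idea (``attach reward agents whose trade is contingent on the yes-instance configuration'') is exactly the paper's approach, and the surrounding arithmetic is right; the primary plan of redesigning the gadget into $m$ chained blocks is unnecessary, and its key step (the all-or-nothing behaviour across blocks) is precisely the part you leave unresolved. The paper implements the contingency in the simplest possible way: it subdivides the single arc $(a^\star,b^\star)$ into a path through $K=\lceil (4n+4)^{1/\varepsilon}\rceil$ new agents $p_1,\dots,p_K$, each preferring only the house of the next one. Then the $p_i$ can trade only along the one long cycle that replaces the $2$-cycle $(a^\star,b^\star)$, so all $K$ of them trade iff the original market has a core allocation using $(a^\star,b^\star)$, i.e.\ iff $D$ is a yes-instance of \acycpart{}; otherwise $\OPT\leq 4n+4$. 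This sidesteps a real difficulty in your ``short reward cycles'' phrasing: a short cycle among new agents who prefer each other to their own houses would itself block any allocation omitting it, so such a cycle is always forced into every core allocation and cannot be made contingent on $a^\star$ trading. One further correction: your claim that in the no-case the spine agents $a_i,b_i,a^\star,b^\star$ are ``forced to be non-trading'' is not supported by the gadget (e.g.\ the $2$-cycle $(a_i,b^\star)$ can appear in a core allocation even when $(a^\star,b^\star)$ cannot), and it is also not needed --- the trivial bound $\OPT(H)\leq |N|=4n+4$ in the no-case suffices, which is all your step (4) actually uses.
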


\begin{proof}
Let $\varepsilon>0$ be a constant. 
Assume for the sake of contradiction that there exists an approximation algorithm~$\mathcal{A}_\varepsilon$
that given an instance~$H$ of \textsc{Max Core} with agent set $N$ computes in time polynomial in~$|N|$ an allocation 
in the core of~$H$ having size at least $\OPT(H)/ \alpha_{\varepsilon}(N)$, where $\OPT(H)$ is the maximum size of (i.e.,  number of agents trading in) any allocation in the core of $H$.
We can prove our statement by presenting a polynomial-time algorithm for the $\mathsf{NP}$-hard \acycpart{} problem using $\mathcal{A}_\varepsilon$. 

We are going to re-use the reduction presented in the proof of Theorem~\ref{thm:arc-in-core} from \acycpart{} to \ec.
Recall that the input of this reduction is a directed graph $D$ on $n$ vertices, 
and it constructs a housing market~$H$ containing a set $N$ of $4n+4$ agents and a pair~$(a^\star,b^\star)$ of agents
such that the vertices of $D$ can be partitioned into two acyclic sets if and only if 
some allocation in the core of $H$ contains the arc $(a^\star,b^\star)$.
Moreover, such an allocation (if existent) must have size $4n+4$, 
by our arguments in the proof of Theorem~\ref{thm:arc-in-core}.

Let us now define a housing market~$H'=(N',\{\prec'_a\}_{a \in N'})$ 
that can be obtained 
by subdividing the arc~$(a^\star,b^\star)$ with $K$ newly introduced agents $p_1,\dots, p_K$ where 
$$K=\left\lceil (4n+4)^{1/\varepsilon} \right\rceil,$$
that is, we replace the arc~$(a^\star,b^\star)$ with the path~$(a^\star,p_1,p_2,\dots, p_K, b^\star)$; see Figure~\ref{fig:subdivision} for an illustration.
Let $N'=N \cup \{p_1, \dots, p_K\}$.
Formally, we define preferences $\prec'_a$ for each agent $a \in N'$ as follows: 
first, $\prec'_a$ is identical to $\prec_a$ for each $a \in N \setminus \{a^\star\}$;
second, 
$a^\star$ only prefers the house of agent~$p_1$ to her own house; third,
each agent $p_i \in N' \setminus N$ prefers only the house of agent~$p_{i+1}$ to her own house (where we set $p_{K+1}=b^\star$).
Clearly, the allocations in the core of $H$ correspond to the allocations in the core of $H'$ in a bijective manner.
Hence, it is easy to see that if there is an allocation in the core of~$H$ that contains $(a^\star,b^\star)$ and where every agent of $N$ is trading, 
then there is an allocation in the core of~$H'$ where each agent of~$N'$ is trading. 
Conversely, if there is no allocation in the core of~$H$ that contains $(a^\star,b^\star)$, then 
the agents $p_1, \dots, p_K$ cannot be trading in any allocation in the core of $H'$. 
Thus, we have that if $D$ is a yes-instance of \acycpart{}, then $\OPT(H')=|N'|=4n+4+K$; 
otherwise $\OPT(H') \leq 4n+4$.

\begin{figure}[tbh]
\begin{center}
\includegraphics[scale=1]{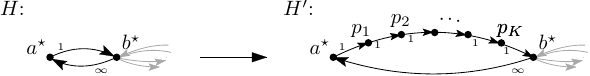}
\caption{Illustration for the proof of Theorem~\ref{thm:maxcore}, for constructing the housing market $H'$ from $H$ by subdividing the arc~$(a^\star,b^\star)$. The figure omits vertices of $N \setminus \{a^\star,b^\star\}$; arcs between $b^\star$ and $N \setminus \{a^\star,b^\star\}$ are shown in gray.
}
\label{fig:subdivision}
\end{center}
\end{figure}

Now, after constructing $H'$ we apply algorithm~$\mathcal{A}_\varepsilon$ with $H'$ as its input; let $X'$ be its output. 
If the size of $X'$ is greater than $4n+4$, then $X'$ must contain at least one vertex from $\{p_1,p_2, \dots,p_K\}$ by $|N|=4n+4$, which 
by the previous paragraph implies that 
there exists an allocation in the core of~$H$ that contains~$(a^\star,b^\star)$, and thus $D$ must be a yes-instance of \acycpart{}.
Otherwise, we conclude that $D$ is a no-instance of \acycpart{}. 
To show that this is correct, it suffices to see that if $D$ is a yes-instance, that is, if $\OPT(H')=|N'|$, then the size of $X'$ is greater than $4n+4$.
And indeed, the definition of~$K$ implies 
$$(4n+4)^{1/\varepsilon}< 4n+4+K = |N'| $$ which raised to the power of $\varepsilon$ yields
$$4n+4< |N'|^{\varepsilon}=\frac{|N'|}{|N'|^{1-\varepsilon}}=\frac{\OPT(H')}{\alpha_{\varepsilon}(N')}$$
as required.

It remains to observe that the above reduction can be computed in polynomial time, because $\varepsilon$ is a constant and so $K$ is a polynomial of $n$ 
of fixed degree.
\qed
\end{proof}

We contrast Theorem~\ref{thm:maxcore} with the observation that an algorithm that outputs \emph{any} allocation in the core yields 
an approximation for \textsc{Max Core} with ratio~$|N|$.

\begin{proposition}
\label{obs:maxcore-trivialapprox}
\textsc{Max Core} can be approximated with a ratio of $|N|$ in polynomial time, where $|N|$ is the number of agents in the input.
\end{proposition}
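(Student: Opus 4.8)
The plan rests on the trivial bound $\OPT(H)\le |N|$, since an allocation can involve at most all $|N|$ agents in trading. Hence it suffices to exhibit a polynomial-time algorithm that, on input~$H$, returns an allocation~$X$ in the core of~$H$ with $\mathrm{size}(X)\ge 1$ whenever $\OPT(H)\ge 1$ (and returns any core allocation otherwise, which is then optimal): such an~$X$ satisfies $\mathrm{size}(X)\ge 1\ge \OPT(H)/|N|$. When preferences are strict orders this is immediate. Indeed, suppose $\OPT(H)\ge 1$ and let $Y$ be a core allocation with a trading cycle~$C$; if the identity allocation were in the core, then $C$---viewed as a cycle in $G^H$---could not block it, so some agent~$z$ on~$C$ would weakly, and (since $C(z)\ne z$ and preferences are strict) therefore strictly, prefer her own house to $Y(z)=C(z)$, whence the loop at~$z$ blocks~$Y$, a contradiction. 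So the identity lies outside the core, every core allocation has positive size, and we may return an arbitrary one, produced by TTC in linear time as in Section~\ref{sec:ttc}.

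For arbitrary (partially ordered) preferences a little more care is needed, since the identity can belong to the core together with allocations of positive size---already in a two-agent market where each agent finds the two houses incomparable. Here I would run the variant of TTC from Section~\ref{sec:ttc} with a modified cycle-selection step: in each round, from the graph~$D$ whose arcs lead from every surviving agent to her undominated choices, select a cycle of length at least~$2$ if one exists (add it to the allocation and delete its agents), and only otherwise delete a single loop agent---choosing, among the agents carrying a loop in~$D$, one of largest in-degree in~$D$, i.e.\ one whose house is among the undominated choices of as many surviving agents as possible. Neither the linear running time nor the proof of core membership from Section~\ref{sec:ttc} is affected by this change.

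It remains to show that this variant returns a positive-size allocation whenever $\OPT(H)\ge 1$; equivalently, that if it returns the identity then $\OPT(H)=0$. If the identity is returned, then no round ever offered a cycle of length at least~$2$, so every agent~$a$ is removed exactly when her own house becomes an undominated choice among the survivors; in particular, $a\prec_a c$ implies that~$c$ was removed strictly before~$a$. Assuming towards a contradiction a core allocation~$Y$ with a trading cycle, consider the trading agent~$x$ of~$Y$ removed last and set $x'\leteq Y^{-1}(x)$: the removal order forces $x'\not\prec_{x'}x$, so $x'$ weakly prefers her own house to $Y(x')=x$, and if this preference were strict the loop at~$x'$ would block~$Y$. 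The remaining case---where the houses of~$x'$ and~$x$ are incomparable for~$x'$---is the main obstacle. I would resolve it by induction on~$|N|$, using that after the first removal the algorithm behaves exactly as on the induced submarket, so that the statement reduces to showing that removing the chosen (maximum-in-degree) loop agent~$x$ preserves the property $\OPT(\cdot)\ge 1$. Proving this last implication is where the real work lies: $x$ may itself be trading in~$Y$, so naively removing it could destroy~$Y$'s only trading cycle, and one must argue that the surviving market still admits a core allocation with a trade---exploiting that $x$'s house, being the most demanded among the ``satisfied'' agents, can be reassigned without creating a blocking cycle.
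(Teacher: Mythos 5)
Your first paragraph is, in essence, the paper's entire proof: since $\OPT(H)\le|N|$, it suffices to output a core allocation with at least one trading agent whenever $\OPT(H)\ge 1$, and the paper argues that \emph{any} core allocation (computed by the TTC variant of Section~\ref{sec:ttc}) does the job, because any cycle of length at least two in $G^H$ blocks the identity allocation, so once $\OPT(H)\ge 1$ the identity is not in the core and every core allocation has positive size. Your derivation of this for strict preferences is a slightly roundabout version of the same fact (more directly: every non-loop arc $(a,b)$ of $G^H$ satisfies $a\preceq_a b$, hence $a\prec_a b$ under strictness, so the cycle itself blocks the identity). Up to this point you and the paper agree.

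The divergence is in the second half. You are right that this justification uses strictness: with partial (or weak) orders an acceptability arc only guarantees $a\preceq_a b$, so a cycle in $G^H$ need not block the identity, and your two-agent market with mutual incomparability is a correct example in which the identity and the full swap both lie in the core, so that ``return any core allocation'' (indeed, a legitimate run of the TTC variant) outputs size $0$ against $\OPT=2$. This is a genuine issue with the paper's one-line argument in the general preference model, not a defect of your reading. However, your proposed repair does not close the gap you found: the claim that your modified TTC (preferring non-loop cycles, and among loop agents deleting one of maximum in-degree) returns a positive-size allocation whenever $\OPT(H)\ge1$ is precisely the content of the proposition in the hard case, and you explicitly leave its key inductive step --- that deleting the chosen loop agent preserves $\OPT(\cdot)\ge 1$ --- unproven. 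As it stands, your proposal establishes the proposition only for strict preferences, where it coincides with the paper's argument; for partially ordered preferences it correctly diagnoses the difficulty but does not resolve it.
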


\begin{proof}
An approximation algorithm for \textsc{Max Core} has ratio~$|N|$, if for any housing market~$H$ with agent set~$N$ 
it outputs an allocation with at least $\OPT(H)/|N|$ agents trading, where $\OPT(H)$ is the maximum number of trading agents in a core allocation of~$H$. 
Thus, it suffices to decide whether $\OPT(H) \geq 1$, and if so produce an allocation in which at least one agent is trading.
Observe that $\OPT(H)=0$ is only possible if $G^H$ is acyclic, as any cycle in $G^H$ blocks the allocation
where each agent gets her own house. Hence, computing \emph{any} allocation in the core of~$H$ is an $|N|$-approximation for \textsc{Max Core};
this can be done in linear time using the variant of the TTC algorithm described in Appendix~\ref{sec:app-ttc}.
\qed
\end{proof}

\end{subappendices}

\end{document}